\theoremstyle{definition}
\newtheorem{theorem}{Theorem}[section]
\newtheorem{definition}[theorem]{Definition}
\newtheorem{lemma}[theorem]{Lemma}
\newtheorem{corollary}[theorem]{Corollary}
\newtheorem{conjecture}[theorem]{Conjecture}
\newcommand{\abs}[1]{\ensuremath{\lvert#1\rvert}}
\newcommand{\overbar}[1]{\mkern 1.5mu\overline{\mkern-1.5mu#1\mkern-1.5mu}\mkern 1.5mu}
\newcommand{\restr}[2]{{#1\!\!\mid_{\scaleto{#2}{6pt}}}}
\DeclareMathAlphabet{\mathcal}{OMS}{cmsy}{m}{n}
\newcommand{\probinstance}[1]{\gdef\@probinstance{#1}}%
\newcommand{\probquestion}[1]{\gdef\@probquestion{#1}}
  \par\addvspace{.25\baselineskip}
  \par\addvspace{.25\baselineskip}
\title{Parallel Hyperedge Replacement String Languages}
\author{Graham Campbell\thanks{Supported by a Doctoral Training Grant from the Engineering and Physical Sciences Research Council (EPSRC) Grant No. (2281162) in the UK.}
\institute{School of Mathematics, Statistics and Physics, Newcastle University\\Newcastle upon Tyne, United Kingdom}
\email{g.j.campbell2@newcastle.ac.uk}
}
\begin{document}
\maketitle

\begin{abstract}
There are many open questions surrounding the characterisation of groups with context-sensitive word problem. Only in 2018 was it shown that all finitely generated virtually Abelian groups have multiple context-free word problems, and it is a long-standing open question as to where to place the word problems of hyperbolic groups in the formal language hierarchy. In this paper, we introduce a new language class called the parallel hyperedge replacement string languages, show that it contains all multiple context-free and ET0L languages, and lay down the foundations for future work that may be able to place the word problems of many hyperbolic groups in this class.
\end{abstract}

\section{Introduction}

In general, the word problem is the question that asks if two strings (words) represent the same element in some structure. In the case of groups, this is the equivalent to asking if a given string represents the identity element, since if \(u\), \(v\) are strings, then they are equal in a group if and only if \(u v^{-1}\) represents the identity in the group. Thus, given a presentation \(\langle X \mid R \rangle\) for a group \(G\), the word problem is equivalent to the membership problem for the string language \(\mathrm{WP}_X(G) = \{ w \in (X \cup X^{-1})^* \mid w =_G 1_G\}\). Viewing things geometrically, the word problem of a group can be identified with the set of loops based at the identity in the Cayley graph. A partial sketch of Cayley graphs of $\mathbb{Z}^2$ and $F_2$ is provided in Figure \ref{fig:cayley}.

\vspace{0.2em}
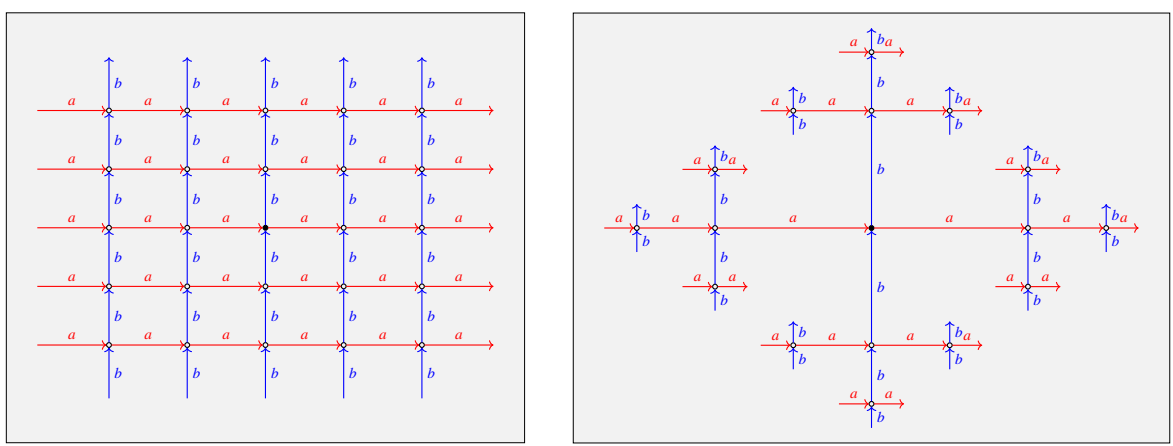
\begin{figure}[!ht]
\begin{subfigure}{.464\textwidth}
\centering
\scalebox{0.52}{
\begin{tikzpicture}[every node/.style={align=center}]

    \draw (-6.625,-5.5) -- (-6.625,5.5) -- (6.625,5.5) -- (6.625,-5.5) -- cycle [fill=black!5];

    \node (UUULLL)    at (-6,4.5)                                                    {\,};
    \node (UUULL)     at (-4,4.5)                                                    {\,};
    \node (UUUL)      at (-2,4.5)                                                    {\,};
    \node (UUUC)      at (0,4.5)                                                     {\,};
    \node (UUUR)      at (2,4.5)                                                     {\,};
    \node (UUURR)     at (4,4.5)                                                     {\,};
    \node (UUURRR)    at (6,4.5)                                                     {\,};

    \node (UULLL)     at (-6,3.0)                                                    {\,};
    \node (UULL)      at (-4,3.0)     [draw, circle, thick, fill=black!5, scale=0.3] {\,};
    \node (UUL)       at (-2,3.0)     [draw, circle, thick, fill=black!5, scale=0.3] {\,};
    \node (UUC)       at (0,3.0)      [draw, circle, thick, fill=black!5, scale=0.3] {\,};
    \node (UUR)       at (2,3.0)      [draw, circle, thick, fill=black!5, scale=0.3] {\,};
    \node (UURR)      at (4,3.0)      [draw, circle, thick, fill=black!5, scale=0.3] {\,};
    \node (UURRR)     at (6,3.0)                                                     {\,};

    \node (ULLL)      at (-6,1.5)                                                    {\,};
    \node (ULL)       at (-4,1.5)     [draw, circle, thick, fill=black!5, scale=0.3] {\,};
    \node (UL)        at (-2,1.5)     [draw, circle, thick, fill=black!5, scale=0.3] {\,};
    \node (UC)        at (0,1.5)      [draw, circle, thick, fill=black!5, scale=0.3] {\,};
    \node (UR)        at (2,1.5)      [draw, circle, thick, fill=black!5, scale=0.3] {\,};
    \node (URR)       at (4,1.5)      [draw, circle, thick, fill=black!5, scale=0.3] {\,};
    \node (URRR)      at (6,1.5)                                                     {\,};

    \node (LLL)       at (-6,0.0)                                                    {\,};
    \node (LL)        at (-4,0.0)     [draw, circle, thick, fill=black!5, scale=0.3] {\,};
    \node (L)         at (-2,0.0)     [draw, circle, thick, fill=black!5, scale=0.3] {\,};
    \node (C)         at (0,0.0)      [draw, circle, thick, fill=black, scale=0.3]   {\,};
    \node (R)         at (2,0.0)      [draw, circle, thick, fill=black!5, scale=0.3] {\,};
    \node (RR)        at (4,0.0)      [draw, circle, thick, fill=black!5, scale=0.3] {\,};
    \node (RRR)       at (6,0.0)                                                     {\,};

    \node (DLLL)      at (-6,-1.5)                                                   {\,};
    \node (DLL)       at (-4,-1.5)    [draw, circle, thick, fill=black!5, scale=0.3] {\,};
    \node (DL)        at (-2,-1.5)    [draw, circle, thick, fill=black!5, scale=0.3] {\,};
    \node (DC)        at (0,-1.5)     [draw, circle, thick, fill=black!5, scale=0.3] {\,};
    \node (DR)        at (2,-1.5)     [draw, circle, thick, fill=black!5, scale=0.3] {\,};
    \node (DRR)       at (4,-1.5)     [draw, circle, thick, fill=black!5, scale=0.3] {\,};
    \node (DRRR)      at (6,-1.5)                                                    {\,};

    \node (DDLLL)     at (-6,-3.0)                                                   {\,};
    \node (DDLL)      at (-4,-3.0)    [draw, circle, thick, fill=black!5, scale=0.3] {\,};
    \node (DDL)       at (-2,-3.0)    [draw, circle, thick, fill=black!5, scale=0.3] {\,};
    \node (DDC)       at (0,-3.0)     [draw, circle, thick, fill=black!5, scale=0.3] {\,};
    \node (DDR)       at (2,-3.0)     [draw, circle, thick, fill=black!5, scale=0.3] {\,};
    \node (DDRR)      at (4,-3.0)     [draw, circle, thick, fill=black!5, scale=0.3] {\,};
    \node (DDRRR)     at (6,-3.0)                                                    {\,};

    \node (DDDLLL)    at (-6,-4.5)                                                   {\,};
    \node (DDDLL)     at (-4,-4.5)                                                   {\,};
    \node (DDDL)      at (-2,-4.5)                                                   {\,};
    \node (DDDC)      at (0,-4.5)                                                    {\,};
    \node (DDDR)      at (2,-4.5)                                                    {\,};
    \node (DDDRR)     at (4,-4.5)                                                    {\,};
    \node (DDDRRR)    at (6,-4.5)                                                    {\,};

    \draw (UULLL)     edge[->,thick,color=red]  node[above] {$a$}  (UULL)
          (UULL)      edge[->,thick,color=red]  node[above] {$a$}  (UUL)
          (UUL)       edge[->,thick,color=red]  node[above] {$a$}  (UUC)
          (UUC)       edge[->,thick,color=red]  node[above] {$a$}  (UUR)
          (UUR)       edge[->,thick,color=red]  node[above] {$a$}  (UURR)
          (UURR)      edge[->,thick,color=red]  node[above] {$a$}  (UURRR);

    \draw (ULLL)      edge[->,thick,color=red]  node[above] {$a$}  (ULL)
          (ULL)       edge[->,thick,color=red]  node[above] {$a$}  (UL)
          (UL)        edge[->,thick,color=red]  node[above] {$a$}  (UC)
          (UC)        edge[->,thick,color=red]  node[above] {$a$}  (UR)
          (UR)        edge[->,thick,color=red]  node[above] {$a$}  (URR)
          (URR)       edge[->,thick,color=red]  node[above] {$a$}  (URRR);

    \draw (LLL)       edge[->,thick,color=red]  node[above] {$a$}  (LL)
          (LL)        edge[->,thick,color=red]  node[above] {$a$}  (L)
          (L)         edge[->,thick,color=red]  node[above] {$a$}  (C)
          (C)         edge[->,thick,color=red]  node[above] {$a$}  (R)
          (R)         edge[->,thick,color=red]  node[above] {$a$}  (RR)
          (RR)        edge[->,thick,color=red]  node[above] {$a$}  (RRR);

    \draw (DLLL)      edge[->,thick,color=red]  node[above] {$a$}  (DLL)
          (DLL)       edge[->,thick,color=red]  node[above] {$a$}  (DL)
          (DL)        edge[->,thick,color=red]  node[above] {$a$}  (DC)
          (DC)        edge[->,thick,color=red]  node[above] {$a$}  (DR)
          (DR)        edge[->,thick,color=red]  node[above] {$a$}  (DRR)
          (DRR)       edge[->,thick,color=red]  node[above] {$a$}  (DRRR);

    \draw (DDLLL)     edge[->,thick,color=red]  node[above] {$a$}  (DDLL)
          (DDLL)      edge[->,thick,color=red]  node[above] {$a$}  (DDL)
          (DDL)       edge[->,thick,color=red]  node[above] {$a$}  (DDC)
          (DDC)       edge[->,thick,color=red]  node[above] {$a$}  (DDR)
          (DDR)       edge[->,thick,color=red]  node[above] {$a$}  (DDRR)
          (DDRR)      edge[->,thick,color=red]  node[above] {$a$}  (DDRRR);

    \draw (DDDLL)     edge[->,thick,color=blue] node[right] {$b$}  (DDLL)
          (DDLL)      edge[->,thick,color=blue] node[right] {$b$}  (DLL)
          (DLL)       edge[->,thick,color=blue] node[right] {$b$}  (LL)
          (LL)        edge[->,thick,color=blue] node[right] {$b$}  (ULL)
          (ULL)       edge[->,thick,color=blue] node[right] {$b$}  (UULL)
          (UULL)      edge[->,thick,color=blue] node[right] {$b$}  (UUULL);

    \draw (DDDL)      edge[->,thick,color=blue] node[right] {$b$}  (DDL)
          (DDL)       edge[->,thick,color=blue] node[right] {$b$}  (DL)
          (DL)        edge[->,thick,color=blue] node[right] {$b$}  (L)
          (L)         edge[->,thick,color=blue] node[right] {$b$}  (UL)
          (UL)        edge[->,thick,color=blue] node[right] {$b$}  (UUL)
          (UUL)       edge[->,thick,color=blue] node[right] {$b$}  (UUUL);

    \draw (DDDC)      edge[->,thick,color=blue] node[right] {$b$}  (DDC)
          (DDC)       edge[->,thick,color=blue] node[right] {$b$}  (DC)
          (DC)        edge[->,thick,color=blue] node[right] {$b$}  (C)
          (C)         edge[->,thick,color=blue] node[right] {$b$}  (UC)
          (UC)        edge[->,thick,color=blue] node[right] {$b$}  (UUC)
          (UUC)       edge[->,thick,color=blue] node[right] {$b$}  (UUUC);

    \draw (DDDR)      edge[->,thick,color=blue] node[right] {$b$}  (DDR)
          (DDR)       edge[->,thick,color=blue] node[right] {$b$}  (DR)
          (DR)        edge[->,thick,color=blue] node[right] {$b$}  (R)
          (R)         edge[->,thick,color=blue] node[right] {$b$}  (UR)
          (UR)        edge[->,thick,color=blue] node[right] {$b$}  (UUR)
          (UUR)       edge[->,thick,color=blue] node[right] {$b$}  (UUUR);

    \draw (DDDRR)     edge[->,thick,color=blue] node[right] {$b$}  (DDRR)
          (DDRR)      edge[->,thick,color=blue] node[right] {$b$}  (DRR)
          (DRR)       edge[->,thick,color=blue] node[right] {$b$}  (RR)
          (RR)        edge[->,thick,color=blue] node[right] {$b$}  (URR)
          (URR)       edge[->,thick,color=blue] node[right] {$b$}  (UURR)
          (UURR)      edge[->,thick,color=blue] node[right] {$b$}  (UUURR);

\end{tikzpicture}
}
\caption{Cayley graph of $\mathbb{Z}^2$}
\end{subfigure}
\begin{subfigure}{.534\textwidth}
\centering
\scalebox{0.52}{
\begin{tikzpicture}[every node/.style={align=center}]

    \draw (-7.625,-5.5) -- (-7.625,5.5) -- (7.625,5.5) -- (7.625,-5.5) -- cycle [fill=black!5];

    \node (C)         at (0,0)        [draw, circle, thick, fill=black, scale=0.3]   {\,};
    \node (R)         at (4,0)        [draw, circle, thick, fill=black!5, scale=0.3] {\,};
    \node (U)         at (0,3)        [draw, circle, thick, fill=black!5, scale=0.3] {\,};
    \node (L)         at (-4,0)       [draw, circle, thick, fill=black!5, scale=0.3] {\,};
    \node (D)         at (0,-3)       [draw, circle, thick, fill=black!5, scale=0.3] {\,};
    \draw (C)         edge[->,thick,color=red]  node[above] {$a$} (R)
          (C)         edge[->,thick,color=blue] node[right] {$b$} (U)
          (L)         edge[->,thick,color=red]  node[above] {$a$} (C)
          (D)         edge[->,thick,color=blue] node[right] {$b$} (C);

    \node (RR)        at (6,0)        [draw, circle, thick, fill=black!5, scale=0.3] {\,};
    \node (RU)        at (4,1.5)      [draw, circle, thick, fill=black!5, scale=0.3] {\,};
    \node (RD)        at (4,-1.5)     [draw, circle, thick, fill=black!5, scale=0.3] {\,};
    \draw (R)         edge[->,thick,color=red]  node[above] {$a$} (RR)
          (R)         edge[->,thick,color=blue] node[right] {$b$} (RU)
          (RD)        edge[->,thick,color=blue] node[right] {$b$} (R);

    \node (RRR)       at (7,0)        {\,};
    \node (RRU)       at (6,0.75)     {\,};
    \node (RRD)       at (6,-0.75)    {\,};
    \draw (RR)        edge[->,thick,color=red]  node[above] {$a$} (RRR)
          (RR)        edge[->,thick,color=blue] node[right] {$b$} (RRU)
          (RRD)       edge[->,thick,color=blue] node[right] {$b$} (RR);

    \node (RUR)       at (5,1.5)      {\,};
    \node (RUU)       at (4,2.25)     {\,};
    \node (RUL)       at (3,1.5)      {\,};
    \draw (RU)        edge[->,thick,color=red]  node[above] {$a$} (RUR)
          (RU)        edge[->,thick,color=blue] node[right] {$b$} (RUU)
          (RUL)       edge[->,thick,color=red]  node[above] {$a$} (RU);

    \node (RDR)       at (5,-1.5)     {\,};
    \node (RDL)       at (3,-1.5)     {\,};
    \node (RDD)       at (4,-2.25)    {\,};
    \draw (RD)        edge[->,thick,color=red]  node[above] {$a$} (RDR)
          (RDD)       edge[->,thick,color=blue] node[right] {$b$} (RD)
          (RDL)       edge[->,thick,color=red]  node[above] {$a$} (RD);

    \node (RR)        at (-6,0)       [draw, circle, thick, fill=black!5, scale=0.3] {\,};
    \node (RU)        at (-4,1.5)     [draw, circle, thick, fill=black!5, scale=0.3] {\,};
    \node (RD)        at (-4,-1.5)    [draw, circle, thick, fill=black!5, scale=0.3] {\,};
    \draw (L)         edge[<-,thick,color=red]  node[above] {$a$} (RR)
          (L)         edge[->,thick,color=blue] node[right] {$b$} (RU)
          (RD)        edge[->,thick,color=blue] node[right] {$b$} (L);

    \node (RRR)       at (-7,0)       {\,};
    \node (RRU)       at (-6,0.75)    {\,};
    \node (RRD)       at (-6,-0.75)   {\,};
    \draw (RR)        edge[<-,thick,color=red]  node[above] {$a$} (RRR)
          (RR)        edge[->,thick,color=blue] node[right] {$b$} (RRU)
          (RRD)       edge[->,thick,color=blue] node[right] {$b$} (RR);

    \node (RUR)       at (-5,1.5)     {\,};
    \node (RUU)       at (-4,2.25)    {\,};
    \node (RUL)       at (-3,1.5)     {\,};
    \draw (RU)        edge[<-,thick,color=red]  node[above] {$a$} (RUR)
          (RU)        edge[->,thick,color=blue] node[right] {$b$} (RUU)
          (RUL)       edge[<-,thick,color=red]  node[above] {$a$} (RU);

    \node (RDR)       at (-5,-1.5)    {\,};
    \node (RDL)       at (-3,-1.5)    {\,};
    \node (RDD)       at (-4,-2.25)   {\,};
    \draw (RD)        edge[<-,thick,color=red]  node[above] {$a$} (RDR)
          (RDD)       edge[->,thick,color=blue] node[right] {$b$} (RD)
          (RDL)       edge[<-,thick,color=red]  node[above] {$a$} (RD);

    \node (UR)        at (2,3)        [draw, circle, thick, fill=black!5, scale=0.3] {\,};
    \node (UU)        at (0,4.5)      [draw, circle, thick, fill=black!5, scale=0.3] {\,};
    \node (UL)        at (-2,3)       [draw, circle, thick, fill=black!5, scale=0.3] {\,};
    \draw (U)         edge[->,thick,color=red]  node[above] {$a$} (UR)
          (U)         edge[->,thick,color=blue] node[right] {$b$} (UU)
          (UL)        edge[->,thick,color=red]  node[above] {$a$} (U);

    \node (URR)        at (3,3)       {\,};
    \node (URU)        at (2,3.75)    {\,};
    \node (URD)        at (2,2.25)    {\,};
    \draw (UR)         edge[->,thick,color=red]  node[above] {$a$} (URR)
          (UR)         edge[->,thick,color=blue] node[right] {$b$} (URU)
          (URD)        edge[->,thick,color=blue] node[right] {$b$} (UR);

    \node (UUR)        at (1,4.5)     {\,};
    \node (UUU)        at (0,5.25)    {\,};
    \node (UUL)        at (-1,4.5)    {\,};
    \draw (UU)         edge[->,thick,color=red]  node[above] {$a$} (UUR)
          (UU)         edge[->,thick,color=blue] node[right] {$b$} (UUU)
          (UUL)        edge[->,thick,color=red]  node[above] {$a$} (UU);

    \node (ULU)        at (-2,3.75)   {\,};
    \node (ULL)        at (-3,3)      {\,};
    \node (ULD)        at (-2,2.25)   {\,};
    \draw (UL)         edge[->,thick,color=blue] node[right] {$b$} (ULU)
          (ULL)        edge[->,thick,color=red]  node[above] {$a$} (UL)
          (ULD)        edge[->,thick,color=blue] node[right] {$b$} (UL);

    \node (UR)        at (2,-3)       [draw, circle, thick, fill=black!5, scale=0.3] {\,};
    \node (UU)        at (0,-4.5)     [draw, circle, thick, fill=black!5, scale=0.3] {\,};
    \node (UL)        at (-2,-3)      [draw, circle, thick, fill=black!5, scale=0.3] {\,};
    \draw (D)         edge[->,thick,color=red]  node[above] {$a$} (UR)
          (D)         edge[<-,thick,color=blue] node[right] {$b$} (UU)
          (UL)        edge[->,thick,color=red]  node[above] {$a$} (D);

    \node (URR)        at (3,-3)      {\,};
    \node (URU)        at (2,-3.75)   {\,};
    \node (URD)        at (2,-2.25)   {\,};
    \draw (UR)         edge[->,thick,color=red]  node[above] {$a$} (URR)
          (UR)         edge[<-,thick,color=blue] node[right] {$b$} (URU)
          (URD)        edge[<-,thick,color=blue] node[right] {$b$} (UR);

    \node (UUR)        at (1,-4.5)    {\,};
    \node (UUU)        at (0,-5.25)   {\,};
    \node (UUL)        at (-1,-4.5)   {\,};
    \draw (UU)         edge[->,thick,color=red]  node[above] {$a$} (UUR)
          (UU)         edge[<-,thick,color=blue] node[right] {$b$} (UUU)
          (UUL)        edge[->,thick,color=red]  node[above] {$a$} (UU);

    \node (ULU)        at (-2,-3.75)  {\,};
    \node (ULL)        at (-3,-3)     {\,};
    \node (ULD)        at (-2,-2.25)  {\,};
    \draw (UL)         edge[<-,thick,color=blue] node[right] {$b$} (ULU)
          (ULL)        edge[->,thick,color=red]  node[above] {$a$} (UL)
          (ULD)        edge[<-,thick,color=blue] node[right] {$b$} (UL);

\end{tikzpicture}
}
\caption{Cayley graph of $F_2$}
\end{subfigure}
\caption{Example Cayley graphs}
\label{fig:cayley}
\end{figure}

A natural question to ask is how hard the word problem is, in general, and for specific families of groups. Unsurprisingly, both the universal word problem and the word problem are undecidable in general, even for finite presentations \cite{Novikov55a}. It is well known that a presentation defines a finite group if and only if it admits a regular word problem \cite{Anisimov71a}, and defines a finitely generated virtually free group if and only if it admits a deterministic context-free word problem if and only if it admits a context-free word problem \cite{Muller-Schupp83a}. The multiple context-free (MCF) languages sit strictly in between the context-free and context-sensitive languages \cite{Seki-Matsumura-Fujii-Kasami91a}. In 2015, a major breakthrough of Salvati was published, showing that the word problem of \(\mathbb{Z}^2\) is an MCF language \cite{Salvati15a}, and in 2018, Ho extended this result to all finitely generated virtually Abelian groups \cite{Ho18a}. This is interesting since the MCF languages are exactly the string languages generated by hyperedge replacement grammars \cite{Engelfriet-Heyker91a,Weir92a}. It remains an open problem as to which other families of groups admit MCF word problems, however, we do at least know that the fundamental group of a hyperbolic three-manifold does not admit an MCF word problem \cite{Gilman-Kropholler-Schleimer18a}.

There are of course, lots of other well-behaved language classes sitting in between the context-free and context-sensitive classes, such as the indexed languages \cite{Aho68a} or the subclass of ET0L languages \cite{Rozenberg-Salomaa80a}. It is not known if there are any groups with indexed word problems, other than the virtually free groups, but it is known that a particular subclass of the indexed languages, not contained in ET0L, only contains word problems of virtually free groups \cite{Gilman-Shapiro98a}. We also do not know if any hyperbolic groups have ET0L word problems \cite{Ciobanu-Elder-Ferov18a} (other than the virtually free groups), such as the fundamental group of the double torus. It is conjectured that every ET0L group language is admitted by a virtually free group \cite{Ciobanu-Elder-Ferov18a}. Figure \ref{fig:knownlh} shows the (group) language hierarchy, where necessarily strict inclusion uses a solid line, and \(\mathcal{G}\mathcal{P}\) denotes the class of all group languages (the class of word problems of all finitely generated groups).

\vspace{0.1em}
\begin{figure}[!ht]
\begin{subfigure}{.499\textwidth}
\centering
\scalebox{0.9}{
\begin{tikzpicture}
  \node (b) at (0,1.8) {$\mathcal{C}\mathcal{S}$};
  \node (d) at (2,0.9) {$\mathcal{I}\mathcal{N}\mathcal{D}\mathcal{E}\mathcal{X}$};
  \node (e) at (-2,0.45) {$\mathcal{M}\mathcal{C}\mathcal{F}$};
  \node (f) at (2,0) {$\mathcal{E}\mathcal{T}\mathcal{O}\mathcal{L}$};
  \node (g) at (0,-0.9) {$\mathcal{C}\mathcal{F}$};
  \node (h) at (0,-1.8) {$\mathcal{D}\mathcal{C}\mathcal{F}$};
  \node (i) at (0,-2.7) {$\mathcal{R}\mathcal{E}\mathcal{G}$};
  \draw (b) -- (e) -- (g);
  \draw (b) -- (d) -- (f) -- (g);
  \draw (g) -- (h) -- (i);
\end{tikzpicture}
}
\caption{String language hierarchy}
\end{subfigure}
\begin{subfigure}{.499\textwidth}
\centering
\scalebox{0.9}{
\begin{tikzpicture}
  \node (b) at (0,1.8) {$\mathcal{C}\mathcal{S} \cap \mathcal{G}\mathcal{P}$};
  \node (d) at (2,0.9) {$\mathcal{I}\mathcal{N}\mathcal{D}\mathcal{E}\mathcal{X} \cap \mathcal{G}\mathcal{P}$};
  \node (e) at (-2,0.45) {$\mathcal{M}\mathcal{C}\mathcal{F} \cap \mathcal{G}\mathcal{P}$};
  \node (f) at (2,0) {$\mathcal{E}\mathcal{T}\mathcal{O}\mathcal{L} \cap \mathcal{G}\mathcal{P}$};
  \node (g) at (0,-1.3) {$\mathcal{D}\mathcal{C}\mathcal{F} \cap \mathcal{G}\mathcal{P} = \mathcal{C}\mathcal{F} \cap \mathcal{G}\mathcal{P}$};
  \node (h) at (0,-2.7) {$\mathcal{R}\mathcal{E}\mathcal{G} \cap \mathcal{G}\mathcal{P}$};
  \draw (e) -- (b);
  \draw (e) -- (g);
  \draw (b) -- (d);
  \draw[dashed] (d) -- (f) -- (g);
  \draw (g) -- (h);
\end{tikzpicture}
}
\caption{Group language hierarchy}
\end{subfigure}
\caption{Previously known formal language hierarchies}
\label{fig:knownlh}
\end{figure}
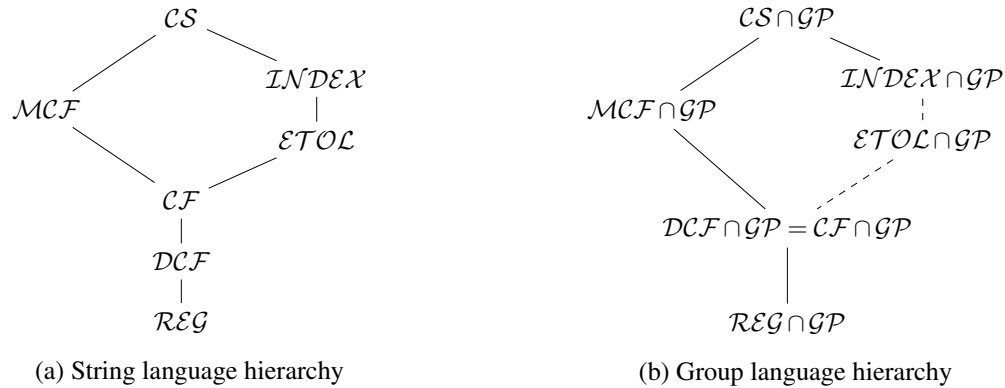

In this paper, we define and study a new string language class, combining ideas from both ET0L and hyperedge replacement grammars. We call our new class the parallel hyperedge replacement string (PHRS) languages, and show that the class strictly contains both the classes of MCF and ET0L languages, that it is a substitution and iterated substitution closed full abstract family of languages, and that PHRS group languages are closed under free product. While parallel hyperedge replacement has been considered before, most notably by Habel and Kreowski (separately) \cite{Habel92b,Kreowski92a,Kreowski93a}, the work is not extensive and does not consider repetition-freeness, rational control, or string generational power. Our long term goal is to place the word problem for as many hyperbolic groups as possible in the PHRS class. Knowledge of (geometric) group theory and word problems is not required to read and understand this paper - it is purely motivational!

Figure \ref{fig:newlh} summarises how the PHRS and repetition-free PHRS languages fit into the string language hierarchy and also how we conjecture the hierarchy collapses when we restrict to group languages.

\vspace{0.1em}
\begin{figure}[!ht]
\begin{subfigure}{.499\textwidth}
\centering
\scalebox{0.9}{
\begin{tikzpicture}
  \node[align=center] (a) at (-2,3.6) {$\mathcal{R}\mathcal{E}\mathcal{C}$};
  \node[align=center] (b) at (0,2.7) {$\mathcal{C}\mathcal{S}$};
  \node[align=center] (y) at (0,1.56) {$\mathcal{P}\mathcal{H}\mathcal{R}\mathcal{S}$};
  \node[align=center] (x) at (0,0.66) {$\mathcal{P}\mathcal{H}\mathcal{R}\mathcal{S}^{\mathrm{rf}}$};
  \node[align=center] (d) at (2,0.9) {$\mathcal{I}\mathcal{N}\mathcal{D}\mathcal{E}\mathcal{X}$};
  \node[align=center] (e) at (-2,0) {$\mathcal{M}\mathcal{C}\mathcal{F}$};
  \node[align=center] (f) at (2,0) {$\mathcal{E}\mathcal{T}\mathcal{O}\mathcal{L}$};
  \node[align=center] (g) at (0,-0.9) {$\mathcal{C}\mathcal{F}$};
  \node[align=center] (h) at (0,-1.8) {$\mathcal{D}\mathcal{C}\mathcal{F}$};
  \node[align=center] (i) at (0,-2.7) {$\mathcal{R}\mathcal{E}\mathcal{G}$};
  \draw (a) -- (b);
  \draw (y) -- (a);
  \draw[dashed] (x) -- (y);
  \draw (e) -- (x);
  \draw (f) -- (x);
  \draw (b) -- (e) -- (g);
  \draw (b) -- (d) -- (f) -- (g);
  \draw (g) -- (h) -- (i);
\end{tikzpicture}
}
\caption{Proved string language hierarchy}
\end{subfigure}
\begin{subfigure}{.499\textwidth}
\centering
\scalebox{0.9}{
\begin{tikzpicture}
  \node[align=center] (a) at (0,3.6) {$\mathcal{R}\mathcal{E}\mathcal{C} \cap \mathcal{G}\mathcal{P}$};
  \node[align=center] (b) at (0,2.7) {$\mathcal{C}\mathcal{S} \cap \mathcal{G}\mathcal{P}$};
  \node[align=center] (c) at (0,1.35) {$\mathcal{P}\mathcal{H}\mathcal{R}\mathcal{S}^{\mathrm{rf}} \cap \mathcal{G}\mathcal{P} \overset{\text{?}}{=} \mathcal{P}\mathcal{H}\mathcal{R}\mathcal{S} \cap \mathcal{G}\mathcal{P}$};
  \node[align=center] (d) at (0,0) {$\mathcal{M}\mathcal{C}\mathcal{F} \cap \mathcal{G}\mathcal{P}$};
  \node[align=center] (e) at (0,-1.3) {$\mathcal{D}\mathcal{C}\mathcal{F} \cap \mathcal{G}\mathcal{P} \overset{\text{\ding{51}}}{=} \mathcal{C}\mathcal{F} \cap \mathcal{G}\mathcal{P}$\\$\overset{\text{?}}{=} \mathcal{E}\mathcal{T}\mathcal{O}\mathcal{L} \cap \mathcal{G}\mathcal{P} \overset{\text{?}}{=} \mathcal{I}\mathcal{N}\mathcal{D}\mathcal{E}\mathcal{X} \cap \mathcal{G}\mathcal{P}$};
  \node[align=center] (f) at (0,-2.7) {$\mathcal{R}\mathcal{E}\mathcal{G} \cap \mathcal{G}\mathcal{P}$};
  \draw (a) -- (b) -- (c) -- (d) -- (e) -- (f);
\end{tikzpicture}
}
\caption{Conjectured group language hierarchy}
\end{subfigure}
\caption{New formal language hierarchies}
\label{fig:newlh}
\end{figure}
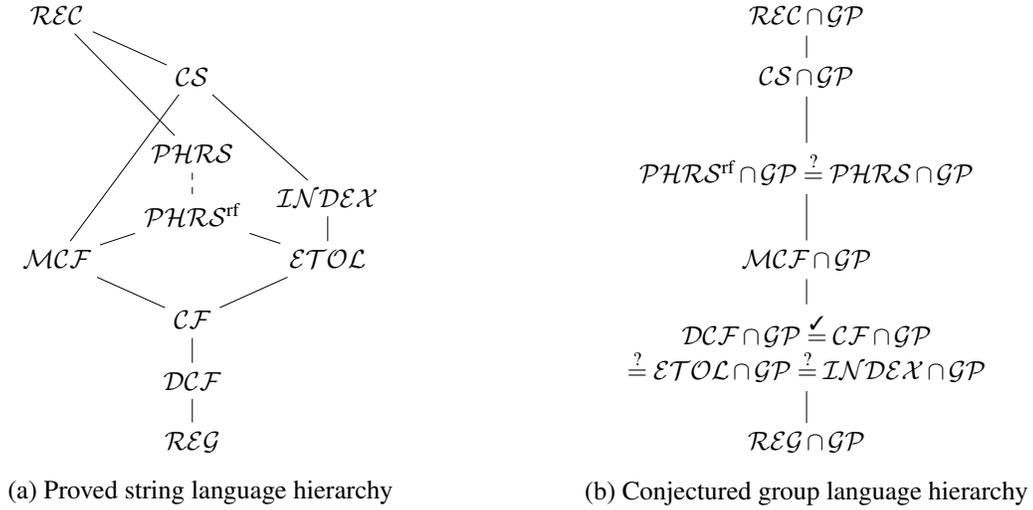

\section{Preliminaries}

By \(\mathbb{N}\) we mean the natural numbers including zero, by \(\underline{n}\) we mean \(\{1, \dots, n\}\), and \(\oplus\) denotes relational override. In this paper, all alphabets and sequences will be finite. Formally, a sequence on a set \(S\) is a function \(\sigma: \underline{n} \to S\). We view strings as sequences on an alphabet and denote the set of all sequences on \(S\) by \(S^*\). By a coding we mean a letter-to-letter homomorphism of free monoids, and by a weak encoding we mean a coding which possibly sends letters to the empty string. In this section, we define hyperedge replacement and ET0L grammars, and recall some important known results.

\subsection{Hyperedge Replacement} \label{subsec:hr}

This subsection is mostly based on \cite{Habel92b,Drewes-Kreowski-Habel97a}. By a signature we mean a pair \(\mathcal{C} = (\Sigma, \mathrm{type})\) where \(\Sigma\) is some alphabet, called the label set, and \(\mathrm{type}: \Sigma \to \mathbb{N}\) is a typing function which assigns to each label an arity called its type. We usually will assume some arbitrary but fixed signature \(\mathcal{C} = (\Sigma, \mathrm{type})\).

A hypergraph is a tuple \(H = (V_H, E_H, \textrm{att}_H, \textrm{lab}_H, \textrm{ext}_H)\) where \(V_H\) is a finite set of nodes, \(E_H\) is a finite set of hyperedges, \(\textrm{att}_H: E_H \to V_H^*\) is the attachment function, \(\textrm{lab}_H: E_H \to \Sigma\) is the labelling function, and \(\textrm{ext}_H \in V_H^*\) are the external nodes, such that labelling is compatible with typing (\(\textrm{type} \circ \textrm{lab}_H = \abs{\cdot} \circ \textrm{att}_H\)). In an abuse of notation, we write \(\mathrm{type}(H) = \abs{\textrm{ext}_H}\) for the type of \(H\), and define \(\textrm{type}_H: E_H \to \mathbb{N}\) by \(\textrm{type}_H = \textrm{type} \circ \textrm{lab}_H\) for the type of a hyperedge. For any hyperedge \(e \in E_H\), whenever \(m = \mathrm{type}_H(e)\) we call \(e\) a type \(m\) hyperedge, and call \(e\) proper whenever \(\textrm{att}_H(e)\) is injective (contains no repeated nodes). Call \(H\) proper if every \(e \in E_H\) is proper and repetition-free if \(\textrm{ext}_H\) is injective. The class of all hypergraphs (repetition-free hypergraphs) over \(\mathcal{C}\) is denoted \(\mathcal{H}_\mathcal{C}\) (\(\mathcal{H}_\mathcal{C}^{\mathrm{rf}}\)). We say two hypergraphs \(G, H \in \mathcal{H}_\mathcal{C}\) are isomorphic (\(G \cong H\)) if there is a pair of bijective functions \((g_V: V_G \to V_H, g_E: E_G \to E_H)\) such that \(\textrm{att}_H \circ g_E = g_V^* \circ \textrm{att}_G\), \(\textrm{lab}_H \circ g_E = \textrm{lab}_G\), and \(g_V \circ \textrm{ext}_G = \textrm{ext}_H\).

Given a string \(w \in \Sigma^*\) of length \(n\), its string graph is \(w^{\bullet} =\) \((\{v_0, \dots, v_n\}, \{e_1, \dots, e_n\}), \mathrm{att}, \mathrm{lab}, v_0 v_n)\) where \(\mathrm{att}(e_i) = v_{i-1} v_i\) and \(\mathrm{lab}(e_i) = w(i)\) for all \(i \in \underline{n}\) (Figure \ref{fig:eggraphs}(a)). If \(H \cong w^{\bullet}\) for some \(w \in \Sigma^*\), we call \(H\) a string graph representing \(w\). We also use the superscript bullet to denote the handle of a label. If \(X \in \Sigma\) is of type \(n\), then the handle of \(X\) is the hypergraph \(X^{\bullet} = (\{v_1, \dots, v_n\}, \{e\},\) \(\mathrm{att}, \mathrm{lab}, v_1 \cdots v_n)\) where \(\mathrm{att}(e) = v_1 \cdots v_n\) and \(\mathrm{lab}(e) = X\) (Figure \ref{fig:eggraphs}(b)). These two definitions coincide for a type \(2\) label, considered either as a string of length \(1\) or as a label, so there can be no confusion.

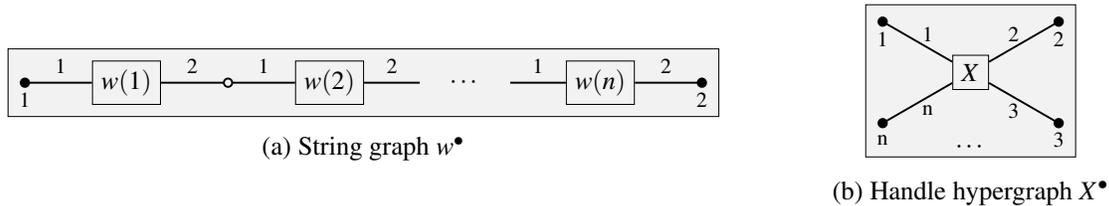
\begin{figure}[!ht]
\begin{subfigure}{.666666\textwidth}
\centering
\scalebox{0.9}{
\begin{tikzpicture}[every node/.style={align=center}]
    \draw (-0.25,-0.5) -- (-0.25,0.5) -- (10.25,0.5) -- (10.25,-0.5) -- cycle [fill=black!5];
    \node (a) at (0.0,0)       [draw, circle, thick, fill=black, scale=0.3]      {\,};
    \node (n) at (0,-0.25)                                                       {\footnotesize{1}};
    \node (b) at (1.5,0)       [draw, rectangle, minimum size=5mm]               {$w(1)$};
    \node (c) at (3.0,0)       [draw, circle, thick, fill=black!5, scale=0.3]    {\,};
    \node (d) at (4.5,0)       [draw, rectangle, minimum size=5mm]               {$w(2)$};
    \node (e) at (6.0,0)                                                         {\,};
    \node (f) at (6.5,0)                                                         {$\cdots$};
    \node (g) at (7.0,0)                                                         {\,};
    \node (h) at (8.5,0)       [draw, rectangle, minimum size=5mm]               {$w(n)$};
    \node (i) at (10,0)        [draw, circle, thick, fill=black, scale=0.3]      {\,};
    \node (m) at (10,-0.25)                                                      {\footnotesize{2}};

    \draw (a) edge[-,thick] node [above] {\footnotesize{1}} (b)
          (b) edge[-,thick] node [above] {\footnotesize{2}} (c)
          (c) edge[-,thick] node [above] {\footnotesize{1}} (d)
          (d) edge[-,thick] node [above] {\footnotesize{2}} (e)
          (g) edge[-,thick] node [above] {\footnotesize{1}} (h)
          (h) edge[-,thick] node [above] {\footnotesize{2}} (i);
\end{tikzpicture}
}
\caption{String graph $w^{\bullet}$}
\end{subfigure}
\begin{subfigure}{.333333\textwidth}
\centering
\scalebox{0.9}{
\begin{tikzpicture}[every node/.style={align=center}]
    \draw (-1.55,-1.25) -- (-1.55,1.0) -- (1.55,1.0) -- (1.55,-1.25) -- cycle [fill=black!5];
    \node (X) at (0.0,0.0)     [draw, rectangle, minimum size=5mm]               {$X$};
    \node (A) at (-1.3,0.75)   [draw, circle, thick, fill=black, scale=0.3]      {\,};
    \node (a) at (-1.3,0.5)                                                      {\footnotesize{1}};
    \node (B) at (1.3,0.75)    [draw, circle, thick, fill=black, scale=0.3]      {\,};
    \node (b) at (1.3,0.5)                                                       {\footnotesize{2}};
    \node (C) at (1.3,-0.75)   [draw, circle, thick, fill=black, scale=0.3]      {\,};
    \node (c) at (1.3,-1.0)                                                      {\footnotesize{3}};
    \node (D) at (0.0,-1.1)                                                      {$\cdots$};
    \node (E) at (-1.3,-0.75)  [draw, circle, thick, fill=black, scale=0.3]      {\,};
    \node (e) at (-1.3,-1.0)                                                     {\footnotesize{n}};

    \draw (X) edge[-,thick] node [right,xshift=-1mm,yshift=1.25mm] {\footnotesize{1}} (A)
          (X) edge[-,thick] node [left,xshift=1mm,yshift=1.25mm] {\footnotesize{2}} (B)
          (X) edge[-,thick] node [left,xshift=1mm,yshift=-1.25mm] {\footnotesize{3}} (C)
          (X) edge[-,thick] node [right,xshift=-1mm,yshift=-1.25mm] {\footnotesize{n}} (E);
\end{tikzpicture}

}
\caption{Handle hypergraph $X^{\bullet}$}
\end{subfigure}
\caption{Example hypergraphs}
\label{fig:eggraphs}
\end{figure}

Let \(H \in \mathcal{H}_{\mathcal{C}}\) be a hypergraph and \(B \subseteq E_H\) be a selection of hyperedges. Then \(\sigma: B \to \mathcal{H}_{\mathcal{C}}\) is called a replacement function if \(\mathrm{type} \circ \sigma = \restr{\mathrm{type}_H}{B}\). The replacement of \(B\) in \(H\) using \(\sigma\) is denoted by \(H[\sigma]\), and is the hypergraph obtained from \(H\) by removing \(B\) from \(E_H\), disjointly adding the nodes and hyperedges of \(\sigma(e)\), for each \(e \in B\), and identifying the \(i\)-th external node of \(\sigma(e)\) with the \(i\)-th attachment node of \(e\), for each \(e \in B\) and \(i \in \underline{\mathrm{type}_H(e)}\). The external nodes of \(H[\sigma]\) remain exactly those of \(H\) and all hyperedges keep their original attachments and labels. \(H[\sigma]\) exists exactly when \(\sigma: B \to \mathcal{H}_{\mathcal{C}}\) is a replacement function, and is unique up to isomorphism. If \(B = \{e_1, \dots, e_n\}\) and \(R_i = \sigma(e_i)\) for all \(i \in \underline{n}\), then we write \(H[e_1/R_1, \dots, e_n/R_n]\) in place of \(H[\sigma]\). Figure \ref{fig:egrepl} shows an example replacement.

\begin{figure}[!ht]
\begin{subfigure}{.386\textwidth}
\centering
\scalebox{0.9}{
\begin{tikzpicture}[every node/.style={align=center}]
    \draw (-0.25,-0.65) -- (-0.25,1.90) -- (6.25,1.90) -- (6.25,-0.65) -- cycle [fill=black!5];
    \node (A) at (0,1.7)                                                         {\tiny{$v_1$}};
    \node (a) at (0,1.5)       [draw, circle, thick, fill=black, scale=0.3]      {\,};
    \node (n) at (0,1.25)                                                        {\footnotesize{1}};
    \node (B) at (0,0.2)                                                         {\tiny{$v_2$}};
    \node (b) at (0,0)         [draw, circle, thick, fill=black!5, scale=0.3]    {\,};
    \node (F) at (1.5,1.1)                                                       {\tiny{$e_1$}};
    \node (x) at (1.5,0.75)    [draw, rectangle, minimum size=5mm]               {$X$};
    \node (C) at (3,0.95)                                                        {\tiny{$v_3$}};
    \node (c) at (3,0.75)      [draw, circle, thick, fill=black!5, scale=0.3]    {\,};
    \node (F) at (4.5,1.1)                                                       {\tiny{$e_2$}};
    \node (y) at (4.5,0.75)    [draw, rectangle, minimum size=5mm]               {$Y$};
    \node (D) at (6,0.95)                                                        {\tiny{$v_4$}};
    \node (d) at (6,0.75)      [draw, circle, thick, fill=black, scale=0.3]      {\,};
    \node (G) at (6,0.5)                                                         {\footnotesize{2}};
    \node (F) at (3,0.1)                                                         {\tiny{$e_3$}};
    \node (z) at (3,-0.25)     [draw, rectangle, minimum size=5mm]               {$Y$};

    \draw (a) edge[-,thick,bend left=12] node [above] {\footnotesize{1}} (x)
          (b) edge[-,thick,bend right=12] node [above] {\footnotesize{2}} (x)
          (x) edge[-,thick] node [above] {\footnotesize{3}} (c)
          (c) edge[-,thick] node [above] {\footnotesize{1}} (y)
          (y) edge[-,thick] node [above] {\footnotesize{2}} (d)
          (b) edge[-,thick,bend right=6] node [below] {\footnotesize{1}} (z)
          (z) edge[-,thick,bend right=15] node [below] {\footnotesize{2}} (d);
\end{tikzpicture}
}
\caption{Hypergraph $H$}
\end{subfigure}
\begin{subfigure}{.194\textwidth}
\centering
\scalebox{0.9}{
\begin{tikzpicture}[every node/.style={align=center}]
    \draw (7.00,-0.65) -- (7.00,1.90) -- (10.50,1.90) -- (10.50,-0.65) -- cycle [fill=black!5];
    \node (A) at (7.25,1.7)                                                      {\tiny{$v_1$}};
    \node (a) at (7.25,1.5)    [draw, circle, thick, fill=black, scale=0.3]      {\,};
    \node (n) at (7.25,1.25)                                                     {\footnotesize{1}};
    \node (B) at (7.25,-0.15)                                                    {\tiny{$v_2$}};
    \node (b) at (7.25,-0.4)   [draw, circle, thick, fill=black!5, scale=0.3]    {\,};
    \node (F) at (8.75,0.9)                                                      {\tiny{$e_1$}};
    \node (x) at (8.75,0.55)   [draw, rectangle, minimum size=5mm]               {$X$};
    \node (C) at (10.25,0.75)                                                    {\tiny{$v_3$}};
    \node (c) at (10.25,0.55)  [draw, circle, thick, fill=black, scale=0.3]      {\,};
    \node (G) at (10.25,0.3)                                                     {\footnotesize{2}};

    \draw (a) edge[-,thick,bend left=12] node [above] {\footnotesize{1}} (x)
          (b) edge[-,thick,bend right=12] node [above] {\footnotesize{2}} (x)
          (x) edge[-,thick] node [above] {\footnotesize{3}} (c);
\end{tikzpicture}
}
\caption{Hypergraph $R$}
\end{subfigure}
\begin{subfigure}{.42\textwidth}
\centering
\scalebox{0.9}{
\begin{tikzpicture}[every node/.style={align=center}]
    \draw (-0.25,-0.65) -- (-0.25,2.65) -- (6.25,2.65) -- (6.25,-0.65) -- cycle [fill=black!5];
    \node (A) at (0,1.7)                                                         {\tiny{$v_1$}};
    \node (a) at (0,1.5)       [draw, circle, thick, fill=black, scale=0.3]      {\,};
    \node (n) at (0,1.25)                                                        {\footnotesize{1}};
    \node (B) at (0,0.2)                                                         {\tiny{$v_2$}};
    \node (b) at (0,0)         [draw, circle, thick, fill=black!5, scale=0.3]    {\,};
    \node (F) at (1.5,1.1)                                                       {\tiny{$e_1$}};
    \node (x) at (1.5,0.75)    [draw, rectangle, minimum size=5mm]               {$X$};
    \node (C) at (3,0.95)                                                        {\tiny{$v_3$}};
    \node (c) at (3,0.75)      [draw, circle, thick, fill=black!5, scale=0.3]    {\,};
    \node (F) at (4.5,1.85)                                                      {\tiny{$e_4$}};
    \node (y) at (4.5,1.50)    [draw, rectangle, minimum size=5mm]               {$X$};
    \node (D) at (6,0.95)                                                        {\tiny{$v_4$}};
    \node (d) at (6,0.75)      [draw, circle, thick, fill=black, scale=0.3]      {\,};
    \node (F) at (3,2.45)                                                        {\tiny{$v_5$}};
    \node (e) at (3,2.25)      [draw, circle, thick, fill=black!5, scale=0.3]    {\,};
    \node (G) at (6,0.5)                                                         {\footnotesize{2}};
    \node (F) at (3,0.1)                                                         {\tiny{$e_3$}};
    \node (z) at (3,-0.25)     [draw, rectangle, minimum size=5mm]               {$Y$};

    \draw (a) edge[-,thick,bend left=12] node [above] {\footnotesize{1}} (x)
          (b) edge[-,thick,bend right=12] node [above] {\footnotesize{2}} (x)
          (x) edge[-,thick] node [above] {\footnotesize{3}} (c)
          (c) edge[-,thick,bend right=12] node [above] {\footnotesize{1}} (y)
          (e) edge[-,thick,bend left=12] node [above] {\footnotesize{2}} (y)
          (y) edge[-,thick,bend left=15] node [above] {\footnotesize{3}} (d)
          (b) edge[-,thick,bend right=6] node [below] {\footnotesize{1}} (z)
          (z) edge[-,thick,bend right=15] node [below] {\footnotesize{2}} (d);
\end{tikzpicture}
}
\caption{Hypergraph $H[e_2/R]$}
\end{subfigure}
\caption{Example hyperedge replacement}
\label{fig:egrepl}
\end{figure}
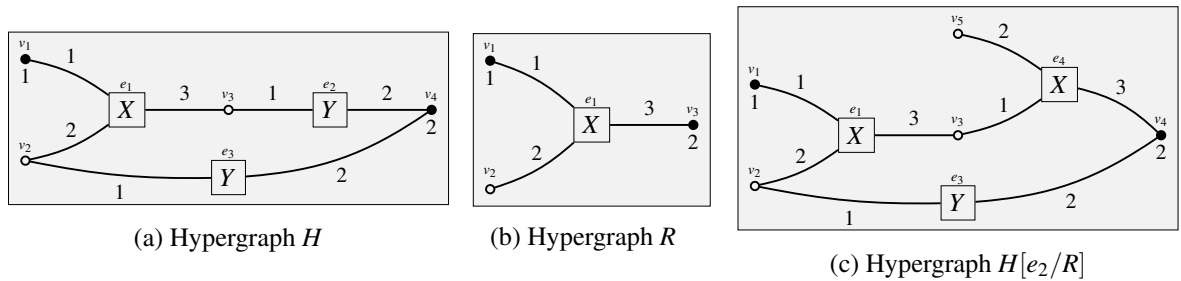

Let \(N \subseteq \Sigma\) be a set of non-terminals. A type \(n\) rule over \(N\) is a pair \((L, R)\) with \(L \in N\), \(R \in \mathcal{H}_{\mathcal{C}}\), and \(\mathrm{type}(L) = \mathrm{type}(R) = n\). Call a rule \((L, R)\) repetition-free (proper) if \(R\) is repetition-free (proper). Given a hypergraph \(H \in \mathcal{H}_{\mathcal{C}}\) and a set of rules \(\mathcal{R}\), if \(e \in E_H\) and \((\mathrm{lab}_H(e), R) \in \mathcal{R}\), then we say that \(H\) directly derives \(H' \cong H[e/R]\), and write \(H \Rightarrow_{\mathcal{R}} H'\). For a given hyperedge \(e\) and choice of rule, \(H'\) is unique up to isomorphism. Clearly \(\Rightarrow_{\mathcal{R}}\) is a binary relation on \(\mathcal{H}_{\mathcal{C}}\). We say \(H \in \mathcal{H}_{\mathcal{C}}\) derives \(H'\) if there is a sequence \(H \Rightarrow_{\mathcal{R}} H_1 \Rightarrow_{\mathcal{R}} \cdots \Rightarrow_{\mathcal{R}} H_k = H'\) for some \(k \geq 1\) or \(H \cong H'\). We write \(H \Rightarrow_{\mathcal{R}}^k H'\) or \(H \Rightarrow_{\mathcal{R}}^* H'\). Clearly, (direct) derivations cannot delete nodes, and (direct) derivations made using repetition-free rules cannot merge nodes. We have the following result for repetition-free rules:

\begin{theorem}[HR Context-Freeness \cite{Habel92b}]
Let \(\mathcal{R}\) be a set of repetition-free rules over \(N\), \(H \in \mathcal{H}_{\mathcal{C}}\), \(X \in N\), and \(k \in \mathbb{N}\). Then there is a derivation \(X^{\bullet} \Rightarrow^{k+1} H\) if and only if there is a rule \((X, R) \in \mathcal{R}\) and a mapping \(\sigma: \mathrm{lab}_R^{-1}(N) \to \mathcal{H}_{\mathcal{C}}\) such that \(H = R[\sigma]\), \(\forall e \in \mathrm{lab}_R^{-1}(N), \mathrm{lab}_R(e)^{\bullet} \Rightarrow^{k(e)} \sigma(e)\), and \(\sum_{e \in \mathrm{lab}_R^{-1}(N)} k(e) = k\).
\end{theorem}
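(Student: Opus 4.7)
The plan is to prove both directions by induction on the total derivation length $k$, with the key tool being a parallel independence property guaranteeing that replacements at distinct hyperedges commute, which holds precisely because repetition-free rules cannot merge nodes.

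For the forward direction, suppose $X^\bullet \Rightarrow^{k+1} H$. Since $X^\bullet$ contains a unique hyperedge labelled $X \in N$, the first derivation step must apply some rule $(X, R) \in \mathcal{R}$, giving $X^\bullet \Rightarrow R \Rightarrow^k H$. The crux is to decompose the remaining $k$-step derivation $R \Rightarrow^k H$ into independent sub-derivations, one rooted at each nonterminal hyperedge $e \in \mathrm{lab}_R^{-1}(N)$. I would establish a commutation lemma: if $G \Rightarrow G_1$ via some rule applied to $e_1$ and $G \Rightarrow G_2$ via another rule applied to $e_2 \neq e_1$, then both can be applied in either order to yield isomorphic results, provided the rules are repetition-free. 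This is where repetition-freeness is essential — it ensures that neither replacement identifies nodes that the other depends upon, so the attachment structure at $e_2$ survives unchanged when $e_1$ is replaced, and vice versa. Iterating this commutation, any derivation $R \Rightarrow^k H$ may be reordered so that all steps descended from a fixed nonterminal hyperedge $e \in \mathrm{lab}_R^{-1}(N)$ are contiguous. For each $e$, let $k(e)$ be the number of such steps and $\sigma(e)$ the hypergraph they produce from $\mathrm{lab}_R(e)^\bullet$; then $\sum_{e} k(e) = k$, $H \cong R[\sigma]$, and $\mathrm{lab}_R(e)^\bullet \Rightarrow^{k(e)} \sigma(e)$ as required.

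The backward direction is more routine: given a rule $(X, R) \in \mathcal{R}$ and sub-derivations $\mathrm{lab}_R(e)^\bullet \Rightarrow^{k(e)} \sigma(e)$ with $\sum k(e) = k$, we concatenate
\[ X^\bullet \;\Rightarrow\; R \;\Rightarrow^{k(e_1)}\; R[e_1/\sigma(e_1)] \;\Rightarrow^{k(e_2)}\; \cdots \;\Rightarrow^{k(e_n)}\; R[\sigma] \cong H, \]
where the $i$-th block of $k(e_i)$ steps simulates the derivation $\mathrm{lab}_R(e_i)^\bullet \Rightarrow^{k(e_i)} \sigma(e_i)$ performed inside the current hypergraph on the descendant of $e_i$. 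This simulation is valid because each such step only modifies a hyperedge descended from $e_i$, which remains present with the same label and attachment until it is itself rewritten; here again repetition-freeness guarantees that the attachment nodes of $e_i$ are not identified with other nodes along the way. The total length is $1 + \sum_i k(e_i) = k+1$.

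The main obstacle is the commutation/reordering argument in the forward direction: one must verify carefully that two replacements at distinct hyperedges of a repetition-free hypergraph actually commute up to isomorphism, and then use this to sort an arbitrary derivation into the block form above. Everything else — the inductive setup, the structure of the first step, and the backward construction — follows from the definitions once commutation is in hand.
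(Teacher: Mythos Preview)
The paper does not supply its own proof of this theorem; it is stated as a cited result from Habel's monograph. Your outline is in fact the standard argument found there: first step fixes a rule $(X,R)$, then sequentialisation/commutation of replacements at distinct hyperedges lets one reorder $R \Rightarrow^k H$ into blocks, one per nonterminal hyperedge of $R$, and the converse is the obvious concatenation. So your approach matches the reference the paper defers to.

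One small correction worth flagging: you repeatedly say repetition-freeness is what makes the commutation step go through, but that is not quite right. The sequentialisation identity $G[e_1/H_1][e_2/H_2] \cong G[e_2/H_2][e_1/H_1] \cong G[e_1/H_1, e_2/H_2]$ for distinct $e_1, e_2$ holds unconditionally for hyperedge replacement --- replacing $e_1$ never deletes or relabels $e_2$, and any node identifications it causes occur only among the attachment nodes of $e_1$, which does not obstruct the later replacement at $e_2$. Repetition-freeness is rather what guarantees that the sub-hypergraphs $\sigma(e)$ you extract really are isomorphic to hypergraphs derivable from the handles $\mathrm{lab}_R(e)^\bullet$ with their \emph{distinct} external nodes; without it, node mergers along the way could force identifications among the would-be external nodes of $\sigma(e)$, so that $\sigma(e)$ is not of the required type. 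In short, the commutation lemma is free; repetition-freeness is needed for the shape of the decomposition, not for the reordering.
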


A hyperedge replacement grammar of order \(k\) (\(k\)-HR grammar) is a tuple \(\mathcal{G} = (\mathcal{C}, N, S, \mathcal{R})\) where \(\mathcal{C} = (\Sigma, \mathrm{type})\) is a signature, \(N \subseteq \Sigma\) is the set of non-terminal labels, \(S \in N\) is the start symbol, and \(\mathcal{R}\) is a finite set of rules over \(N\), with \(\mathrm{max}(\{\mathrm{type}(r) \mid r \in \mathcal{R}\}) \leq k\). We call \(\Sigma \setminus N\) the terminal labels and call \(\mathcal{G}\) repetition-free (proper) if all its rules are repetition-free (proper). The language generated by \(\mathcal{G}\) is \(\mathrm{L}(\mathcal{G}) = \{H \in \mathcal{H}_{\mathcal{C}} \mid S^{\bullet} \Rightarrow_{\mathcal{R}}^* H \textrm{ with } \mathrm{lab}_H^{-1}(N) = \emptyset\} \subseteq \mathcal{H}_{\mathcal{C}}\). \(L \subseteq \mathcal{H}_{\mathcal{C}}\) is called a (repetition-free) hyperedge replacement language of order \(k\) ((repetition-free) \(k\)-HR language) if there is a (repetition-free) \(k\)-HR grammar such that \(\mathrm{L}(\mathcal{G}) = L\). The class of (repetition-free) HR languages is the union of all (repetition-free) \(k\)-HR languages for \(k \geq 0\). Denote these \(\mathcal{H}\mathcal{R}_k\) and \(\mathcal{H}\mathcal{R}\) (\(\mathcal{H}\mathcal{R}_k^{\mathrm{rf}}\) and \(\mathcal{H}\mathcal{R}^{\mathrm{rf}}\)). All such languages are isomorphism-closed and homogeneous (all hypergraphs have the same type).

\begin{theorem}[Repetition-Free HR Generational Power \cite{Engelfriet-Heyker91a}] \label{thm:hrrf}
Given an HR grammar \(\mathcal{G}\) over \(\mathcal{C}\), one can effectively construct a repetition-free HR grammar \(\mathcal{G}'\) with \(\mathrm{L}(\mathcal{G}') = \mathrm{L}(\mathcal{G}) \cap \mathcal{H}_{\mathcal{C}}^{\mathrm{rf}}\).
\end{theorem}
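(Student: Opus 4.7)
The plan is to refine each non-terminal to record the identification pattern its derived hypergraph will impose on the external nodes. For every \(X \in N\) with \(\mathrm{type}(X) = n\) and every equivalence relation \(\pi\) on \(\underline{n}\), introduce a fresh non-terminal \(X_\pi\) of type \(\abs{\underline{n}/\pi}\); the start symbol of \(\mathcal{G}'\) is \(S_\iota\), where \(\iota\) is the identity relation, and terminal labels carry over unchanged. For each rule \((X, R) \in \mathcal{R}\), each \(\pi\) on \(\underline{n}\), and each assignment \(\rho: \mathrm{lab}_R^{-1}(N) \to \mathrm{EqRel}\) sending every non-terminal hyperedge \(e\) to an equivalence relation \(\rho_e\) on \(\underline{\mathrm{type}_R(e)}\), form the node equivalence \(\sim\) on \(V_R\) generated by: (a) \(\mathrm{ext}_R(i) \sim \mathrm{ext}_R(j)\) whenever \(i \sim_\pi j\); and (b) for each non-terminal \(e\) with \(\mathrm{att}_R(e) = u_1 \cdots u_m\), \(u_k \sim u_\ell\) whenever \(k \sim_{\rho_e} \ell\). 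Obtain \(R_\rho^\pi\) by quotienting \(R\) under \(\sim\), relabelling each non-terminal hyperedge \(e\) with \((\mathrm{lab}_R(e))_{\rho_e}\) and trimming its attachment to one representative per \(\rho_e\)-class, and selecting canonical representatives of the \(\pi\)-classes as externals. Add \((X_\pi, R_\rho^\pi)\) to \(\mathcal{R}'\) exactly when the equivalence induced on \(\underline{n}\) by \(\sim\) coincides with \(\pi\); by this guard, every right-hand side is repetition-free, so \(\mathcal{G}'\) is a repetition-free HR grammar.

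\textbf{Correctness.} I would prove by induction on derivation length, invoking the HR Context-Freeness Theorem to decompose derivations rule-by-rule, that for every \(X \in N\) and every \(\pi\) on \(\underline{n}\): a terminal derivation \(X_\pi^\bullet \Rightarrow_{\mathcal{R}'}^* H'\) corresponds bijectively (up to isomorphism) to a terminal derivation \(X^\bullet \Rightarrow_{\mathcal{R}}^* H\) whose external identification pattern is exactly \(\pi\), with \(H'\) isomorphic to the quotient of \(H\) by this identification. In the forward direction, the top-level rule \((X_\pi, R_\rho^\pi)\) determines \((X, R)\) and the family \(\rho\); each subderivation rooted at a non-terminal hyperedge labelled \(Y_{\rho_e}\) lifts, by the inductive hypothesis, to a \(\mathcal{G}\)-derivation from \(Y^\bullet\) with external pattern \(\rho_e\), and reassembling via replacement yields \(H\). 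The backward direction is symmetric: given \(X^\bullet \Rightarrow R \Rightarrow^* H\), read \(\pi\) off the externals of \(H\) and each \(\rho_e\) off the subderivation rooted at \(e\), apply \((X_\pi, R_\rho^\pi)\), and invoke the hypothesis to finish each branch.

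\textbf{Conclusion and obstacle.} Specialising to \(X = S\) and \(\pi = \iota\) identifies \(\mathrm{L}(\mathcal{G}')\) with the repetition-free hypergraphs in \(\mathrm{L}(\mathcal{G})\), as required. The main obstacle is the combinatorial bookkeeping inside the induction: one must verify that node identifications created inside subderivations, once transported to the parent level and combined with those forced by \(\pi\), close up under equivalence to exactly the pattern used when building \(R_\rho^\pi\); in particular, the rejection clause (``\(\sim\) induces exactly \(\pi\) on the externals'') is precisely what prevents spurious derivations in \(\mathcal{G}'\) that would correspond to a coarsened identification. The HR Context-Freeness Theorem keeps this tractable by letting each non-terminal hyperedge be developed independently, so the genuinely new work is checking that the local choice of \(\rho_e\) consistently determines the global identification pattern in \(H\).
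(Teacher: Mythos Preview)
The paper does not actually prove this theorem: it is stated in the preliminaries as a known result and attributed to Engelfriet and Heyker, with no proof given. So there is no ``paper's own proof'' to compare against.

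That said, your construction is exactly the standard one from the cited reference: annotate each non-terminal \(X\) by an equivalence relation \(\pi\) on its external positions, let \(X_\pi\) have one external per \(\pi\)-class, and manufacture rules by guessing the identification pattern \(\rho_e\) at each non-terminal hyperedge of a right-hand side, quotienting, and keeping the rule only when the induced pattern on the externals agrees with \(\pi\). The correctness argument you outline is also the right one, and the ``rejection clause'' is indeed the crux.

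One technical caveat: the HR Context-Freeness Theorem as stated \emph{in this paper} is only for repetition-free rules, yet you invoke it to decompose derivations in the original grammar \(\mathcal{G}\), which need not be repetition-free. The more general version (for arbitrary HR grammars) does hold and is in Habel's book, so your argument is sound, but you are silently using a stronger statement than the one provided here. You should either note this explicitly or argue the decomposition directly from associativity and confluence of hyperedge replacement, which does not require repetition-freeness.
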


\begin{theorem}[HR Linear-Growth \cite{Habel92b}] \label{thm:lingrowth}
Given an infinite HR language \(L\), there exists an infinite sequence of hypergraphs in \(L\), say \(H_0, H_1, H_2, \dots\) and constants \(c,d \in \mathbb{N}\) with \(c+d \geq 1\), such that for all \(i \in \mathbb{N}\), \(\abs{V_{H_{i+1}}} = \abs{V_{H_{i}}} + c\) and \(\abs{E_{H_{i+1}}} = \abs{E_{H_{i}}} + d\).
\end{theorem}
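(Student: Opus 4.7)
The plan is to adapt the classical context-free pumping argument to the hyperedge-replacement setting, using the Context-Freeness theorem stated just above to organise derivations as trees. Fix an HR grammar \(\mathcal{G} = (\mathcal{C}, N, S, \mathcal{R})\) with \(\mathrm{L}(\mathcal{G}) = L\), let \(n = \abs{N}\), and let \(K\) bound the total number of nodes and hyperedges appearing in the right-hand side of any rule. Each direct derivation step replaces one nonterminal hyperedge by at most \(K\) pieces of structure, so whenever \(\abs{V_H} + \abs{E_H}\) is very large the derivation tree must be correspondingly tall; since \(L\) is infinite, we may pick such an \(H^{\ast} \in L\) whose derivation tree has a root-to-leaf path with more than \(n\) internal nodes.

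By pigeonhole on that path, some nonterminal \(X \in N\) occurs twice. The Context-Freeness theorem lets us extract three ingredients from this repetition: a context derivation \(S^{\bullet} \Rightarrow^{\ast} C\) in which a chosen hyperedge \(e_X \in E_C\) is labelled \(X\); a pumping derivation \(X^{\bullet} \Rightarrow^{\ast} P\) in which some chosen hyperedge \(e_X' \in E_P\) is again labelled \(X\); and a terminating derivation \(X^{\bullet} \Rightarrow^{\ast} T\) with \(\mathrm{lab}_T^{-1}(N) = \emptyset\).

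Next I would iterate the pumping. Define \(P^{(0)} = T\) and \(P^{(i+1)} = P[e_X'/P^{(i)}]\), and set \(H_i := C[e_X / P^{(i)}]\). By construction each \(H_i\) is derivable in \(\mathcal{G}\) and nonterminal-free, so lies in \(L\). Splicing in one extra copy of \(P\) adds exactly \(c := \abs{V_P} - \mathrm{type}(X)\) nodes (its external nodes are glued to the \(\mathrm{type}(X)\) attachment nodes of \(e_X'\)) and \(d := \abs{E_P} - 1\) hyperedges (the replaced \(X\)-edge disappears while the rest of \(P\) is added disjointly). These constants depend only on \(P\), so \(\abs{V_{H_{i+1}}} = \abs{V_{H_i}} + c\) and \(\abs{E_{H_{i+1}}} = \abs{E_{H_i}} + d\) for every \(i\).

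The one delicate point, and the main obstacle, is guaranteeing \(c + d \geq 1\); if \(P \cong X^{\bullet}\) then both constants vanish and the sequence is eventually constant. To exclude this degeneracy I would first put \(\mathcal{G}\) into a mild normal form eliminating useless nonterminals and chain rules whose right-hand side is a bare handle \(Y^{\bullet}\); after this reduction every direct derivation strictly increases \(\abs{V} + \abs{E}\), so a pumping witness \(X^{\bullet} \Rightarrow^{+} P\) extracted from a sufficiently deep derivation of \(H^{\ast}\) must satisfy \(P \not\cong X^{\bullet}\). Equivalently, one can argue contrapositively: if every repeated-nonterminal factorisation produced \(P \cong X^{\bullet}\), then collapsing such trivial loops would bound the depth of derivation trees of members of \(L\), contradicting infiniteness of \(L\). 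Either route delivers a pumping pair with \(c + d \geq 1\) and hence the required sequence.
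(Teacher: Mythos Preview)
The paper does not prove this theorem; it is quoted from Habel's monograph with a citation and no argument. So there is no in-paper proof to compare against, and your pumping-via-context-freeness sketch is in fact the standard route Habel takes: locate a repeated nonterminal on a long derivation branch, factor into context, pump, and terminator, then iterate.

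That said, there is one genuine gap relative to what this paper actually gives you. The Context-Freeness theorem, as stated here, is restricted to \emph{repetition-free} rules, whereas the Linear-Growth theorem is asserted for arbitrary HR languages. You invoke Context-Freeness without closing that gap. Passing through Theorem~\ref{thm:hrrf} only yields a repetition-free grammar for \(L \cap \mathcal{H}_{\mathcal{C}}^{\mathrm{rf}}\), and it is not automatic that this intersection is still infinite (an HR language of type \(\geq 2\) can consist entirely of hypergraphs whose external-node sequence is non-injective). You need either to appeal to the unrestricted form of the context-freeness lemma (which does hold, but is not what this paper records), or to give a separate argument reducing to the repetition-free case. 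Relatedly, your node-count formula \(c = \lvert V_P\rvert - \mathrm{type}(X)\) is only exact when no extra merging occurs at the glueing step, which again is guaranteed precisely in the repetition-free setting; without it, different iterations can identify different numbers of nodes and \(c\) need not be constant.

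A smaller point: your normal-form step (``eliminate chain rules \(X \to Y^{\bullet}\)'') does not by itself force every direct derivation to strictly increase \(\lvert V\rvert + \lvert E\rvert\); a rule whose right-hand side is an edgeless hypergraph on \(\mathrm{type}(X)\) nodes decreases that quantity. You would also need to treat such ``erasing'' rules, or argue the contrapositive version more carefully, to secure \(c + d \geq 1\).
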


The partial function \(\mathrm{STR}: \mathcal{H}_{\mathcal{C}} \rightharpoonup \Sigma^*\) sends string graphs to the strings they represent, and is undefined elsewhere. A language \(L \subseteq \mathcal{H}_{\mathcal{C}}\) is said to be a string graph language if it only contains string graphs. Given an HR grammar \(\mathcal{G}\) that generates a string graph language, we write \(\mathrm{STR}(\mathrm{L}(\mathcal{G}))\) for the actual string language it generates. A string language \(L \subseteq A^*\) is called a (repetition-free) hyperedge replacement string language of order \(k\) ((repetition-free) \(k\)-HRS language) if there is a (repetition-free) \(k\)-HR grammar \(\mathcal{G}\) such that \(\mathcal{G}\) generates a string graph language and \(\mathrm{STR}(\mathrm{L}(\mathcal{G})) = L \setminus \{\epsilon\}\). The class of (repetition-free) HRS languages is the union of all (repetition-free) \(k\)-HRS languages for \(k \geq 2\).

\begin{theorem}[HR String Generative Power] \label{thm:mcfequiv}
The following classes are equivalent, for any \(k \geq 1\):
\begin{enumerate}
\item \(\mathcal{H}\mathcal{R}\mathcal{S}_{2k} = \mathcal{H}\mathcal{R}\mathcal{S}_{2k+1} = \mathcal{H}\mathcal{R}\mathcal{S}_{2k}^{\mathrm{rf}} = \mathcal{H}\mathcal{R}\mathcal{S}_{2k+1}^{\mathrm{rf}}\): string languages of (repetition-free) hyperedge replacement grammars of order \(2k\) or \(2k + 1\);
\item \(\mathrm{OUT}(\mathcal{D}\mathcal{T}\mathcal{W}\mathcal{T}_{k})\): output languages of deterministic tree walking transducers of crossing number at most \(k\) (see \cite{Aho-Ullman72a});
\item \(\mathcal{L}\mathcal{C}\mathcal{F}\mathcal{R}_{k}\): string languages of linear context-free rewriting systems of rank at most \(k\) (see \cite{Shanker-Weir-Joshi87a});
\item \(\mathcal{M}\mathcal{C}\mathcal{F}_{k}\): languages of \(k\)-multiple context-free grammars (see \cite{Seki-Matsumura-Fujii-Kasami91a});
\item \(\mathcal{R}\mathcal{T}\mathcal{S}\mathcal{A}_{k}\): languages of \(k\)-restricted tree stack automata (see \cite{Denkinger16a}).
\end{enumerate}
\end{theorem}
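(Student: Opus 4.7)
The plan is to establish the equivalences as a cycle of inclusions, leaning heavily on the cited literature, since most individual equivalences are classical; the only non-routine content is keeping the rank/order parameter $k$ synchronised across the different formalisms and handling the internal equalities of item (1).

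First I would argue the four-way equality within item (1). The equality $\mathcal{H}\mathcal{R}\mathcal{S}_{n} = \mathcal{H}\mathcal{R}\mathcal{S}_{n}^{\mathrm{rf}}$ for each $n \in \{2k, 2k+1\}$ is immediate from Theorem \ref{thm:hrrf}: given an HRS grammar $\mathcal{G}$, applying Theorem \ref{thm:hrrf} yields a repetition-free $\mathcal{G}'$ with $\mathrm{L}(\mathcal{G}') = \mathrm{L}(\mathcal{G}) \cap \mathcal{H}_\mathcal{C}^{\mathrm{rf}}$, and since string graphs are automatically repetition-free (their external sequence is $v_0 v_n$ with $v_0 \neq v_n$ unless the string is empty, which is excluded by the convention $\mathrm{STR}(\mathrm{L}(\mathcal{G})) = L \setminus \{\epsilon\}$), intersecting does not lose any terminal strings. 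The equality $\mathcal{H}\mathcal{R}\mathcal{S}_{2k} = \mathcal{H}\mathcal{R}\mathcal{S}_{2k+1}$ is the substantive internal statement: given a $(2k+1)$-HR grammar generating a string graph language, I would show one can normalise it to use only even-type non-terminals of type at most $2k$. Since derivations from an odd-type handle must eventually terminate in a hypergraph of even type $2$ (a string graph has two external nodes), an odd external-node count in an intermediate non-terminal is essentially wasted, and a standard surgery (pairing or removing the ``extra'' external node, tracking the identification through the derivation) reduces type $2k+1$ to type $2k$ without altering the string language.

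Next I would connect (1) to (4) via the Engelfriet--Heyker / Weir correspondence, which is the main engine: the string languages of HR grammars of order $2k$ coincide exactly with $k$-MCFL, the key intuition being that an order-$2k$ non-terminal derives a string graph with $2k$ external nodes, which can be read as $k$ pairs delimiting $k$ string components, matching the $k$ components of an MCFG non-terminal. The direction $\mathcal{H}\mathcal{R}\mathcal{S}_{2k} \subseteq \mathcal{M}\mathcal{C}\mathcal{F}_k$ requires simulating hyperedge replacement by an MCFG that tracks the $k$ string fragments living between consecutive pairs of external nodes; the converse encodes each MCFG non-terminal as a type-$2k$ handle and each MCFG rule as an HR rule whose right-hand side glues the child handles according to the rule's yield function. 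The key obstacle is doing the bookkeeping carefully enough to preserve rank on both sides; I would cite \cite{Engelfriet-Heyker91a,Weir92a} for the full construction.

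Finally I would close the remaining equivalences by citing known folklore/literature chains: $\mathcal{L}\mathcal{C}\mathcal{F}\mathcal{R}_k = \mathcal{M}\mathcal{C}\mathcal{F}_k$ is the original result of Vijay-Shanker, Weir and Joshi together with Seki et al.\ \cite{Shanker-Weir-Joshi87a,Seki-Matsumura-Fujii-Kasami91a} (LCFRS and MCFG differ only cosmetically in how component yield functions are presented); $\mathrm{OUT}(\mathcal{D}\mathcal{T}\mathcal{W}\mathcal{T}_k) = \mathcal{L}\mathcal{C}\mathcal{F}\mathcal{R}_k$ follows from the classical Aho--Ullman analysis of tree-walking transducers together with the Engelfriet characterisation of their output languages by crossing number \cite{Aho-Ullman72a}; and $\mathcal{R}\mathcal{T}\mathcal{S}\mathcal{A}_k = \mathcal{M}\mathcal{C}\mathcal{F}_k$ is exactly the content of Denkinger \cite{Denkinger16a}. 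Assembling these into a cycle $\mathcal{H}\mathcal{R}\mathcal{S}_{2k} \to \mathcal{M}\mathcal{C}\mathcal{F}_k \to \mathcal{L}\mathcal{C}\mathcal{F}\mathcal{R}_k \to \mathrm{OUT}(\mathcal{D}\mathcal{T}\mathcal{W}\mathcal{T}_k) \to \mathcal{R}\mathcal{T}\mathcal{S}\mathcal{A}_k \to \mathcal{M}\mathcal{C}\mathcal{F}_k \to \mathcal{H}\mathcal{R}\mathcal{S}_{2k}$ completes the proof. The hardest step, and the only one I would not simply delegate to citations, is the rank-preservation of the HRS $\leftrightarrow$ MCF translation; the rest amounts to a literature survey with attention to how each source parameterises its rank.
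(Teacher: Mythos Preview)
Your overall structure---assemble the theorem from literature citations---matches the paper, but there is one substantive divergence and one minor misattribution worth flagging.

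The paper does \emph{not} prove $\mathcal{H}\mathcal{R}\mathcal{S}_{2k} = \mathcal{H}\mathcal{R}\mathcal{S}_{2k+1}$ by a direct surgery on odd-type non-terminals. Instead it observes that Engelfriet--Heyker already give the sandwich $\mathcal{H}\mathcal{R}\mathcal{S}_{2k+1}^{\mathrm{rf}} \subseteq \mathrm{OUT}(\mathcal{D}\mathcal{T}\mathcal{W}\mathcal{T}_{k}) \subseteq \mathcal{H}\mathcal{R}\mathcal{S}_{2k}$, which simultaneously yields the internal equalities of item (1) \emph{and} the equivalence $(1)=(2)$ in one stroke. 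Your surgery sketch (``the extra external node is wasted, pair or remove it'') is the right intuition but is not a proof as stated: an odd-type non-terminal can still be used nontrivially in intermediate derivations, and showing that one of its attachment points can always be absorbed without changing the generated string language is precisely the content that Engelfriet--Heyker establish by routing through $\mathcal{D}\mathcal{T}\mathcal{W}\mathcal{T}_k$. So the paper's route is both shorter and avoids the part you identified as ``the hardest step''.

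The citation chain also differs slightly: the paper runs $(1)=(2)$ via Engelfriet--Heyker, $(2)=(3)$ via Weir, $(3)=(4)$ via Seki et al., $(4)=(5)$ via Denkinger. Your attribution of $\mathrm{OUT}(\mathcal{D}\mathcal{T}\mathcal{W}\mathcal{T}_k) = \mathcal{L}\mathcal{C}\mathcal{F}\mathcal{R}_k$ to Aho--Ullman is off; Aho--Ullman is only the definitional source for the transducers, and the equivalence with $\mathcal{L}\mathcal{C}\mathcal{F}\mathcal{R}_k$ is Weir's result. Likewise, you propose to argue $\mathcal{H}\mathcal{R}\mathcal{S}_{2k} \leftrightarrow \mathcal{M}\mathcal{C}\mathcal{F}_k$ directly and treat rank-preservation as the crux; the paper simply delegates this to the chain $(1)=(2)=(3)=(4)$, each link of which is already rank-preserving in the cited sources.
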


\begin{proof}
\(\mathcal{H}\mathcal{R}\mathcal{S}_{k} = \mathcal{H}\mathcal{R}\mathcal{S}_{k}^{\mathrm{rf}}\) for all \(k \geq 2\) is due to Theorem \ref{thm:hrrf} and \(\mathcal{H}\mathcal{R}\mathcal{S}_{2k+1}^{\mathrm{rf}} \subseteq \mathrm{OUT}(\mathcal{D}\mathcal{T}\mathcal{W}\mathcal{T}_{k}) \subseteq \mathcal{H}\mathcal{R}\mathcal{S}_{2k}\) for all \(k \geq 1\) is due to Engelfriet et al. \cite{Engelfriet-Heyker91a}, which gives us the equalities in \((1)\) and \((1) = (2)\). \((2) = (3)\) is due to Weir \cite{Weir92a}, \((3) = (4)\) is due to Seki et al. (1991) \cite{Seki-Matsumura-Fujii-Kasami91a}, and \((4) = (5)\) is due to Denkinger \cite{Denkinger16a}.
\end{proof}

Call a set \(S \subseteq \mathbb{N}^d\) linear if it is of the form \(\{p + a_1 p_1 + \cdots a_n p_n \mid a_1, \dots a_n \in \mathbb{N}\}\) for some fixed \(p, p_1, \dots p_n \in \mathbb{N}^d\). Call \(S\) semilinear if it is a finite union of linear sets. Given \(A = \{a_1, \dots, a_d\}\) and \(w \in A^*\), define \(\psi_A(w) = (\abs{w}_{a_1}, \dots, \abs{w}_{a_d})\) where \(\abs{w}_{a_i}\) counts the number of occurrences of \(a_i\) in \(w\). A string language \(L\) is called semilinear if \(\psi_A(L)\) is semilinear. Two string languages \(L_1, L_2 \subseteq A^*\) are called letter-equivalent if \(\psi_A(L_1) = \psi_A(L_2)\). In 1966, Parikh showed that a language is semilinear if and only if it is letter-equivalent to a regular language and that all context-free langauges are semilinear \cite{Parikh66a}. In 1991, Seki et al. showed that all MCF languages are also semilinear and that the classes of \(k\)-MCF languages are substitution closed full AFLs \cite{Seki-Matsumura-Fujii-Kasami91a}. This gives us the following result:

\begin{theorem}[HRS Closure Properties]
For all \(k \geq 2\), \(\mathcal{H}\mathcal{R}\mathcal{S}_{k}\) is a substitution closed full AFL containing only semilinear languages.
\end{theorem}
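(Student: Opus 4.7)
The plan is to reduce everything to the already-known facts about multiple context-free languages, transported across the equivalence given by Theorem~\ref{thm:mcfequiv}. For any fixed \(k \geq 2\), I would first set \(m = \lfloor k/2 \rfloor \geq 1\), so that \(k \in \{2m, 2m+1\}\), and then invoke Theorem~\ref{thm:mcfequiv} to conclude \(\mathcal{H}\mathcal{R}\mathcal{S}_k = \mathcal{M}\mathcal{C}\mathcal{F}_m\) as classes of string languages. The rest of the proof is then purely a matter of citing the MCF literature.

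Once the identification \(\mathcal{H}\mathcal{R}\mathcal{S}_k = \mathcal{M}\mathcal{C}\mathcal{F}_m\) is in hand, the substitution closed full AFL claim is immediate from the result of Seki et al.\ (1991) already cited in the preamble to the theorem: they show that each \(\mathcal{M}\mathcal{C}\mathcal{F}_m\) is a substitution closed full AFL. Since the two families consist of precisely the same string languages, the AFL operations (union, concatenation, Kleene plus, intersection with regular languages, homomorphism, inverse homomorphism) and substitution produce identical results whether performed inside \(\mathcal{H}\mathcal{R}\mathcal{S}_k\) or inside \(\mathcal{M}\mathcal{C}\mathcal{F}_m\), so closure transfers without any further work.

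For the semilinearity claim I would again appeal to Seki et al., who prove a Parikh-type theorem: every MCF language is letter-equivalent to a regular language, hence semilinear. Combined with \(\mathcal{H}\mathcal{R}\mathcal{S}_k = \mathcal{M}\mathcal{C}\mathcal{F}_{\lfloor k/2 \rfloor}\), this gives that every language in \(\mathcal{H}\mathcal{R}\mathcal{S}_k\) is semilinear.

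There is essentially no obstacle: the hard work is entirely encapsulated in Theorem~\ref{thm:mcfequiv} (which establishes the grammar-formalism equivalence) and in the cited MCF results. The only thing to check carefully is that the equivalence really is an equality of sets of string languages, not merely a bijection modulo some coding — but inspecting the statement of Theorem~\ref{thm:mcfequiv}, that is exactly what it asserts, so the present theorem follows as a direct corollary.
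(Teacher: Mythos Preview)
Your proposal is correct and matches the paper's approach exactly: the paper does not give a separate proof but simply states the theorem as an immediate consequence of Theorem~\ref{thm:mcfequiv} together with the cited results of Seki et al.\ that each \(\mathcal{M}\mathcal{C}\mathcal{F}_m\) is a substitution closed full AFL containing only semilinear languages.
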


\subsection{ET0L Languages}

Lindenmayer systems (L systems) were introduced in 1968 by Aristid Lindenmayer. We direct the reader to \cite{Rozenberg-Salomaa80a} for a comprehensive introduction to the topic. In this paper, we are interested in the class of string languages called the ET0L languages, described by a specific type of L system.

A table over \(\Sigma\) is a left-total finite binary relation \(T \subseteq \Sigma \times \Sigma^{*}\), and can be associated to a substitution \(\sigma_T\) such that for any \(L \subseteq \Sigma^*\), we define \(\sigma_T(L) = \bigcup_{w \in L} \sigma_T(w)\) and \(\sigma_T(a_1 \dots a_n) = \{w_1 \dots w_n \mid (a_1, w_1), \dots, (a_n, w_n) \in T\}\). An ET0L grammar is a tuple \(\mathcal{G} = (\Sigma, A, S, \mathcal{T})\) where \(\Sigma\) is an alphabet, \(A \subseteq \Sigma\) is the terminal alphabet, \(S \in \Sigma\) is the start symbol and \(\mathcal{T}\) is a finite set of tables over \(\Sigma\). The language generated by \(\mathcal{G}\) is \(\mathrm{L}(\mathcal{G}) = \bigcup_{n \in \mathbb{N}} \{\sigma_{T_1}\cdots\sigma_{T_n}(S) \mid T_1, \dots, T_n \in \mathcal{T}\} \cap A^*\). A language \(L \subseteq A^*\) is called an ET0L language if there exists an ET0L grammar \(\mathcal{G}\) such that \(\mathrm{L}(\mathcal{G}) = L\). It will be convenient to think of table entries as rules and substitutions as parallel replacement. We will make this formal Subsection \ref{subsec:phr}.

Finally, call an ET0L grammar propagating if each table is contained in \(\Sigma \times \Sigma^{+}\) (rather than just \(\Sigma \times \Sigma^{*}\)). That is, rules have non-empty right-hand sides. The following results are useful to us:

\begin{theorem}[Propagating ET0L Generative Power \cite{Rozenberg-Salomaa80a}] \label{thm:emptyrhs}
Given an ET0L grammar \(\mathcal{G}\), one can effectively construct a propagating ET0L grammar \(\mathcal{G}'\) such that \(\mathrm{L}(\mathcal{G}) \setminus \{\epsilon\} = \mathrm{L}(\mathcal{G}')\).
\end{theorem}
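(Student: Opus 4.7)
The plan is to eliminate rules with empty right-hand side by an adaptation of standard $\epsilon$-elimination for context-free grammars to the parallel-rewriting setting of ET0L, building new tables that ``pre-delete'' any symbols whose descendants would erase. The key observation is that if a symbol is \emph{nullable} --- i.e.\ can derive $\epsilon$ via some sequence of tables --- then in any derivation producing a non-empty word, an occurrence which ultimately erases may be deleted instead in the step immediately preceding its creation.

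First, I would compute the set $N \subseteq \Sigma$ of nullable symbols, defined as the smallest subset such that $a \in N$ whenever some $T \in \mathcal{T}$ contains a rule $(a, w)$ with $w \in N^*$. A finite fixpoint computation suffices since $N$ only grows and is bounded by $\Sigma$. Next, for each $T \in \mathcal{T}$ I would construct $T'$ containing, for every rule $(a, b_1 \cdots b_n) \in T$ with $n \geq 1$ and every subset $I \subsetneq \{1, \ldots, n\}$ with $b_i \in N$ for all $i \in I$, the ``shortcut'' rule $(a, w')$ where $w'$ is obtained from $b_1 \cdots b_n$ by removing the positions in $I$. Rules of the form $(a, \epsilon)$ in $T$ are discarded outright. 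Taking $\mathcal{G}' = (\Sigma, A, S, \{T' : T \in \mathcal{T}\})$ yields a candidate propagating grammar.

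The main obstacle is ensuring left-totality of each $T'$: when every $T$-rule for some $a$ has right-hand side $\epsilon$, $T'$ lacks a rule for $a$. I would handle this by introducing a fresh start symbol $S'$ together with a seed table that produces all non-empty shortcuts of $\mathcal{G}$-words one step from $S$, or equivalently by adding harmless identity rules $(a, a)$ for such $a$ and proving inductively that they can never appear in a successful $\mathcal{G}'$-derivation. In either case the justification is the same: in any $\mathcal{G}$-derivation of a non-empty word, such an offending $T$ is only ever applied when all problematic occurrences of $a$ are doomed, and those can be excised one step earlier using a shortcut rule.

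For correctness, $\mathrm{L}(\mathcal{G}') \subseteq \mathrm{L}(\mathcal{G}) \setminus \{\epsilon\}$ is immediate: each new rule $(a, w')$ of $T'$ is simulated in $\mathcal{G}$ by applying the original $(a, w) \in T$ and then deriving $\epsilon$ from each deleted nullable position via witnessing $\mathcal{G}$-derivations. The converse is an induction on derivation length: given $S \Rightarrow_{\mathcal{G}}^{*} w$ with $w \in A^{*} \setminus \{\epsilon\}$, classify each symbol occurrence in each intermediate string as \emph{surviving} (contributes at least one terminal letter to $w$) or \emph{doomed} (contributes $\epsilon$); doomed occurrences always bear nullable labels, so each derivation step can be rebuilt using the $T'$-rule that omits exactly those occurrences, yielding a propagating $\mathcal{G}'$-derivation of $w$. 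The fiddly part of the argument lies in the base case handling of the first derivation step when $S$ itself may be nullable, which is exactly the role played by the seed start symbol $S'$ above.
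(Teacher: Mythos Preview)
The paper does not give its own proof of this result; it is quoted from Rozenberg and Salomaa with a citation and used as a black box, so there is no in-paper argument to compare against.

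Your construction, however, has a real gap in the inclusion $\mathrm{L}(\mathcal{G}')\subseteq\mathrm{L}(\mathcal{G})$. The justification you offer --- simulate a shortcut rule $(a,w')$ by applying the original $(a,w)$ in $\mathcal{G}$ and then ``deriving $\epsilon$ from each deleted nullable position via witnessing $\mathcal{G}$-derivations'' --- is a context-free argument that ignores the global synchronisation of ET0L: the table sequence that erases one position is forced simultaneously on every other symbol in the string and may transform surviving material irreversibly. A small counterexample: let $\Sigma=\{S,a,b,c,d\}$, $A=\{c,d\}$, and take tables (with identity rules for all unmentioned symbols)
\[
T_1:\ S\to ab,\qquad T_2:\ a\to c,\qquad T_3:\ a\to d,\ \ b\to\epsilon,\ \ b\to d,\ \ c\to d.
\]
Here $N=\{b\}$, every $T_i'$ is already left-total without any identity patches, and your $T_1'$ contains the shortcut $S\to a$; then $S\Rightarrow_{T_1'}a\Rightarrow_{T_2'}c$ yields $c\in\mathrm{L}(\mathcal{G}')$. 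But in $\mathcal{G}$ the only table that can remove $b$ is $T_3$, which also forces $a\mapsto d$ and $c\mapsto d$, and one checks $\mathrm{L}(\mathcal{G})=\{d,dd\}$, so $c\notin\mathrm{L}(\mathcal{G})$. The failure lies in the shortcut rules themselves, so neither of your left-totality fixes repairs it. A correct argument must track erasure in a way that respects the shared table sequence --- for instance by nondeterministically tagging each occurrence as live or doomed and carrying those tags through the derivation, checking consistency at the terminating step --- rather than appealing to a single global nullable set computed across all tables.
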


\begin{theorem}[ET0L Closure Properties \cite{Rozenberg-Salomaa80a}]
\(\mathcal{E}\mathcal{T}\mathcal{O}\mathcal{L}\) is a substitution closed full AFL.
\end{theorem}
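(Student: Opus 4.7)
The plan is to verify, in turn, each closure operation defining a substitution-closed full AFL: substitution, union, concatenation, Kleene plus, arbitrary homomorphism, inverse homomorphism, and intersection with regular languages. The natural starting point is substitution closure, since several other operations admit short proofs as specialisations once substitution is in hand.

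For substitution closure, let $L \in \mathcal{E}\mathcal{T}\mathcal{O}\mathcal{L}$ be generated by $\mathcal{G} = (\Sigma, A, S, \mathcal{T})$, and for each $a \in A$ let $L_a \in \mathcal{E}\mathcal{T}\mathcal{O}\mathcal{L}$ be generated by $\mathcal{G}_a = (\Sigma_a, A_a, S_a, \mathcal{T}_a)$; I would rename if necessary so that $\Sigma$ and $\{\Sigma_a\}_{a \in A}$ are pairwise disjoint. I would then build $\mathcal{G}' = (\Sigma', A', S, \mathcal{T}')$ with $\Sigma' = \Sigma \cup \bigcup_{a \in A} \Sigma_a$, $A' = \bigcup_{a \in A} A_a$, and $\mathcal{T}'$ comprising three groups of tables: (i) each $T \in \mathcal{T}$, padded with identity rules $Y \to Y$ for every $Y \in \bigcup_{a \in A} \Sigma_a$; (ii) for each $a \in A$ and each $T \in \mathcal{T}_a$, the table $T$ padded with identity rules on all symbols outside $\Sigma_a$; and (iii) a single ``launch'' table containing $a \to S_a$ for each $a \in A$ together with identity rules on all other symbols. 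The key observation is that only derivations which simulate $\mathcal{G}$ to reach a word in $A^*$, then apply the launch table, and finally simulate each $\mathcal{G}_a$ independently and in parallel on its spawned copies, can terminate in $(A')^*$; any premature launch strands non-terminal $\Sigma$-symbols, and any early use of $\mathcal{G}_a$-tables is a no-op.

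The remaining full AFL closures then follow by standard constructions in the same spirit. Union and concatenation use a fresh start symbol $S$ with an initial table containing $\{S \to S_1, S \to S_2\}$ respectively $\{S \to S_1 S_2\}$, followed by the two input grammars' tables padded with identities on the other's alphabet. Kleene plus combines union and concatenation with a looping gadget: a marker $\#$ that can either duplicate itself or convert to $S_1$. A homomorphism $h : A^* \to B^*$ is realised by replacing each terminal letter $a$ in every right-hand side by $h(a)$, with erasing absorbed without difficulty (or, if preferred, first passing to a propagating grammar via Theorem \ref{thm:emptyrhs}). Inverse homomorphism can be obtained either by a product-style construction or as a combination of an arbitrary homomorphism with an intersection with a regular language encoding the fibres of $h$.

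The hardest step will be intersection with regular languages, because ET0L's parallel-rewriting semantics forces every occurrence of every symbol in the sentential form to be rewritten simultaneously by a single table. Given a DFA $M$ for $R$, the classical tagging construction annotates each non-terminal by a pair $(p, q)$ of states of $M$, promising that its eventual terminal yield drives $M$ from $p$ to $q$; each table is then replaced by all of its consistent variants, in which each right-hand side is itself tagged so that adjacent tags chain together. Verifying that the resulting annotated grammar generates exactly $L \cap R$ requires a careful inductive argument on derivations, ensuring that across every parallel rewrite the local promises glue globally into an accepting run of $M$ on the terminal yield. This construction is technical but well-known, and completes the proof.
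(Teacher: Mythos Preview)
The paper does not give its own proof of this theorem; it is stated with a citation to Rozenberg--Salomaa and used as a black box. So there is no paper-proof to compare against, only the question of whether your sketch is correct.

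Your substitution construction has a genuine gap. After the launch table replaces each terminal $a$ by $S_a$, a word in $L$ containing two or more occurrences of the same letter $a$ spawns several copies of $S_a$. Your tables in group (ii) rewrite \emph{all} $\Sigma_a$-symbols simultaneously according to one chosen table of $\mathcal{G}_a$, so every spawned copy of $S_a$ is forced to follow the \emph{same} table sequence. Different copies can still make different rule choices inside a table, but they cannot use different table sequences. Concretely, take $L=\{aa\}$ and $L_a=\{b,cc\}$ generated by tables $T_1=\{S_a\to b,\ b\to b,\ c\to c\}$ and $T_2=\{S_a\to cc,\ b\to b,\ c\to c\}$: the substitution should yield $\{bb,\,bcc,\,ccb,\,cccc\}$, but your grammar produces only $\{bb,\,cccc\}$. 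The standard repair (visible, for instance, in the paper's own proof of Theorem~\ref{thm:x} for PHRS) is to encode the sub-alphabets, add an identity rule $\overbar{S_a}\to\overbar{S_a}$ so that each spawned copy can independently \emph{delay} its start, and add a separate ``finish'' table that decodes encoded terminals and sends encoded non-terminals to a trap; then one copy can run to completion while the others wait, after which the next copy runs, and so on. Without some such desynchronisation device (or an appeal to a synchronised normal form for ET0L), the argument fails. Your remaining closure sketches are broadly reasonable, though the homomorphism step also needs care since ET0L tables may rewrite terminal letters; simply substituting $h(a)$ into right-hand sides is not enough without first normalising so that terminals are inert.
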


\begin{theorem}[MCF and ET0L Incomparable] \label{thm:incompar}
\,
\begin{enumerate}
    \item \(K = \{w h(w) \mid w \in D\}\) is a 2-MCF language which is not ET0L, where \(D \subseteq \Sigma^*\) is the Dyck language \cite{Ehrenfeucht-Rozenberg77a}, \(\overbar{\Sigma}\) is a disjoint copy of \(\Sigma\), and \(h: \Sigma^* \to \overbar{\Sigma}^*\) is defined by sending each \(a \in \Sigma\) to its copy \(\overbar{a} \in \overbar{\Sigma}\).
    \item \(L = \{a^{2^n} \mid n \in \mathbb{N}\}\) is an ET0L language but not MCF. Moreover, \(L\) is not semilinear.
\end{enumerate}
\end{theorem}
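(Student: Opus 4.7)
The plan is to handle the two claims of the theorem independently, since they share no machinery.

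For part (1), to show $K$ is $2$-MCF I would write down an explicit $2$-MCF grammar that builds $w$ and $h(w)$ as two synchronised components and concatenates them only at the root. Taking a single dimension-$2$ nonterminal $A$ and mirroring the context-free grammar $S \to \epsilon \mid a S b S$ for Dyck words over $\Sigma$, the rules would be
\[
A(\epsilon, \epsilon) \leftarrow, \qquad A(a\,x_1\,b\,x_2,\; \overbar{a}\,y_1\,\overbar{b}\,y_2) \leftarrow A(x_1, y_1),\; A(x_2, y_2)
\]
for each matched bracket pair $(a,b) \subseteq \Sigma \times \Sigma$, together with a start rule $S(xy) \leftarrow A(x, y)$. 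A routine induction on derivation length in both directions gives $A \Rightarrow^{\ast} (u, v)$ if and only if $u \in D$ and $v = h(u)$, so $S$ generates exactly $K$. The grammar has fan-out $2$ and every argument variable on the left appears exactly once on the right, so it is $2$-MCF (in fact linear). The harder half of (1), namely $K \notin \mathcal{E}\mathcal{T}\mathcal{O}\mathcal{L}$, is the theorem of Ehrenfeucht and Rozenberg cited in the statement, which I would invoke rather than reprove.

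For part (2), the ET0L grammar is the obvious one: $\mathcal{G} = (\{a\}, \{a\}, a, \{T\})$ with the single table $T = \{(a, aa)\}$. An induction on the number of applied tables gives that the only strings derivable from the start symbol $a$ are the $a^{2^n}$, and all of them appear, so $\mathrm{L}(\mathcal{G}) = L$. For the non-semilinearity claim, observe $\psi_{\{a\}}(L) = \{2^n \mid n \in \mathbb{N}\} \subseteq \mathbb{N}$. A non-trivial linear subset of $\mathbb{N}$ contains an arithmetic progression $\{p + k p_i \mid k \in \mathbb{N}\}$ for some $p_i > 0$, and a finite union of such sets (together with finitely many singletons) has consecutive gaps bounded by the least common multiple of its periods from some point on. Since the gaps $2^{n+1} - 2^n = 2^n$ in $\{2^n\}$ are unbounded, this set is not semilinear, so $L$ is not semilinear. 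Finally, $L \notin \mathcal{M}\mathcal{C}\mathcal{F}$ is then immediate from the Seki et al. result (quoted in the previous subsection) that every MCF language is semilinear.

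The only step that involves genuine content rather than citation is the correctness of the $2$-MCF grammar for $K$, and the verification is a short parallel induction on derivation length, with no essential obstacle. Everything else is either a one-line computation or a direct appeal to a theorem already recorded in the preliminaries.
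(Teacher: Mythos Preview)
Your proposal is correct; the differences from the paper are in presentation rather than substance.

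For part (1), the paper simply cites Theorem 8 of Nishida--Seki (2000) for the entire claim, whereas you write down an explicit \(2\)-MCF grammar for \(K\) and cite Ehrenfeucht--Rozenberg only for the non-ET0L half. Your construction is correct (the synchronised two-component grammar mirroring \(S \to \epsilon \mid aSbS\) does generate exactly the pairs \((w,h(w))\) with \(w \in D\), and has fan-out \(2\)), and making it explicit is arguably more informative than a bare citation. One small caution: check that the result you attribute to Ehrenfeucht--Rozenberg is precisely that \emph{this} \(K\) is not ET0L; the paper itself defers the whole of part (1) to Nishida--Seki, so the exact attribution may differ.

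For part (2), both proofs give the same one-table ET0L grammar. The paper then argues non-semilinearity via the Parikh-style characterisation (semilinear \(\Leftrightarrow\) letter-equivalent to regular), noting that over a unary alphabet this collapses to ``semilinear \(\Leftrightarrow\) regular'' and that \(L\) is visibly not regular. Your gap argument is a direct unpacking of the definition of semilinear subsets of \(\mathbb{N}\) and is equally valid, if slightly longer; the eventual-periodicity of one-dimensional semilinear sets is exactly what both arguments exploit. The final step (\(L \notin \mathcal{MCF}\) via Seki et al.'s semilinearity theorem) is identical in both.
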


\begin{proof}
The first part follows from Theorem 8 of \cite{Nishida-Seki00a}. For the second part, it is easy to see that \(\mathcal{G} = (\{a\}, \{a\}, a, \{\{(a, aa)\}\})\) is an ET0L grammar with \(\mathrm{L}(\mathcal{G}) = L\). Recall from Subsection \ref{subsec:hr} that a language is semilinear if and only if it is letter-equivalent to a regular language. Since \(L\) is a language on only one symbol, it must be semilinear if and only if it is a regular language, but clearly, it is not a regular language. But all MCF languages are semilinear, so it must be the case that \(L\) is not MCF.
\end{proof}

\section{New Results}

\subsection{Parallel Hyperedge Replacement} \label{subsec:phr}

We start by introducing parallel derivations and parallel hyperedge replacement grammars and languages, equivalent to those defined by Habel in Chapter VIII.3 of \cite{Habel92b}. The most fundamental notion to us is that of a parallel direct derivation, where every hyperedge is necessarily replaced.

In order to ensure progress can always be made, we are only interested in sets of rules that are tables:

\begin{definition}[Table]
Given \(\mathcal{C} = (\Sigma, \mathrm{type})\), a table \(T\) over \(\Sigma\) is a finite set of rules over \(\Sigma\) such that for each \(L \in \Sigma\) there is at least one \(R \in \mathcal{H}_{\mathcal{C}}\) with \((L, R) \in T\). Call \(T\) repetition-free (proper) if all its rules are repetition-free (proper).
\end{definition}

\begin{definition}[Parallel Direct Derivation]
Given \(\mathcal{C} = (\Sigma, \mathrm{type})\), \(H \in \mathcal{H}_{\mathcal{C}}\) with \(E_H = \{e_1, \dots e_n\}\), and \(T\) a table over \(\Sigma\), if for each \(e_i \in E_H\), there is a \(R_i \in \mathcal{H}_{\mathcal{C}}\) such that \((\mathrm{lab}_H(e_i), R_i) \in T\), then we say that \(H\) parallelly directly derives \(H' \cong H[e_1/R_1, \dots, e_n/R_n]\), and write \(H \Rrightarrow_{T} H'\).
\end{definition}

\begin{samepage}
\begin{definition}[Parallel Derivation]
Given \(\mathcal{C} = (\Sigma, \mathrm{type})\), \(H, H' \in \mathcal{H}_{\mathcal{C}}\), and a finite set of tables \(\mathcal{T} = \{T_i \mid i \in I\}\) over \(\Sigma\) indexed by \(I\), we say \(H\) parallelly derives \(H'\) if there is a sequence of parallel direct derivations \(H \Rrightarrow_{T_{{i_1}}} H_1 \Rrightarrow_{T_{i_2}} \cdots \Rrightarrow_{T_{i_k}} H_k = H'\) or \(H \cong H'\). We write \(H \Rrightarrow_{\mathcal{T}}^{i_1 i_2 \cdots i_k} H'\), \(H \Rrightarrow_{\mathcal{T}}^k H'\), or \(H \Rrightarrow_{\mathcal{T}}^{*} H'\). Call \(i_1 i_2 \cdots i_k \in I^*\) the trace of the derivation, defined to be \(\epsilon\) when \(H \cong H'\).
\end{definition}

Rather than replacing only non-terminals, as is usual in hyperedge replacement grammars, we allow all hyperedges to be replaced, and have a special set of terminal symbols to allow us to say when it is that a hypergraph is terminally labelled, just like ET0L grammars.

\begin{definition}[PHR Grammar]
A parallel hyperedge replacement grammar of order \(k\) (\(k\)-PHR grammar) is a tuple \(\mathcal{G} = (\mathcal{C}, A, S, \mathcal{T})\) where \(\mathcal{C} = (\Sigma, \mathrm{type})\) is a signature, \(A \subseteq \Sigma\) is the set of terminal labels, \(S \in \Sigma\) is the start symbol, and \(\mathcal{T} = \{T_i \mid i \in I\}\) is a non-empty, finite set of tables over \(\Sigma\) indexed by \(I\) with \(\mathrm{max}(\{\mathrm{type}(r) \mid r \in \bigcup_{T_i\in\mathcal{T}} T_i\}) \leq k\). We call \(\Sigma \setminus N\) the terminal labels and call \(\mathcal{G}\) repetition-free (proper) if all its tables are repetition-free (proper). The language generated by \(\mathcal{G}\) is \(\mathrm{L}(\mathcal{G}) = \{H \in \mathcal{H}_{\mathcal{C}} \mid S^{\bullet} \Rrightarrow_{\mathcal{T}}^{*} H \textrm{ with } \mathrm{lab}_H^{-1}(A) = E_H\} \subseteq \mathcal{H}_{\mathcal{C}}\).
\end{definition}

\begin{definition}[PHR Language]
\(L \subseteq \mathcal{H}_{\mathcal{C}}\) is called a (repetition-free) parallel hyperedge replacement language of order \(k\) ((repetition-free) \(k\)-PHR language) if there is a (repetition-free) \(k\)-PHR grammar \(\mathcal{G}\) such that \(\mathrm{L}(\mathcal{G}) = L\). The class of (repetition-free) PHR languages is the union of all (repetition-free) \(k\)-PHR languages for \(k \geq 0\). Denote these \(\mathcal{P}\mathcal{H}\mathcal{R}_k\) and \(\mathcal{P}\mathcal{H}\mathcal{R}\) (\(\mathcal{P}\mathcal{H}\mathcal{R}_k^{\mathrm{rf}}\) and \(\mathcal{P}\mathcal{H}\mathcal{R}^{\mathrm{rf}}\)).
\end{definition}
\end{samepage}

Just like languages generated by hyperedge replacement, parallel hyperedge replacement languages are closed under hypergraph isomorphism and are homogeneous in the sense that all hypergraphs in a language have the same type. Next, we confirm that PHR languages strictly contain the HR languages:

\begin{theorem}[PHR Generalises HR] \label{thm:phrgen}
For all \(k \geq 0\), \(\mathcal{H}\mathcal{R}_{k} \subsetneq \mathcal{P}\mathcal{H}\mathcal{R}_{k}\) and \(\mathcal{H}\mathcal{R}_{k}^{\mathrm{rf}} \subsetneq \mathcal{P}\mathcal{H}\mathcal{R}_{k}^{\mathrm{rf}}\).
\end{theorem}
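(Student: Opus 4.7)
The plan is to prove inclusion by simulating sequential HR derivations with parallel ones that ``freeze'' uninvolved hyperedges via handle-identity rules, and then to establish strictness by exhibiting, for each $k$, an explicitly exponential-growth witness in $\mathcal{P}\mathcal{H}\mathcal{R}_k$ that is excluded from $\mathcal{H}\mathcal{R}$ by Theorem \ref{thm:lingrowth}.

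For the inclusion, given a $k$-HR grammar $\mathcal{G} = (\mathcal{C}, N, S, \mathcal{R})$, I would define $\mathcal{G}' = (\mathcal{C}, \Sigma \setminus N, S, \mathcal{T})$ with one table $T_{(X,R)} = \{(X, R), (X, X^\bullet)\} \cup \{(Y, Y^\bullet) \mid Y \in \Sigma \setminus \{X\}\}$ per rule $(X, R) \in \mathcal{R}$. Each such table provides a rule for every label and respects the order $k$; since handles are trivially repetition-free, $\mathcal{G}'$ inherits repetition-freeness from $\mathcal{G}$. For $\mathrm{L}(\mathcal{G}) \subseteq \mathrm{L}(\mathcal{G}')$, I would mimic each single HR step $H \Rightarrow H[e/R]$ by one parallel step via $T_{(X,R)}$ that chooses $R$ at $e$ and the identity handle at every other hyperedge. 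Conversely, any parallel step via $T_{(X,R)}$ that genuinely replaces some subset $B$ of $X$-labelled hyperedges with $R$ and ``freezes'' everything else can be decomposed into $|B|$ individual HR applications of $(X, R)$, one per hyperedge of $B$, using the locality of hyperedge replacement. The accepting conditions $\mathrm{lab}_H^{-1}(N) = \emptyset$ and $\mathrm{lab}_H^{-1}(\Sigma \setminus N) = E_H$ coincide, so $\mathrm{L}(\mathcal{G}) = \mathrm{L}(\mathcal{G}')$.

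For strictness, I would pick a non-terminal $S$ and a terminal $a$, both of type $k$, and take $D$ to be the hypergraph with external nodes $v_1 \cdots v_k$ and two $S$-hyperedges each having attachment $v_1 \cdots v_k$ (for $k = 0$, $D$ is simply two isolated $S$-hyperedges). With tables $T_d = \{(S, D), (a, a^\bullet)\}$ and $T_t = \{(S, a^\bullet), (a, a^\bullet)\}$, an easy induction shows the resulting $k$-PHR grammar derives exactly the family $\{H_n \mid n \in \mathbb{N}\}$, where $H_n$ has $v_1, \ldots, v_k$ as its only nodes and $2^n$ parallel $a$-hyperedges each attached to $v_1 \cdots v_k$. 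Any infinite subsequence then has $c = 0$ and edge counts confined to the powers of two, which cannot form an arithmetic progression with positive common difference; Theorem \ref{thm:lingrowth} therefore excludes the language from $\mathcal{H}\mathcal{R}$, and hence from $\mathcal{H}\mathcal{R}_k$. The grammar is repetition-free by construction (external nodes distinct, hyperedge attachments without repetition), giving the strict inclusion in both the general and repetition-free settings simultaneously.

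The main obstacle I expect is the reverse direction of the inclusion: verifying that a single parallel step in $\mathcal{G}'$ faithfully corresponds to a sequence of HR rewrites in $\mathcal{G}$. One must simultaneously account for per-hyperedge choices (genuine replacement by $R$ versus preservation via the handle $X^\bullet$), observe that identity handle-replacement produces a graph isomorphic to the original so that ``frozen'' hyperedges end up in the same positions up to isomorphism, and invoke the commutativity of disjoint HR rewrites modulo isomorphism to reorder them as needed. This bookkeeping is routine but is the only part that needs to be spelled out by induction on derivation length.
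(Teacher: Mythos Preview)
Your proof is correct and follows essentially the same approach as the paper: simulate sequential HR steps by padding with identity handle rules, and witness strictness via a doubling grammar whose exponential edge growth contradicts Theorem~\ref{thm:lingrowth}. The only differences are cosmetic: the paper bundles all rules (plus identities) into a \emph{single} table rather than one table per rule, and for strictness it uses a single type-$0$ example (one label $\square$, one rule $\square \mapsto \square^{\bullet} \sqcup \square^{\bullet}$) that already lies in $\mathcal{P}\mathcal{H}\mathcal{R}_0^{\mathrm{rf}} \subseteq \mathcal{P}\mathcal{H}\mathcal{R}_k^{\mathrm{rf}}$ for every $k$, rather than a separate type-$k$ witness for each $k$.
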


\begin{proof}
Suppose \(\mathcal{C} = (\Sigma, \mathrm{type})\) and \(\mathcal{G} = (\mathcal{C}, N, S, \{r_1, \dots, r_n\})\) is a (repetition-free) \(k\)-HR grammar. Then we construct a (repetition-free) \(k\)-PHR grammar \(\mathcal{G}' = (\mathcal{C}, A, S, \mathcal{T})\) with \(A = \Sigma \setminus N\) and \(\mathcal{T} = \{T_1\}\) where \(T_1 = \{r_1, \dots, r_n\} \cup \{(X, X^{\bullet}) \mid X \in \Sigma\}\). Clearly every parallel direct derivation with start hypergraph \(G\) can be decomposed into at most \(\abs{E_G}\) direct derivations where if an edge is replaced by itself, we omit it, and if a genuine replacement from \(\mathcal{R}\) occurs, we use that. So, by induction on derivation length, we see that every parallel derivation in \(\mathcal{G}'\) can actually be written as a derivation in \(\mathcal{G}\). Similarly, every direct derivation in \(\mathcal{G}\) can be lifted to a parallel derivation in \(\mathcal{G}'\) by replacing all but one edge by itself. So, by induction on derivation length, we see that every derivation in \(\mathcal{G}\) can be written as a parallel derivation in \(\mathcal{G}'\). Thus, together with the fact that the terminal symbols and start symbol coincide, \(\mathrm{L}(\mathcal{G}) = \mathrm{L}(\mathcal{G}')\).

To see strictness, we are inspired by the fact that the string language \(\{a^{2^n} \mid n \in \mathbb{N}\}\) is ET0L but not MCF (Theorem \ref{thm:incompar}). We will show there is a repetition-free \(0\)-PHR language that is not \(k\)-HR for any \(k \geq 0\). Let \(\mathcal{G} = (\mathcal{C}, A, S, \mathcal{T})\) be the repetition-free \(0\)-PHR grammar with \(\mathcal{C} = (\{\square\},\{(\square, 0)\})\), \(A = \{\square\}\), \(S = \square\), and \(\mathcal{T} = \{T_1\}\) where \(T_1 = \{(\square, \square^{\bullet} \sqcup \square^{\bullet})\}\) (\(\sqcup\) denotes disjoint union of hypergraphs). Clearly \(\mathrm{L}(\mathcal{G})\) is the language of hypergraphs over \(\mathcal{C}\) with \(2^n\) hyperedges. By Theorem \ref{thm:lingrowth}, \(\mathrm{L}(\mathcal{G})\) is not HR.
\end{proof}

\begin{corollary}
PHR languages need not have only linear growth, in the sense of Theorem \ref{thm:lingrowth}.
\end{corollary}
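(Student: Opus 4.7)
The plan is to recycle the repetition-free \(0\)-PHR grammar \(\mathcal{G} = (\mathcal{C}, A, S, \{T_1\})\) constructed in the proof of Theorem \ref{thm:phrgen}, where \(\mathcal{C} = (\{\square\}, \{(\square, 0)\})\) and \(T_1 = \{(\square, \square^{\bullet} \sqcup \square^{\bullet})\}\). As noted there, a parallel derivation of length \(n\) from \(S^{\bullet} = \square^{\bullet}\) produces, up to isomorphism, the disjoint union of \(2^n\) copies of \(\square^{\bullet}\), so \(\mathrm{L}(\mathcal{G})\) is the infinite, isomorphism-closed class of all hypergraphs over \(\mathcal{C}\) with no nodes and exactly \(2^n\) hyperedges for some \(n \in \mathbb{N}\). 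I will show this language fails the conclusion of Theorem \ref{thm:lingrowth}.

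Assume for contradiction that there is an infinite sequence \(H_0, H_1, H_2, \ldots \in \mathrm{L}(\mathcal{G})\) and constants \(c, d \in \mathbb{N}\) with \(c + d \geq 1\) such that \(\abs{V_{H_{i+1}}} = \abs{V_{H_i}} + c\) and \(\abs{E_{H_{i+1}}} = \abs{E_{H_i}} + d\) for every \(i \in \mathbb{N}\). Since every hypergraph in \(\mathrm{L}(\mathcal{G})\) is nodeless, we immediately get \(c = 0\) and hence \(d \geq 1\). Writing \(\abs{E_{H_i}} = 2^{n_i}\), the hypothesis says that \((2^{n_i})_{i \in \mathbb{N}}\) is an arithmetic progression of common difference \(d \geq 1\); in particular the exponents \(n_i\) are strictly increasing and hence tend to infinity.

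The only real step is then to observe that the set of powers of two contains no infinite arithmetic progression of positive common difference. Indeed, \(2^{n_{i+1}} - 2^{n_i} \geq 2^{n_i+1} - 2^{n_i} = 2^{n_i}\), so the supposed constant gap would satisfy \(d \geq 2^{n_i} \to \infty\), which is absurd. Hence no such sequence can exist, so \(\mathrm{L}(\mathcal{G})\) is an infinite PHR language failing linear growth, proving the corollary. I do not foresee any real obstacle beyond this elementary arithmetic observation, which is really the whole content of the claim.
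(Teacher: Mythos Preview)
Your proposal is correct and follows exactly the paper's own route: the corollary is stated immediately after Theorem \ref{thm:phrgen} with no separate proof, since the strictness part of that theorem already exhibits the repetition-free \(0\)-PHR grammar with \(\mathrm{L}(\mathcal{G})\) consisting of hypergraphs with \(2^n\) hyperedges and invokes Theorem \ref{thm:lingrowth} to conclude it is not HR. You have simply made explicit the elementary arithmetic fact---that powers of two contain no infinite arithmetic progression of positive difference---which the paper leaves implicit when appealing to Theorem \ref{thm:lingrowth}.
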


Figure \ref{fig:egder} shows an example derivation using the grammar from the proof of Theorem \ref{thm:phrgen}.

\begin{figure}[!ht]
\centering
\scalebox{0.75}{
\begin{tikzpicture}[every node/.style={align=center}]
    \draw (-0.625,-0.5) -- (-0.625,0.5) -- (0.625,0.5) -- (0.625,-0.5) -- cycle [fill=black!5];
    \node (X) at (0.0,0.0) [draw, rectangle, minimum size=5mm] {$\square$};

    \node (X) at (1.0,0.0) [minimum size=5mm] {$\Rrightarrow$};

    \draw (1.375,-1.0) -- (1.375,1.0) -- (2.625,1.0) -- (2.625,-1.0) -- cycle [fill=black!5];
    \node (X) at (2.0,0.5) [draw, rectangle, minimum size=5mm] {$\square$};
    \node (X) at (2.0,-0.5) [draw, rectangle, minimum size=5mm] {$\square$};

    \node (X) at (3.0,0.0) [minimum size=5mm] {$\Rrightarrow$};

    \draw (3.375,-1.0) -- (3.375,1.0) -- (5.625,1.0) -- (5.625,-1.0) -- cycle [fill=black!5];
    \node (X) at (4.0,0.5) [draw, rectangle, minimum size=5mm] {$\square$};
    \node (X) at (4.0,-0.5) [draw, rectangle, minimum size=5mm] {$\square$};
    \node (X) at (5.0,0.5) [draw, rectangle, minimum size=5mm] {$\square$};
    \node (X) at (5.0,-0.5) [draw, rectangle, minimum size=5mm] {$\square$};

    \node (X) at (6.0,0.0) [minimum size=5mm] {$\Rrightarrow$};

    \draw (6.375,-1.0) -- (6.375,1.0) -- (10.625,1.0) -- (10.625,-1.0) -- cycle [fill=black!5];
    \node (X) at (7.0,0.5) [draw, rectangle, minimum size=5mm] {$\square$};
    \node (X) at (7.0,-0.5) [draw, rectangle, minimum size=5mm] {$\square$};
    \node (X) at (8.0,0.5) [draw, rectangle, minimum size=5mm] {$\square$};
    \node (X) at (8.0,-0.5) [draw, rectangle, minimum size=5mm] {$\square$};
    \node (X) at (9.0,0.5) [draw, rectangle, minimum size=5mm] {$\square$};
    \node (X) at (9.0,-0.5) [draw, rectangle, minimum size=5mm] {$\square$};
    \node (X) at (10.0,0.5) [draw, rectangle, minimum size=5mm] {$\square$};
    \node (X) at (10.0,-0.5) [draw, rectangle, minimum size=5mm] {$\square$};
\end{tikzpicture}}
\caption{Example parallel derivation}
\label{fig:egder}
\end{figure}
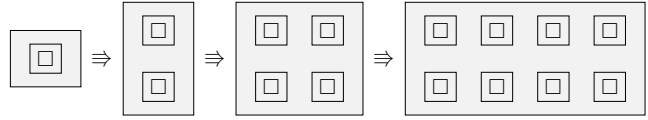

\subsection{Rational Control of Traces}

It is often convenient to restrict the sequences of allowed traces when defining a language using a PHR grammar, leading to better readability of grammars and possibly shorter proofs. A popular choice in L systems is so-called rational control, and was considered for ET0L in 1975 by Nielsen \cite{Nielsen75a} and later by Asveld \cite{Asveld77a}. We will make precise a notion of rational control for PHR grammars, and show that generational power actually remains the same because we can always encode the rational control.

\begin{definition}[Controlled Parallel Derivation]
Given \(\mathcal{C} = (\Sigma, \mathrm{type})\), \(H, H' \in \mathcal{H}_{\mathcal{C}}\), a finite set of tables \(\mathcal{T} = \{T_i \mid i \in I\}\) over \(\Sigma\) indexed by \(I\), and \(\mathcal{M}\) an FSA over \(I\), we say \(H\) (\(\mathcal{M}\)-)parallelly derives \(H'\) if \(H\) parallelly derives \(H'\) with trace \(i_1 i_2 \cdots i_k \in \mathrm{L}(\mathcal{M})\). We write \(H \Rrightarrow_{\mathcal{T}}^{i_1 i_2 \cdots i_k} H'\), \(H \Rrightarrow_{\mathcal{T}}^k H'\), or \(H \Rrightarrow_{\mathcal{T}}^{\mathcal{M}} H'\).
\end{definition}

\begin{definition}[PHR Grammar with Control]
A (repetition-free) parallel hyperedge replacement grammar with control of order \(k\) ((repetition-free) \(k\)-PHR grammar with control) is a tuple \(\mathcal{G} = (\mathcal{C}, A, S, \mathcal{T}, \mathcal{M})\) where \((\mathcal{C}, A, S, \mathcal{T})\) is a (repetition-free) \(k\)-PHR grammar (called the underlying grammar) with \(\mathcal{T}\) indexed by \(I\), and \(\mathcal{M}\) is an FSA over \(I\) (called the rational control). The generated language is \(\mathrm{L}(\mathcal{G}) = \{H \in \mathcal{H}_{\mathcal{C}} \mid S^{\bullet} \Rrightarrow_{\mathcal{T}}^{\mathcal{M}} H \textrm{ with } \mathrm{lab}_H^{-1}(A) = E_H\} \subseteq \mathcal{H}_{\mathcal{C}}\).
\end{definition}

\begin{theorem}[PHR Grammar Control Removal] \label{thm:controlrem}
Given a (repetition-free) \(k\)-PHR grammar with control \(\mathcal{G}\), one can effectively construct a (repetition-free) \(k\)-PHR grammar \(\mathcal{G}'\) such that \(\mathrm{L}(\mathcal{G}) = \mathrm{L}(\mathcal{G}')\).
\end{theorem}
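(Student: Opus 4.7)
The plan is to internalise the control automaton $\mathcal{M}$ into hyperedge labels, exploiting the fact that a single parallel direct derivation applies one table uniformly to every hyperedge. First I replace $\mathcal{M}$ by an equivalent total DFA $(Q, I, \delta, q_0, F)$; this preserves $\mathrm{L}(\mathcal{M})$ and at worst adds one non-accepting trap state. I then build $\mathcal{G}' = (\mathcal{C}', A, (S, q_0), \mathcal{T}')$ over the enlarged signature $\Sigma' = A \cup (\Sigma \times Q) \cup \{\top_n\}$, where each $\top_n$ is a fresh non-terminal ``trap'' label of type $n$ for each type $n$ occurring in $\mathcal{C}$; types are inherited on $A$, and $\mathrm{type}'((L, q)) = \mathrm{type}(L)$.

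For each $i \in I$, the simulation table $T'_i$ contains: for every $(L, R) \in T_i$ and every $q \in Q$, the rule $((L, q), R[\delta(q, i)])$, where $R[q']$ denotes $R$ with each hyperedge label $L''$ relabelled to $(L'', q')$; together with identity rules $(a, a^{\bullet})$ for each $a \in A$ and $(\top_n, \top_n^{\bullet})$ for each trap. A single additional stop table $T'_{\mathrm{end}}$ collapses state: for each $(L, q)$ it uses $((L, q), L^{\bullet})$ when $L \in A$ and $q \in F$, and $((L, q), \top_{\mathrm{type}(L)}^{\bullet})$ otherwise, plus the same identity rules. Every right-hand side is either a handle or a relabelling of an original RHS, so $\mathcal{G}'$ has the same order as $\mathcal{G}$ and inherits repetition-freeness.

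For correctness I induct on derivation length. Any pure simulation derivation $(S, q_0)^{\bullet} \Rrightarrow_{T'_{i_1}} \cdots \Rrightarrow_{T'_{i_k}} H_k$ keeps every hyperedge in the common state $\delta^{\ast}(q_0, i_1 \cdots i_k)$, and forgetting the state component reproduces exactly the hypergraph obtained in $\mathcal{G}$ by the trace $i_1 \cdots i_k$. Trap labels are absorbing non-terminals, and both kinds of table act as the identity on hypergraphs without stateful edges, so every terminal hypergraph of $\mathcal{G}'$ must arise from some such $H_k$ followed by one application of $T'_{\mathrm{end}}$ introducing no trap; this occurs exactly when every edge of $H_k$ carries a label $(L_e, q)$ with $L_e \in A$ and $q \in F$, i.e.\ when $i_1 \cdots i_k \in \mathrm{L}(\mathcal{M})$ and $H_k$ with state erased lies in $\mathrm{L}(\mathcal{G})$. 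Since replacing an edge by the handle of its label is the identity up to isomorphism, the two languages coincide.

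The main obstacle I anticipate is edgeless right-hand sides: a rule whose RHS has no edges produces a vacuously terminal hypergraph without ever invoking $T'_{\mathrm{end}}$, so naively the construction admits hypergraphs derived by unaccepted traces. I will handle this by introducing a fresh type-$0$ state-tracking sentinel family $\{\#_q \mid q \in Q\}$ and, in the construction of $T'_i$, replacing each rule $((L, q), R[\delta(q, i)])$ whose $R$ is edgeless by $((L, q), R \sqcup \#_{\delta(q, i)}^{\bullet})$; the sentinel transitions via $(\#_q, \#_{\delta(q, i)}^{\bullet})$ in every simulation table and is collapsed by $T'_{\mathrm{end}}$ to the empty type-$0$ hypergraph when $q \in F$, and to $\top_0^{\bullet}$ otherwise. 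This forces every derivation to pass through $T'_{\mathrm{end}}$ in an accepting state in order to reach a terminal hypergraph, completing the equivalence $\mathrm{L}(\mathcal{G}') = \mathrm{L}(\mathcal{G})$.
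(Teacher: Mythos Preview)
Your construction is correct, and it follows a genuinely different route from the paper's. The paper tracks the automaton state in a \emph{single} disjoint type-$0$ hyperedge: it introduces a fresh start symbol $S'$, and a combined start/stop table $T'_0$ first rewrites $S'^{\bullet}$ to $\overbar{S^{\bullet}} \sqcup i^{\bullet}$ (the original start hypergraph with terminals barred, together with a separate edge carrying the initial state), while the simulation tables both mimic the original tables on the barred alphabet and advance the state edge via $(q, \delta(q,j)^{\bullet})$; a second use of $T'_0$ then erases an accepting state edge, unbars terminals, and sends everything else to failure symbols. Your approach instead distributes the state across \emph{every} hyperedge label via the product $\Sigma \times Q$, and separates ``running'' from ``stopped'' by the presence of the state component rather than by a barred copy of $A$.

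The trade-off is clear. The paper's single state edge is always present, so the edgeless-RHS issue you identified simply does not arise; on the other hand the paper must introduce the barred-terminal machinery to prevent premature termination. Your encoding avoids barred terminals entirely, at the cost of needing the sentinel family $\{\#_q\}$ to keep the state alive through edgeless replacements. Both constructions preserve order and repetition-freeness for the same reasons (all new right-hand sides are handles, relabellings of original right-hand sides, or disjoint unions with a type-$0$ handle), and both yield the same language equivalence.
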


\begin{proof}
Let \(\mathcal{G} = ((\Sigma, \mathrm{type}), A, S, \{T_1, \dots T_l\}, \mathcal{M})\). Without loss of generality, we can assume that \(\mathcal{M} = (Q, \underline{n}, \delta, i, F))\) is deterministic and full, and \(Q \cap \Sigma = \emptyset\). We construct the (repetition-free) \(k\)-PHR grammar \(\mathcal{G}' = ((\Sigma', \mathrm{type}'), A, S', \mathcal{T}')\). First, make a disjoint (from \(\Sigma_1\)) copy of \(A\), \(\overbar{A}\), and to each \(X \in A\), associate a unique \(\overbar{X} \in \overbar{A}\). Moreover, given a hypergraph \(H\), denote by \(\overbar{H}\) the same hypergraph but with its labelling function composed with the function that sends \(X \in A\) to \(\overbar{X}\) and leaves everything else in place. Next, choose some additional fresh symbols: \(S'\), \(F_0, \dots, F_k\). Using these, we define \(\Sigma' = \{S'\} \cup Q \cup \Sigma \cup \overbar{A} \cup \{F_0, \dots, F_l\}\) where \(\mathrm{type}(S') = \mathrm{type}(S)\), \(\mathrm{type}(q) = 0\) for all \(q \in Q\), \(\mathrm{type}'(X) = \mathrm{type}(X)\) for all \(X \in \Sigma\), \(\mathrm{type}'(\overbar{X}) = \mathrm{type}(X)\) for all \(X \in A\), and \(\mathrm{type}'(F_j) = j\) for all \(j \in \underline{k}\).

Finally, let \(\mathcal{T}' = \{T'_{0}, T'_1, \dots T'_l\}\) where \(T'_0 = \{(S', \overbar{S^{\bullet}} \sqcup i^{\bullet})\} \cup \{(q, \emptyset) \mid q \in F\} \cup \{(\overbar{X}, X^{\bullet}) \mid X \in A\} \cup \{(X, F_{\mathrm{type}'(X)}) \mid X \in (Q \setminus F) \cup \Sigma \cup \{F_0, \dots, F_k\}\}\) and for each \(j \in \underline{l}\), \(T'_j = \{(\overbar{L}, \overbar{R}) \mid (L, R) \in T_j \land L \in A\} \cup \{(L, \overbar{R}) \mid (L, R) \in T_j \land L \not\in A\} \cup \{(q, \delta(q, j)^{\bullet}) \mid q \in Q\} \cup \{(X, X^{\bullet}) \mid X \in \{S'\} \cup A \cup \{F_0, \dots F_k\}\}\).

We can see that the purpose of table \(0\) is to start and stop the derivation process, with the rest of the tables simulating the original system, while also simulating the automaton. So, if \(S^{\bullet} \Rrightarrow_{\mathcal{T}}^w H\) is a derivation in \(\mathcal{G}\) with \(H\) terminally labelled and \(w \in \mathrm{L}(\mathcal{M})\), then there is a corresponding derivation \(S'^{\bullet} \Rrightarrow_{\mathcal{T}}^{0w0} H\) in \(\mathcal{G}'\). That is \(\mathrm{L}(\mathcal{G}) \subseteq \mathrm{L}(\mathcal{G}')\). To see the reverse inclusion, we analyse all derivations of the form \(S'^{\bullet} \Rrightarrow_{\mathcal{T}}^{w} H\). If \(w\) does not contain \(0\) at least twice, then \(H\) is necessarily not terminally labelled. So, useful derivations must have trace \(x0y0z\) where \(x, y \in \{1, \dots l\}^*\), \(z \in \{0, \dots l\}^*\). Clearly if \(S'^{\bullet} \Rrightarrow_{\mathcal{T}}^{x} H'\), then \(H' \cong S'^{\bullet}\), so we can assume \(x = \epsilon\). Similarly, if \(S'^{\bullet} \Rrightarrow_{\mathcal{T}}^{0y0} H'\) and \(H'\Rrightarrow_{\mathcal{T}}^{z} H\), then either \(H' \cong H\) if \(H\) was terminally labelled, and so we could assume \(z = \epsilon\), or \(H'\) is labelled by at least non-terminal which has no terminally labelled successor hypergraph, and so it doesn't matter what \(H\) is. Finally, we analyse \(y\). If \(y \in \mathrm{L}(\mathcal{M})\), then we proceed as in the analysis of the other direction of inclusion. If \(y \not\in \mathrm{L}(\mathcal{M})\), then the final step sends the type zero symbol tracking the machine state to \(F_0\) which forces all successors of the hypergraph to be not terminally labelled.
\end{proof}

Thus, the hypergraph languages that can be generated by \(k\)-PHR grammars are exactly those that can be generated by \(k\)-PHR grammars with control, since certainly no control can be simulated.

\subsection{PHRS Languages}

We now turn our attention to string languages. We believe the class of parallel hyperedge replacement string languages is a genuinely new class of languages, containing all multiple context-free and ET0L languages. It is not simply equal to the (parallel) multiple context-free languages because these are known to be incomparable with ET0L \cite{Nishida-Seki00a}. Recall that the hyperedge replacement string languages are exactly the multiple context-free languages. In this subsection, we confirm that parallel hyperedge replacement string languages contain all of these and also all of the ET0L languages.

\begin{definition}[PHR String Language]
A string language \(L \subseteq A^*\) is called a (repetition-free) parallel hyperedge replacement string language of order \(k\) ((repetition-free) \(k\)-PHRS language) if there is a (repetition-free) \(k\)-PHR grammar \(\mathcal{G}\) such that \(\mathcal{G}\) generates a string graph language and \(\mathrm{STR}(\mathrm{L}(\mathcal{G})) = L \setminus \{\epsilon\}\). The class of (repetition-free) PHRS languages is the union of all (repetition-free) \(k\)-PHRS languages for \(k \geq 2\). Denote these \(\mathcal{P}\mathcal{H}\mathcal{R}\mathcal{S}_k\) and \(\mathcal{P}\mathcal{H}\mathcal{R}\mathcal{S}\) (\(\mathcal{P}\mathcal{H}\mathcal{R}\mathcal{S}_k^{\mathrm{rf}}\) and \(\mathcal{P}\mathcal{H}\mathcal{R}\mathcal{S}^{\mathrm{rf}}\)).
\end{definition}

Notice that we exclude the case \(k < 2\), since \(\mathcal{P}\mathcal{H}\mathcal{R}\mathcal{S}_0^{\mathrm{rf}} = \mathcal{P}\mathcal{H}\mathcal{R}\mathcal{S}_0 = \mathcal{P}\mathcal{H}\mathcal{R}\mathcal{S}_1^{\mathrm{rf}} = \mathcal{P}\mathcal{H}\mathcal{R}\mathcal{S}_1 = \{\emptyset, \{\epsilon\}\}\). It is clear that \(\mathcal{P}\mathcal{H}\mathcal{R}\mathcal{S}_k^{\mathrm{rf}} \subseteq \mathcal{P}\mathcal{H}\mathcal{R}\mathcal{S}_k\) for \(k \geq 2\), and that we can iteratively compute the unreachable symbols in a similar way as for context-free grammars (see Section 7.1 of \cite{Hopcroft-Motwani-Ullman06a}):

\begin{definition}[Unreachable Symbol]
Given a PHR grammar over \((\Sigma, \mathrm{type})\), \(X \in \Sigma\) is called unreachable if there is no derivation starting at \(S^{\bullet}\)  containing a hypergraph with a hyperedge labelled by \(X\).
\end{definition}

\begin{lemma} \label{lem:unreachable}
For \(k \geq 2\), given a (repetition-free) \(k\)-PHR grammar \(\mathcal{G}\), one can effectively construct a (repetition-free) \(k\)-PHR grammar \(\mathcal{G}'\) with no unreachable symbols and \(\mathrm{L}(\mathcal{G}) = \mathrm{L}(\mathcal{G}')\).
\end{lemma}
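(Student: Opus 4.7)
The plan is to adapt the standard unreachable-symbol elimination procedure for context-free grammars, with extra care to maintain the totality requirement for tables. First, I would compute the set of reachable labels $K \subseteq \Sigma$ as the least set containing $S$ such that whenever $X \in K$ and $(X, R) \in T$ for some $T \in \mathcal{T}$, every label on a hyperedge of $R$ also lies in $K$. Since $\Sigma$ and the rule set are finite, iterating $K_0 = \{S\}$ and $K_{i+1} = K_i \cup \{\mathrm{lab}_R(e) \mid e \in E_R,\, X \in K_i,\, T \in \mathcal{T},\, (X, R) \in T\}$ stabilizes in finitely many steps, yielding $K$ effectively.

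Next I would argue that $K$ coincides with the set of labels occurring on some edge of a hypergraph derivable from $S^{\bullet}$. One direction is a straightforward induction on derivation length using the closure defining $K$. The other direction is an induction on the stage $i$ at which a label enters $K$: to witness the reachability of $X \in K_{i+1} \setminus K_i$, take a derivation exhibiting the predecessor label $Y \in K_i$ together with a rule $(Y, R) \in T$ such that $X$ labels an edge of $R$, then extend it by applying $T$ to that hypergraph, using the rule $(Y, R)$ for the $Y$-edge and arbitrary $T$-rules (which exist by totality) for every other edge.

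Then I would construct $\mathcal{G}' = ((K, \mathrm{type}'), A \cap K, S, \mathcal{T}')$, where $\mathrm{type}'$ is the restriction of $\mathrm{type}$ to $K$ and $\mathcal{T}' = \{T' \mid T \in \mathcal{T}\}$ with $T' = \{(L, R) \in T \mid L \in K\}$. The crucial verification is that each $T'$ is a valid table over the restricted signature: for any $X \in K$, totality of $T$ yields some $(X, R) \in T$, and the closure property of $K$ guarantees every label in $R$ lies in $K$, so $R$ is a hypergraph over $(K, \mathrm{type}')$ and $(X, R) \in T'$. The repetition-free (and proper) property transfers trivially, since we only discard rules.

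Finally, $\mathrm{L}(\mathcal{G}) = \mathrm{L}(\mathcal{G}')$ follows in both directions: every $\mathcal{G}$-derivation from $S^{\bullet}$ uses only labels in $K$ and is therefore also a $\mathcal{G}'$-derivation, while every $\mathcal{G}'$-derivation is a $\mathcal{G}$-derivation because $T' \subseteq T$ rule-wise for each $T$; the terminal-label conditions agree because $\mathrm{lab}_H^{-1}(A) = \mathrm{lab}_H^{-1}(A \cap K)$ for hypergraphs $H$ over the restricted signature. The main subtlety, and likely the only conceptual obstacle, is formulating reachability as closure under RHS-membership rather than a weaker LHS-only notion; once this is set up, preservation of table totality is automatic and the rest of the argument is routine.
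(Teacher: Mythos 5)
Your proposal is correct and is exactly the argument the paper has in mind: the paper gives no explicit proof, merely remarking that unreachable symbols can be computed iteratively as for context-free grammars (citing Section 7.1 of Hopcroft--Motwani--Ullman), and your fixed-point computation of the reachable set $K$, the verification that restricting each table to left-hand sides in $K$ preserves totality (via closure of $K$ under right-hand-side labels), and the two-way simulation of derivations is the standard fleshing-out of that remark. The one point you rightly single out --- that table totality over the restricted signature is automatic from the closure property defining $K$ --- is indeed the only PHR-specific detail beyond the classical construction.
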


The following lemma is also clear, using Lemma \ref{lem:unreachable}, enabling us to prove Theorem \ref{thm:phrset0l}:

\begin{lemma} \label{lem:type2}
For \(k \geq 2\), given a \(k\)-PHR grammar \(\mathcal{G}\) generating a string graph language, one can effectively construct a proper \(k\)-PHR grammar \(\mathcal{G}'\) such that there are no unreachable symbols, all terminals are type \(2\), all non-terminals are type at least \(2\), and \(\mathrm{L}(\mathcal{G}) = \mathrm{L}(\mathcal{G}')\).
\end{lemma}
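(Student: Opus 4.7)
The plan is to obtain $\mathcal{G}'$ from $\mathcal{G}$ by a sequence of language-preserving transformations. First apply Lemma~\ref{lem:unreachable} to delete unreachable symbols. Then enforce the type-$2$ terminal condition: since terminally labelled hypergraphs in $\mathrm{L}(\mathcal{G})$ are string graphs and hence use only type-$2$ hyperedges, no terminal of any other type can occur in such a hypergraph, so each terminal of type $\neq 2$ may be reclassified as a non-terminal (removed from $A$) without changing the set of terminally labelled outputs.

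The substantive step is the elimination of non-terminals of type $0$ and $1$. The key observation is that parallel replacement never deletes nodes and a string graph of a word of length $n$ has exactly $n+1$ nodes, so in any derivation reaching a terminally labelled string graph: a type-$0$ non-terminal sub-derivation must collapse to the empty hypergraph (its contribution consists of disjoint components, which cannot be absorbed into a connected path), and a type-$1$ non-terminal sub-derivation can only contribute a path fragment attached at the single external node (branches at a valence-$2$ path node would exceed the path structure). The plan is to promote each low-type non-terminal to a type-$2$ surrogate absorbing this collapse/contribution: for type-$0$ we encode the empty-collapse directly; for type-$1$ we introduce a fresh external node at the far end of the contributed path fragment. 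Because parallel semantics rigidly forces all hyperedges to be rewritten simultaneously, synchronising each surrogate's schedule with the main-graph derivation requires first passing through a PHR grammar with rational control over tables, which is then discharged via Theorem~\ref{thm:controlrem}.

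Finally, enforce properness. For each rule $(L,R)$ whose RHS has a hyperedge $e$ with repeated attachment, either $\mathrm{lab}_R(e)$ is terminal (hence type $2$, so $e$ is a self-loop and cannot appear in a string graph; drop the rule) or it is a non-terminal $Y$ of type $m$, in which case introduce a fresh non-terminal $Y^{\sim}$ of the smaller type $m/{\sim}$ determined by $e$'s identification pattern, copy every rule for $Y$ into a rule for $Y^{\sim}$ with external nodes pre-merged according to $\sim$, and replace $e$ by a proper hyperedge labelled $Y^{\sim}$; iterate to a fixed point, which terminates because only finitely many identification patterns exist per type. This may re-introduce type-$1$ non-terminals (when $\sim$ has exactly one equivalence class), so a final clean-up pass of the low-type elimination step may be needed. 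The hard part throughout is the type-$1$ elimination: unlike type-$0$ sub-derivations which collapse cleanly to empty, type-$1$ sub-derivations can genuinely contribute to the output path, so a naive deletion would lose language and the promotion to a type-$2$ surrogate must carefully expose the far end of the contributed path via a new external node while preserving the parallel-derivation synchronisation; the detour through rational control is what makes this possible.
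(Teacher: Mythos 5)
First, a point of comparison: the paper does not actually prove this lemma --- it is asserted as ``clear'' immediately after Lemma~\ref{lem:unreachable} --- so your proposal is necessarily more detailed than anything in the text. Your opening moves (removing unreachable symbols, then reclassifying every terminal of type $\neq 2$ as a non-terminal, justified by the fact that a terminally labelled reachable hypergraph is a string graph and hence carries only type-$2$ hyperedges) are sound, and the quotient construction $Y \mapsto Y^{\sim}$ for improperly attached non-terminal hyperedges is the right idea and terminates for the reason you give.

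There are, however, two genuine gaps. First, dropping every rule whose right-hand side contains a terminal self-loop is unsound: in a PHR grammar \emph{terminal-labelled} hyperedges are also rewritten at every parallel step, so a terminal self-loop can be erased later (for instance by a rule whose right-hand side is a single node with both external nodes equal); as long as the intermediate hypergraphs carrying the self-loop are not terminally labelled, the string-graph-language hypothesis is not violated, and such a rule can be essential to the generated language. Second, your treatment of type-$1$ non-terminals rests on a false premise and is under-specified where it matters. Since the external nodes of $H[\sigma]$ are inherited from $H$, the two endpoints of the final string graph are already host nodes; the contribution of a type-$1$ sub-derivation is attached to the rest of the graph at a single (cut) node, so any contributed hyperedge would have to lie on the path between the two endpoints while that path enters and leaves the contributed component through the same vertex --- impossible. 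Hence type-$1$ contributions, like type-$0$ ones, are necessarily \emph{empty}, and ``the far end of the contributed path fragment'' is not an object a rule can refer to. What these low-type hyperedges genuinely do is \emph{block} derivations, which is why they cannot simply be deleted; the clean fix is to pad them to proper type-$2$ hyperedges (second attachment point an arbitrary distinct node of the host right-hand side, second external node of each of their rules a fresh isolated node), which keeps them inside the graph so that parallel synchronisation is automatic and no detour through rational control is needed. Indeed, rational control could not simulate them after deletion, since unboundedly many such tokens can be alive in independent states, which a finite automaton over table indices cannot track. You would also need to handle the degenerate case where the host right-hand side has too few nodes to supply a distinct second attachment point.
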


\begin{theorem}[PHRS Generalises ET0L] \label{thm:phrset0l}
For all \(k \geq 2\), \(\mathcal{E}\mathcal{T}\mathcal{O}\mathcal{L} \subseteq \mathcal{P}\mathcal{H}\mathcal{R}\mathcal{S}_k^{\mathrm{rf}}\). When \(k \geq 4\), \(\mathcal{E}\mathcal{T}\mathcal{O}\mathcal{L} \subsetneq \mathcal{P}\mathcal{H}\mathcal{R}\mathcal{S}_k^{\mathrm{rf}}\). Moreover, \(\mathcal{E}\mathcal{T}\mathcal{O}\mathcal{L} = \mathcal{P}\mathcal{H}\mathcal{R}\mathcal{S}_2^{\mathrm{rf}} = \mathcal{P}\mathcal{H}\mathcal{R}\mathcal{S}_2\).
\end{theorem}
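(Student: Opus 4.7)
The plan is to prove the three claims separately. For the basic inclusion $\mathcal{E}\mathcal{T}\mathcal{O}\mathcal{L} \subseteq \mathcal{P}\mathcal{H}\mathcal{R}\mathcal{S}_k^{\mathrm{rf}}$ it suffices to treat the case $k=2$, since $\mathcal{P}\mathcal{H}\mathcal{R}\mathcal{S}_2^{\mathrm{rf}} \subseteq \mathcal{P}\mathcal{H}\mathcal{R}\mathcal{S}_k^{\mathrm{rf}}$ for $k \geq 2$. Given an ET0L grammar $\mathcal{G} = (\Sigma, A, S, \mathcal{T})$, I would first apply Theorem~\ref{thm:emptyrhs} to replace it by a propagating equivalent (the empty string being absorbed by the $L \setminus \{\epsilon\}$ clause in the definition of PHRS). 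Then build a $2$-PHR grammar $\mathcal{G}'$ over the signature in which every symbol has type $2$, with the same start symbol and terminals, and with tables obtained by turning every ET0L rule $(a, w)$ into the PHR rule $(a, w^{\bullet})$. Each $w^{\bullet}$ is a repetition-free string graph, so $\mathcal{G}'$ is repetition-free. A routine induction on derivation length, using the fact that concatenating string graphs at shared external nodes corresponds exactly to string concatenation, yields $\mathrm{STR}(\mathrm{L}(\mathcal{G}')) = \mathrm{L}(\mathcal{G}) \setminus \{\epsilon\}$.

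For the equality $\mathcal{P}\mathcal{H}\mathcal{R}\mathcal{S}_2 = \mathcal{E}\mathcal{T}\mathcal{O}\mathcal{L}$, the non-trivial direction is $\mathcal{P}\mathcal{H}\mathcal{R}\mathcal{S}_2 \subseteq \mathcal{E}\mathcal{T}\mathcal{O}\mathcal{L}$. Given a $2$-PHR grammar $\mathcal{G}$ generating a string-graph language, Lemma~\ref{lem:type2} yields an equivalent proper $2$-PHR grammar $\mathcal{G}'$ in which every symbol is type $2$ and every symbol is reachable. The main obstacle is a structural lemma: every right-hand side $R$ appearing in a table of $\mathcal{G}'$ must itself be (isomorphic to) a string graph over $\Sigma$. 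To establish this, I would combine three facts: derivations never delete nodes; in a proper PHR grammar the attachment of every reachable hyperedge is injective; and each eventually-terminal derivation ends in a simple path. One then argues that a branching node, a cycle, or a disconnected component in some reachable $R$ would be preserved through any further sequence of parallel replacements (since such parallel steps can only graft new material onto the existing node set without deleting nodes), contradicting the assumption that every complete derivation yields a string graph. Once the structural lemma is in hand, the $2$-PHR grammar translates rule-for-rule into an ET0L grammar by reading each $(a, w^{\bullet})$ as the ET0L rule $(a, w)$, and each PHR table as an ET0L table.

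Finally, for the strict inclusion $\mathcal{E}\mathcal{T}\mathcal{O}\mathcal{L} \subsetneq \mathcal{P}\mathcal{H}\mathcal{R}\mathcal{S}_k^{\mathrm{rf}}$ when $k \geq 4$, observe that by Theorem~\ref{thm:mcfequiv} we have $\mathcal{M}\mathcal{C}\mathcal{F}_2 = \mathcal{H}\mathcal{R}\mathcal{S}_4^{\mathrm{rf}}$, and Theorem~\ref{thm:phrgen} gives $\mathcal{H}\mathcal{R}\mathcal{S}_4^{\mathrm{rf}} \subseteq \mathcal{P}\mathcal{H}\mathcal{R}\mathcal{S}_4^{\mathrm{rf}}$. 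The language $K$ from Theorem~\ref{thm:incompar}(1) is $2$-MCF but is not ET0L, so $K \in \mathcal{P}\mathcal{H}\mathcal{R}\mathcal{S}_4^{\mathrm{rf}} \setminus \mathcal{E}\mathcal{T}\mathcal{O}\mathcal{L}$, witnessing strictness for $k = 4$ and hence for all $k \geq 4$. The real work therefore lies in the structural analysis underpinning $\mathcal{P}\mathcal{H}\mathcal{R}\mathcal{S}_2 \subseteq \mathcal{E}\mathcal{T}\mathcal{O}\mathcal{L}$; the other two parts reduce to bookkeeping and citations of earlier results.
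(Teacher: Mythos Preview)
Your approach matches the paper's almost line for line: the same use of Theorem~\ref{thm:emptyrhs} and the encoding $(a,w)\mapsto(a,w^\bullet)$ for the forward inclusion, the same appeal to Lemma~\ref{lem:type2} followed by a structural reduction to ET0L for the reverse inclusion at $k=2$, and the same witness $K$ from Theorem~\ref{thm:incompar}(1) via Theorems~\ref{thm:mcfequiv} and~\ref{thm:phrgen} for strictness when $k\geq 4$.

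One caveat on your structural lemma. The claim ``every right-hand side in $\mathcal{G}'$ must be a string graph'' is stronger than what actually holds and stronger than what the paper asserts. Even after Lemma~\ref{lem:type2}, a table may contain a rule with a non-string-graph RHS that is simply never chosen in any terminally labelled derivation; reachability of the LHS symbol does not make every rule for that symbol useful. More importantly, for $\mathcal{P}\mathcal{H}\mathcal{R}\mathcal{S}_2$ (as opposed to $\mathcal{P}\mathcal{H}\mathcal{R}\mathcal{S}_2^{\mathrm{rf}}$) the grammar produced by Lemma~\ref{lem:type2} is proper but need not be repetition-free, so rules may merge nodes; your preservation argument (``graft new material onto the existing node set without deleting nodes'') does not by itself rule out RHSs whose non-conformant parts get collapsed later by such merges. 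The paper phrases the claim accordingly as ``all RHSs must actually be string graphs, \emph{or could be transformed to string graphs}, since any non-conformant pieces can just be inlined \ldots because it will ultimately be deleted and the nodes merged,'' which is the escape hatch you will need rather than the clean structural statement you propose.
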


\begin{proof}
First we show \(\mathcal{E}\mathcal{T}\mathcal{O}\mathcal{L} \subseteq \mathcal{P}\mathcal{H}\mathcal{R}\mathcal{S}_k^{\mathrm{rf}}\) for all \(k \geq 2\). Suppose \(L\) is an ET0L language, then by Theorem \ref{thm:emptyrhs}, there exists a propagating ET0L grammar \(\mathcal{G} = (\Sigma, A, S, \{T_i \mid i \in I\})\) such that \(L \setminus \{\epsilon\} = \mathrm{L}(\mathcal{G})\). It follows that every rule can be encoded as an HR rule over \(\mathcal{C}' = (\Sigma, \Sigma \times \{2\})\) giving us a repetition-free \(2\)-PHR grammar \(\mathcal{G}' = (\mathcal{C}', A, S, \{\{(L, R^{\bullet}) \mid (L, R) \in T_i\} \mid i \in I\})\) with \(\mathrm{L}(\mathcal{G}) = \mathrm{STR}(\mathrm{L}(\mathcal{G}'))\).

Next, we show that \(\mathcal{P}\mathcal{H}\mathcal{R}\mathcal{S}_2 \subseteq \mathcal{E}\mathcal{T}\mathcal{O}\mathcal{L}\). Suppose \(L\) is a \(2\)-PHRS language, then there is a \(2\)-PHR grammar \(\mathcal{G} = (\mathcal{C}, A, S, \{T_i \mid i \in I\})\) generating a string graph language such that \(L \setminus \{\epsilon\} = \mathrm{STR}(\mathrm{L}(\mathcal{G}))\). Lemma \ref{lem:type2} allows us to assume a lot about the form of RHSs of rules. It is easy to see that all RHSs must actually be string graphs, or could be transformed to string graphs, since any non-conformant pieces can just be inlined into the string graph because it will ultimately be deleted and the nodes merged in any terminally labelled derived hypergraph. So the system can be converted into an ET0L grammar. Thus, we have \(\mathcal{P}\mathcal{H}\mathcal{R}\mathcal{S}_{2} \subseteq \mathcal{E}\mathcal{T}\mathcal{O}\mathcal{L} \subseteq \mathcal{P}\mathcal{H}\mathcal{R}\mathcal{S}_{2}^{\mathrm{rf}}\) and \(\mathcal{P}\mathcal{H}\mathcal{R}\mathcal{S}_{2}^{\mathrm{rf}} \subseteq \mathcal{P}\mathcal{H}\mathcal{R}\mathcal{S}_{2}\), so the inclusions must be equalities.

Finally, strictness follows from Theorems \ref{thm:mcfequiv}, \ref{thm:incompar}, and \ref{thm:phrgen}. That is, we can construct a repetition-free \(4\)-PHR grammar \(\mathcal{G}'\) generating a string graph language with \(\mathrm{STR}(\mathrm{L}(\mathcal{G}')) = K \setminus \{\epsilon\}\) (from Theorem \ref{thm:incompar}) which is not ET0L.
\end{proof}

\begin{corollary}
There are repetition-free \(2\)-PHRS languages that are not semilinear.
\end{corollary}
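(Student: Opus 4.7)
The plan is to exhibit an explicit witness language that is repetition-free \(2\)-PHRS yet not semilinear, by combining two results already available in the paper. The obvious candidate is \(L = \{a^{2^n} \mid n \in \mathbb{N}\}\), which Theorem \ref{thm:incompar}(2) establishes is both an ET0L language and not semilinear (indeed, a one-letter language is semilinear iff regular, and \(L\) is plainly not regular).

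The main step is then to invoke Theorem \ref{thm:phrset0l}, which gives the inclusion \(\mathcal{E}\mathcal{T}\mathcal{O}\mathcal{L} = \mathcal{P}\mathcal{H}\mathcal{R}\mathcal{S}_2^{\mathrm{rf}}\). Applying this to \(L\) shows that \(L \in \mathcal{P}\mathcal{H}\mathcal{R}\mathcal{S}_2^{\mathrm{rf}}\), and together with non-semilinearity this yields the corollary. No additional construction is required beyond unfolding these two citations.

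For concreteness it is worth noting, although not strictly necessary, that a witness grammar can be written down directly: the ET0L grammar \(\mathcal{G} = (\{a\}, \{a\}, a, \{\{(a, aa)\}\})\) from the proof of Theorem \ref{thm:incompar} is already in the simple form that the first half of the proof of Theorem \ref{thm:phrset0l} converts into a repetition-free \(2\)-PHR grammar: over the signature \(\mathcal{C} = (\{a\}, \{(a, 2)\})\), the single table \(\{(a, (aa)^{\bullet})\}\) over the terminal alphabet \(\{a\}\) with start symbol \(a\) generates exactly the string graphs representing words in \(L\). There is no genuine obstacle here; the only thing to check is that the cited constructions indeed preserve the language when restricted to string graphs, which is precisely the content of Theorem \ref{thm:phrset0l}. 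Hence the corollary follows immediately, and in fact shows that non-semilinear behaviour already appears at the lowest nontrivial order \(k = 2\).
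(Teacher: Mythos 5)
Your argument is correct and is exactly the paper's proof, merely unfolded: the paper likewise takes the non-semilinear ET0L language \(\{a^{2^n} \mid n \in \mathbb{N}\}\) from Theorem \ref{thm:incompar} and places it in \(\mathcal{P}\mathcal{H}\mathcal{R}\mathcal{S}_2^{\mathrm{rf}}\) via Theorem \ref{thm:phrset0l}. The explicit witness grammar you add is a harmless (and clarifying) elaboration, not a different route.
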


\begin{proof}
Theorem \ref{thm:phrset0l} gives us a \(2\)-PHRS language which is not semilinear by Theorem \ref{thm:incompar}.
\end{proof}

\begin{theorem}[PHRS Generalises MCF] \label{thm:hrsmcf}
For all \(k \geq 2\), \(\mathcal{H}\mathcal{R}\mathcal{S}_k^{\mathrm{rf}} \subsetneq \mathcal{P}\mathcal{H}\mathcal{R}\mathcal{S}_k^{\mathrm{rf}}\).
\end{theorem}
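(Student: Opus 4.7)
The plan is to prove the inclusion and the strictness separately, exploiting that both facts can be obtained essentially for free from results already established in the paper.

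For the inclusion $\mathcal{HRS}_k^{\mathrm{rf}} \subseteq \mathcal{PHRS}_k^{\mathrm{rf}}$, I would simply invoke the construction from the proof of Theorem \ref{thm:phrgen}. Given a repetition-free $k$-HR grammar $\mathcal{G} = (\mathcal{C}, N, S, \mathcal{R})$ that generates a string graph language, form the $k$-PHR grammar $\mathcal{G}' = (\mathcal{C}, \Sigma \setminus N, S, \{T_1\})$ whose single table is $T_1 = \mathcal{R} \cup \{(X, X^{\bullet}) \mid X \in \Sigma\}$. The identity rules $(X, X^{\bullet})$ are repetition-free because the handle $X^{\bullet}$ has distinct external nodes by definition, so $\mathcal{G}'$ is a repetition-free $k$-PHR grammar. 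Theorem \ref{thm:phrgen} already shows $\mathrm{L}(\mathcal{G}) = \mathrm{L}(\mathcal{G}')$ at the level of hypergraph languages, and since $\mathrm{L}(\mathcal{G})$ consists only of string graphs, so does $\mathrm{L}(\mathcal{G}')$. Hence $\mathrm{STR}(\mathrm{L}(\mathcal{G})) = \mathrm{STR}(\mathrm{L}(\mathcal{G}'))$ and the containing string language class is as claimed.

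For strictness, I would exhibit a witness language sitting in $\mathcal{PHRS}_k^{\mathrm{rf}}$ but outside $\mathcal{HRS}_k^{\mathrm{rf}}$ for every $k \geq 2$. The natural choice is $L = \{a^{2^n} \mid n \in \mathbb{N}\}$. By Theorem \ref{thm:incompar}(2), $L$ is ET0L but not semilinear, and hence not MCF. By Theorem \ref{thm:mcfequiv}, $\mathcal{HRS}_k^{\mathrm{rf}} \subseteq \mathcal{MCF}$ for every $k$, so $L \notin \mathcal{HRS}_k^{\mathrm{rf}}$ for any $k \geq 2$. On the other hand, Theorem \ref{thm:phrset0l} gives $\mathcal{ETOL} \subseteq \mathcal{PHRS}_2^{\mathrm{rf}}$, and allowing additional unused labels of higher arity yields $\mathcal{PHRS}_2^{\mathrm{rf}} \subseteq \mathcal{PHRS}_k^{\mathrm{rf}}$ for all $k \geq 2$, so $L \in \mathcal{PHRS}_k^{\mathrm{rf}}$ for every such $k$.

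The only thing to be slightly careful about is the monotonicity remark $\mathcal{PHRS}_k^{\mathrm{rf}} \subseteq \mathcal{PHRS}_{k+1}^{\mathrm{rf}}$; this is immediate because a $k$-PHR grammar is, by definition, a $(k+1)$-PHR grammar with the bound on label arity relaxed. There is no substantive obstacle anywhere in the argument: the hard conceptual work has already been done in Theorems \ref{thm:phrgen}, \ref{thm:mcfequiv}, \ref{thm:incompar}, and \ref{thm:phrset0l}, and this theorem is really just a corollary that packages them together. If there is any subtle point to watch, it is ensuring the identity-rule augmentation in the $\mathcal{HRS} \hookrightarrow \mathcal{PHRS}$ direction does not accidentally enlarge the language when a derivation is ``stalled'' by applying identities forever, which is handled by the fact that the terminal condition $\mathrm{lab}_H^{-1}(A) = E_H$ still requires eventually reaching a terminally labelled hypergraph.
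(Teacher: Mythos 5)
Your proposal is correct and follows essentially the same route as the paper, which likewise obtains the inclusion from Theorem \ref{thm:mcfequiv} together with Theorem \ref{thm:phrgen} and its construction, and strictness from Theorems \ref{thm:incompar} and \ref{thm:phrset0l}. You merely spell out details the paper leaves implicit (repetition-freeness of the identity rules, the monotonicity \(\mathcal{P}\mathcal{H}\mathcal{R}\mathcal{S}_2^{\mathrm{rf}} \subseteq \mathcal{P}\mathcal{H}\mathcal{R}\mathcal{S}_k^{\mathrm{rf}}\), and the choice of \(\{a^{2^n} \mid n \in \mathbb{N}\}\) as the explicit witness), all of which are sound.
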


\begin{proof}
Inclusion from Theorem \ref{thm:mcfequiv}, and Theorem \ref{thm:phrgen} and its proof. We get strictness from Theorem \ref{thm:incompar} together with Theorem \ref{thm:phrset0l}.
\end{proof}

\subsection{Formal Language Closure Properties} \label{subsec:closure}

Recall that a full AFL is a non-empty class of string languages closed under rational operations (union, concatenation, Kleene plus), rational intersection, homomorphisms, and inverse homomorphisms. In this subsection, we show that the class of PHRS languages is an (iterated) substitution closed full AFL, and that the class of repetition-free PHRS languages is closed under non-erasing (iterated) substitution with closure under rational operations and non-erasing homomorphisms following from this as a corollary.

\begin{theorem}[PHRS Closed Under Substitutions] \label{thm:x}
Let \(L \subseteq A^*\) be a \(k\)-PHRS language (repetition-free \(k\)-PHRS language) and \(h\) be a \(k\)-PHRS substitution (non-erasing repetition-free \(k\)-PHRS substitution) on \(A\). Then \(h(L)\) and \(\bigcup_{n \in \mathbb{N}} h^n(L)\) are \(k\)-PHRS languages (repetition-free \(k\)-PHRS languages).
\end{theorem}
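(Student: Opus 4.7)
The plan is to build a $k$-PHR grammar with control for each of $h(L)$ and $\bigcup_{n \in \mathbb{N}} h^n(L)$ and then strip the control via Theorem \ref{thm:controlrem}. Let $\mathcal{G} = (\mathcal{C}, A, S, \mathcal{T})$ generate $L \setminus \{\epsilon\}$ as a string graph language, and for each $a \in A$ let $\mathcal{G}_a = (\mathcal{C}_a, A, S_a, \mathcal{T}_a)$ generate $h(a) \setminus \{\epsilon\}$ as a string graph language. By Lemma \ref{lem:type2} I may assume all terminal labels are type~$2$. I then rename non-terminals so the non-terminal alphabets of $\mathcal{G}$ and of the $\mathcal{G}_a$ are pairwise disjoint, and I rewrite each $\mathcal{G}_a$ so that every rule with a terminal left-hand side is the identity rule $(a, a^{\bullet})$: hat every terminal $a$ appearing on the right-hand side of a non-identity rule by a fresh non-terminal $\hat a$ and add a single cleanup table sending each $\hat a$ to $a^{\bullet}$. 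None of these normalisations change the order or break repetition-freeness.

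Next I construct the combined grammar $\mathcal{G}'$. Its alphabet is the disjoint union of the alphabets of $\mathcal{G}$ and of all the $\mathcal{G}_a$, its start symbol is $S$, and its terminal alphabet is $A$. There are three families of tables: (i) \emph{generation tables}, one per $T_i \in \mathcal{T}$, extending $T_i$ by identity rules on every other symbol; (ii) a single \emph{handover} table $T_{\text{sub}}$ with rules $(a, S_a^{\bullet})$ for every $a \in A$ plus identity rules on all non-terminals; (iii) \emph{product tables} $T_{\vec\jmath} = \bigcup_{a \in A} T_{a, j_a}$, one per tuple $\vec\jmath = (j_a)_{a \in A} \in \prod_a I_a$, where $I_a$ indexes $\mathcal{T}_a$. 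Well-definedness of the union follows from disjointness of the non-terminal alphabets together with the fact that each $T_{a, j_a}$ now uses the same identity rules on $A$. For $h(L)$ the rational control is $I^{*} \cdot \{\text{sub}\} \cdot \bigl(\prod_a I_a\bigr)^{*}$; for $\bigcup_{n \in \mathbb{N}} h^n(L)$ it is $I^{*} \cdot \bigl(\{\text{sub}\} \cdot \bigl(\prod_a I_a\bigr)^{*}\bigr)^{*}$, which permits the handover-plus-expansion cycle to be repeated any number of times.

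A routine induction on derivation length then shows that any accepting derivation in $\mathcal{G}'$ proceeds in phases: generation of a terminal string graph $w^{\bullet}$ with $w \in L$, a single handover step which replaces every $a$-edge by the handle $S_a^{\bullet}$, and a product phase which runs all of the $\mathcal{G}_a$-subderivations independently and in parallel until each yields a string graph of some $u_a \in h(a)$; gluing these substituted string graphs along the shared attachment nodes realises an arbitrary element of $h(w)$, and every element of $h(L)$ arises this way. The iterated control simply loops this cycle, giving $\bigcup_n h^n(L)$. Applying Theorem \ref{thm:controlrem} turns $\mathcal{G}'$ into a $k$-PHR grammar, preserving repetition-freeness whenever all inputs were repetition-free.

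The main obstacle is verifying that the product tables genuinely simulate independent parallel subderivations, given that the $\mathcal{G}_a$-subderivations generally finish at different times. This is precisely what the identity-on-terminals normalisation delivers: once a given subderivation has reached a terminal string graph, it idles under every further product table, so the others may carry on to completion. Repetition-freeness is preserved rule-by-rule (identity handles and disjoint unions of repetition-free right-hand sides are repetition-free), and the non-erasure hypothesis guarantees that no substituted string graph is the empty handle, so substitution never merges the two attachment nodes of a terminal $2$-edge. The arity bound $k$ is never exceeded, since no new types are introduced.
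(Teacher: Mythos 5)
Your overall architecture (a generation phase, a handover table, a product phase, with the rational control stripped afterwards via Theorem~\ref{thm:controlrem}) closely mirrors the paper's construction, and using control in place of the paper's explicit failure symbols $F_0,\dots,F_k$ is a legitimate variation. However, there is a genuine gap in the product phase. Your product tables are indexed by tuples $(j_a)_{a\in A}$ with one component \emph{per letter}, not per occurrence, so every subderivation descending from an occurrence of the same letter $a$ is forced to follow the same table sequence of $\mathcal{T}_a$; distinct occurrences may pick different rules within each table, but never different tables. Your identity-on-terminals normalisation only rescues the case where one subderivation's trace is a prefix of another's (one finishes early and idles). It does not handle occurrences whose target strings require incompatible traces. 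Concretely, take $L=\{aa\}$ and let $\mathcal{G}_a$ have start symbol $S_a$, non-terminals $X,Y$, terminals $b,c,d,e$, and tables $T_1=\{(S_a,X^{\bullet}),(X,b^{\bullet}),(Y,c^{\bullet})\}$ and $T_2=\{(S_a,Y^{\bullet}),(X,d^{\bullet}),(Y,e^{\bullet})\}$, completed by identity rules on terminals. Then $h(a)=\{b,c,d,e\}$, so $dc\in h(aa)$; but $d$ is reachable from $S_a$ only by traces beginning $1\,2$ and $c$ only by traces beginning $2\,1$, so no single trace serves both occurrences and your grammar cannot derive $dc$.

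The missing idea is exactly the one the paper supplies: every table simulating $\mathcal{G}_a$ must also contain the explicit idle rule $(S_a,S_a^{\bullet})$ (the paper's $\{(\overbar{S_i},\overbar{S_i}^{\bullet})\}$), so that each instance of the encoded start symbol may \emph{delay its start} independently. With delayed start the subderivations can be serialised: one instance runs its full trace and decodes to idling terminals while the others wait at $S_a$, after which the next instance runs its own trace; this restores total freedom over the per-occurrence derivations. (Two smaller points: your handover and product tables must also carry identity rules on the non-terminals of $\mathcal{G}$ and on the alphabets of the other $\mathcal{G}_{a'}$ in order to be left-total, and your hatted terminals must idle under the non-cleanup tables.) With the idle-at-start rule added your argument goes through and is then essentially the paper's proof.
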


\begin{proof}
There is a (repetition-free) \(k\)-PHR grammar \(\mathcal{G} = (\mathcal{C}, A, S, \mathcal{T})\) such that \(A = \{a_1, \dots, a_m\}\) and \(\mathrm{L}(\mathcal{G})\) is a string graph language, and \(\mathrm{STR}(\mathrm{L}(\mathcal{G}))= L \setminus \{\epsilon\}\). Similarly for each \(i \in \underline{m}\), there is a (repetition-free) \(k\)-PHR grammar \(\mathcal{G}_i = (\mathcal{C}_i, B, S_i, \mathcal{T}_i)\) such that \(\mathrm{L}(\mathcal{G})\) is a string graph language and \(\mathrm{STR}(\mathrm{L}(\mathcal{G}_i)) = h(a_i)\). Without loss of generality, we assume both that each \(S_i\) does not appear as a label in any RHS apart from possibly a rule \((S_i, S_i^{\bullet})\) and that the following sets are pairwise disjoint: \(\Sigma, B, \Sigma_1 \setminus B, \dots, \Sigma_m \setminus B\). For each \(i \in \underline{m}\), let \(\overbar{\Sigma_i}\) be a copy of \(\Sigma_i\) consisting of fresh symbols, and identify each \(x \in \Sigma_i\) with its copy \(\overbar{x} \in \overbar{\Sigma_i}\). Given a hypergraph \(H\), by \(\overbar{H}\) we mean \(H\) but with its labelled function composed with the function which takes any \(x \in \Sigma_i\) to \(\overbar{x}\) and leaves everything else fixed.

We now construct the (repetition-free) \(k\)-PHR grammar \(\mathcal{G}' = (\mathcal{C}', B, \mathcal{T}', S)\) such that \(\mathrm{L}(\mathcal{G}')\) is a string graph language and \(\mathrm{STR}(\mathrm{L}(\mathcal{G}')) = h(L) \setminus \{\epsilon\}\). Let \(\mathcal{C}' = (\Sigma', \mathrm{type}')\) be the union of all the above signatures also including the disjoint copies, where copies are of the same type as their original symbols, together with the fresh symbols \(F_0, \dots F_k\) of type \(0, \dots, k\), respectively. Finally, let \(\mathcal{R} = \{(X, X^{\bullet}) \mid X \in \Sigma'\}\), \(\mathcal{F} = \{(X, F_{\mathrm{type}'(X)}^{\bullet}) \mid X \in \Sigma'\}\), and \(\mathcal{T}' = \bigcup_{0 \leq i \leq m} \mathcal{T}'_i\), where \(\mathcal{T}'_0 = \{\mathcal{R} \oplus T \mid T \in \mathcal{T}\} \cup \{\mathcal{F} \oplus \{(a_i, \overbar{S_i}^{\bullet}) \mid i \in \underline{m}\}\}\), and for each \(i \in \underline{m}\) let \(\mathcal{T}_i = \{\mathcal{R} \oplus (\{(\overbar{L}, \overbar{R}) \mid (L, R) \in T\} \cup \{(\overbar{S_i}, \overbar{S_i}^{\bullet})\}) \mid T \in \mathcal{T}_i\} \cup \{\mathcal{R} \oplus (\{(\overbar{X}, X^{\bullet}) \mid X \in B\} \cup \{(\overbar{X}, F_{\mathrm{type}'(X)}^{\bullet}) \mid X \in \Sigma_i \setminus (B \cup \{\overbar{S_i}\})\})\}\).

One can see that derivations that make progress start with \(S^{\bullet}\) then apply tables from the first part of \(\mathcal{T}'_0\), simulating \(\mathcal{G}\). At some point, the final table of \(\mathcal{T}'_0\) may be applied, which immediately rewrites all the terminals to encoded start symbols for their respective grammars for substitution and sends all non-terminals to failure non-terminals. If a hypergraph contains any failure non-terminals at this point, non terminally labelled hypergraph can be derived in future. Derivations can now simulate the \(\mathcal{G}_i\) totally independently, with choice of delaying start, giving total freedom over the simulated derivation sequences for each instance of the encoded start symbol. Finally, the encoded systems can end their simulation at any point by sending their encoded terminals to real terminals in \(B\) and their encoded non-terminals to failure non-terminals. It is now clear that \(\mathrm{STR}(\mathrm{L}(\mathcal{G}')) = h(L) \setminus \{\epsilon\}\), as required.

Showing \(\bigcup_{n \in \mathbb{N}} h^n(L)\) is similar, modifying the above proof, adding another table which can be used to send all terminals to restart the process and all non-terminals to a failure symbol.
\end{proof}

\begin{corollary}[PHRS Closed Under Homomorphisms]
Let \(L \subseteq A^*\) be a \(k\)-PHRS language (repetition-free \(k\)-PHRS language) for any \(k \geq 2\) and \(\varphi: A^* \to B^*\) be a homomorphism (non-erasing homomorphism). Then \(\varphi(L)\) is a \(k\)-PHRS language (repetition-free \(k\)-PHRS language).
\end{corollary}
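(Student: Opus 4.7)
The plan is to realise $\varphi$ as a substitution and invoke Theorem \ref{thm:x}. Define $h: A \to \mathcal{P}(B^*)$ by $h(a) = \{\varphi(a)\}$ for each $a \in A$, and extend to words and languages in the usual way. Then $h(L) = \varphi(L)$ by construction, so it suffices to show that $h$ is a (non-erasing repetition-free) $k$-PHRS substitution in the appropriate sense.

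First I would verify that each singleton language $\{\varphi(a)\} \subseteq B^*$ is a (repetition-free) $k$-PHRS language for every $k \geq 2$. If $\varphi(a) = w$ is non-empty, then $\{w\}$ is trivially generated by a repetition-free $2$-HR grammar whose sole terminal rule is $(S, w^{\bullet})$; by Theorem \ref{thm:mcfequiv} this is also a (repetition-free) $k$-HRS language for all $k \geq 2$, and by Theorem \ref{thm:phrgen} and its proof, every repetition-free $k$-HR grammar is effectively a repetition-free $k$-PHR grammar. If instead $\varphi(a) = \epsilon$, we can take a $k$-PHR grammar with no derivation producing a terminally labelled hypergraph, so $\mathrm{STR}(\mathrm{L}(\mathcal{G})) = \emptyset$ and hence $\{\epsilon\}$ is a $k$-PHRS language (this case does not arise in the non-erasing setting).

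In the non-erasing case, $\varphi(a) \neq \epsilon$ for every $a \in A$, so each $h(a) = \{\varphi(a)\}$ is a non-empty singleton language, realised by a repetition-free $k$-PHR grammar as above; hence $h$ is a non-erasing repetition-free $k$-PHRS substitution. Applying the repetition-free part of Theorem \ref{thm:x} to $L$ yields $\varphi(L) = h(L)$ as a repetition-free $k$-PHRS language. In the general case, $h$ is a $k$-PHRS substitution (possibly erasing), and applying the unrestricted part of Theorem \ref{thm:x} gives $\varphi(L) = h(L)$ as a $k$-PHRS language.

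There is essentially no obstacle once the right viewpoint is adopted; the only thing worth checking carefully is the edge case where $\varphi$ erases letters, which is precisely the reason the corollary's repetition-free statement requires non-erasingness, matching the corresponding hypothesis of Theorem \ref{thm:x}.
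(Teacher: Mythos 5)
Your proof is correct and is exactly the argument the paper intends: the corollary is stated without proof precisely because a homomorphism is the special case of a substitution in which each letter maps to a singleton language, and your verification that each \(\{\varphi(a)\}\) is a (repetition-free) \(k\)-PHRS language (including the \(\varphi(a)=\epsilon\) edge case, handled correctly via the \(L\setminus\{\epsilon\}\) convention in the definition of PHRS languages) is the only content needed before invoking Theorem \ref{thm:x}. No gaps.
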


Next, we show closure under rational operations, which can be seen via the following general result:

\begin{lemma} \label{lem:gen}
Let \(\mathcal{F}\) be a class of string languages containing all regular languages, which is closed under non-erasing substitution. Let \(L_1, L_2 \subseteq A_1^*\) be \(\mathcal{F}\) languages. Then:
\begin{enumerate}
\item \(L_1 \cup L_2\) is an \(\mathcal{F}\) language; \tabto{8cm} (closure under union)
\item \(L_1 L_2\) is an \(\mathcal{F}\) language;      \tabto{8cm} (closure under concatenation)
\item \(L_1^+\) is an \(\mathcal{F}\) language.        \tabto{8cm} (closure under Kleene plus)
\end{enumerate}
\end{lemma}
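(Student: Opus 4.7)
The plan is to realize each of the three operations as the image of a small fixed regular template language under a non-erasing substitution into the $L_i$. Since $\mathcal{F}$ contains all regular languages by hypothesis and is closed under non-erasing substitution, both ingredients land inside $\mathcal{F}$, and so does the result. Throughout, pick fresh symbols $a, b \notin A_1$; the regular languages below sit over a disjoint alphabet, so substitution produces languages over $A_1$.

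For part (1), I would take the regular language $R_\cup = \{a, b\}$ over $\{a, b\}$ and the substitution $\sigma_\cup$ defined by $\sigma_\cup(a) = L_1$ and $\sigma_\cup(b) = L_2$. Then $\sigma_\cup(R_\cup) = \sigma_\cup(a) \cup \sigma_\cup(b) = L_1 \cup L_2$. For part (2), I would reuse $\sigma_\cup$ but apply it to the singleton regular language $R_\cdot = \{ab\}$, giving $\sigma_\cup(R_\cdot) = L_1 L_2$. For part (3), I would take the regular language $R_+ = \{a\}^+ = a\{a\}^*$ over $\{a\}$ and the substitution $\sigma_+$ defined by $\sigma_+(a) = L_1$; then $\sigma_+(R_+) = \bigcup_{n \geq 1} L_1^n = L_1^+$.

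Each of $R_\cup, R_\cdot, R_+$ is regular and so belongs to $\mathcal{F}$; each of $\sigma_\cup, \sigma_+$ is an $\mathcal{F}$-substitution since its image languages are the given $\mathcal{F}$-languages $L_1, L_2$. The only thing to check is that the substitutions are \emph{non-erasing}, meaning no letter is sent to a language containing $\epsilon$. The main obstacle, and indeed the only subtlety in the proof, is therefore handling the case where some $L_i$ happens to contain $\epsilon$. In the intended application to $\mathcal{P}\mathcal{H}\mathcal{R}\mathcal{S}^{\mathrm{rf}}$ this is harmless: a PHRS language is defined modulo $\epsilon$ (the grammar always generates $L \setminus \{\epsilon\}$), so we may first replace each $L_i$ by $L_i \setminus \{\epsilon\}$, run the constructions above, and adjoin $\epsilon$ back to the result if and only if the operation would have produced it (e.g.\ for union, if $\epsilon \in L_1 \cup L_2$; for concatenation, if $\epsilon \in L_1 \cap L_2$; for Kleene plus, if $\epsilon \in L_1$). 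Under this standard AFL convention, the three constructions above witness the required closures.
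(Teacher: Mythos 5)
Your proposal is correct and takes essentially the same approach as the paper: both realise each operation as the image of a small regular template ($\{X,Y\}$ for union, $\{XY\}$ for concatenation, $X^+$ for Kleene plus) under a non-erasing substitution sending the fresh letters to $L_1 \setminus \{\epsilon\}$ and $L_2 \setminus \{\epsilon\}$. The only divergence is the treatment of $\epsilon$: you strip it, run the construction, and adjoin it back afterwards by appeal to the PHRS ``modulo $\epsilon$'' convention, whereas the paper stays entirely within the lemma's abstract hypotheses by folding $\epsilon$ into the regular template itself (e.g.\ taking $K = \{X, Y, \epsilon\}$ when $\epsilon \in L_1 \cup L_2$), so that the result is literally $h(K)$ for a regular $K$ and a non-erasing substitution $h$. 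Your version is easily repaired the same way --- adjoining $\epsilon$ to an $\epsilon$-free $M \in \mathcal{F}$ is itself the image of the regular language $\{a, \epsilon\}$ under $a \mapsto M$ --- which you should do if you want the lemma to hold for an arbitrary such class $\mathcal{F}$ rather than only for $\mathcal{P}\mathcal{H}\mathcal{R}\mathcal{S}$.
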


\begin{proof}
To see (1), notice that \(L_1 \cup L_2\) is simply \(h(K)\) where \(K = \text{ \underline{if} } \epsilon \in L_1 \cup L_2 \text{ \underline{then} } \{X, Y, \epsilon\} \text{ \underline{else} } \{X, Y\}\), and \(h\) is a non-erasing substitution with \(h(X) = L_1 \setminus \{\epsilon\}\) and \(h(Y) = L_2 \setminus \{\epsilon\}\). Thus we have \(L_1 \cup L_2 = h(K)\), and since, in either case, \(K\) is a regular language, \(h(K) \in \mathcal{F}\). (2) and (3) are similar.
\end{proof}

\begin{theorem}[PHRS Closed Under Rational Operations]
Let \(L_1, L_2 \subseteq A_1^*\) be (repetition-free) \(k\)-PHRS languages for any \(k \geq 2\). Then:
\begin{enumerate}
\item \(L_1 \cup L_2\) is a (repetition-free) \(k\)-PHRS language; \tabto{8cm} (closure under union)
\item \(L_1 L_2\) is a (repetition-free) \(k\)-PHRS language;      \tabto{8cm} (closure under concatenation)
\item \(L_1^+\) is a (repetition-free) \(k\)-PHRS language.        \tabto{8cm} (closure under Kleene plus)
\end{enumerate}
\end{theorem}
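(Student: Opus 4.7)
The plan is to deduce this theorem as an immediate corollary of Lemma \ref{lem:gen} applied to the class \(\mathcal{F}\) of (repetition-free) \(k\)-PHRS languages, for fixed \(k \geq 2\). Lemma \ref{lem:gen} has two hypotheses: that \(\mathcal{F}\) contains every regular language, and that \(\mathcal{F}\) is closed under non-erasing substitution. The second is furnished directly by Theorem \ref{thm:x}, since the (repetition-free) \(k\)-PHRS class is closed under (non-erasing, in the repetition-free case) \(k\)-PHRS substitution, and so in particular under non-erasing substitution whose component languages lie in the class itself.

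For the first hypothesis I would chain the earlier results: every regular language is context-free, hence a \(1\)-MCF language, and therefore a (repetition-free) \(2\)-HRS language by Theorem \ref{thm:mcfequiv}. Since a \(k\)-HR grammar is trivially a \(k'\)-HR grammar for any \(k' \geq k\), and since the construction in the proof of Theorem \ref{thm:phrgen} turns any (repetition-free) \(k\)-HR grammar into a (repetition-free) \(k\)-PHR grammar generating the same hypergraph language (preserving string-graph generation), every regular language is a (repetition-free) \(k\)-PHRS language for each \(k \geq 2\).

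With both hypotheses in hand, Lemma \ref{lem:gen} immediately delivers parts (1)--(3). Concretely, \(L_1 \cup L_2\) is realised as \(h(K)\) where \(K \subseteq \{X,Y\}^*\) is the regular control from the lemma (with \(\epsilon\) adjoined exactly when needed) and \(h\) sends \(X \mapsto L_1 \setminus \{\epsilon\}\) and \(Y \mapsto L_2 \setminus \{\epsilon\}\); the substitution is non-erasing by construction, so the repetition-free instance of Theorem \ref{thm:x} applies. Concatenation and Kleene plus are handled by the same recipe with the regular controls \(\{XY\}\) and \(\{X,Y\}^+\) in place of \(K\).

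I do not anticipate any genuine obstacle: the interesting content has already been packaged in Theorem \ref{thm:x} and Lemma \ref{lem:gen}. The only point requiring care is the repetition-free case, where Theorem \ref{thm:x} demands a non-erasing substitution; this is why Lemma \ref{lem:gen} is phrased in terms of non-erasing substitution and why \(\epsilon\) is stripped from each image of \(h\) and, when necessary, re-added at the level of \(K\) as a regular language.
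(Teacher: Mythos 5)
Your proposal is correct and follows exactly the paper's route: the paper's proof is the one-line ``Combine Theorem \ref{thm:x} and Lemma \ref{lem:gen}'', and your verification of the two hypotheses of Lemma \ref{lem:gen} (regular languages lie in the class via $\mathcal{R}\mathcal{E}\mathcal{G} \subseteq \mathcal{C}\mathcal{F} = \mathcal{H}\mathcal{R}\mathcal{S}_2^{\mathrm{rf}} \subseteq \mathcal{P}\mathcal{H}\mathcal{R}\mathcal{S}_k^{\mathrm{rf}}$, and non-erasing substitution closure from Theorem \ref{thm:x}) merely makes explicit what the paper leaves implicit. Your attention to the repetition-free case, where the substitution must be non-erasing and $\epsilon$ is handled at the level of the regular control language $K$, is exactly the point the lemma was designed to absorb.
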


\begin{proof}
Combine Theorem \ref{thm:x} and Lemma \ref{lem:gen}.
\end{proof}

We now show closure under rational intersection, inspired by the proof of Theorem V.1.7(iv) of \cite{Rozenberg-Salomaa80a}:

\begin{theorem}[PHRS Closed Under Rational Intersection] \label{thm:rat}
Let \(L \subseteq A^*\) be a (repetition-free) \(k\)-PHRS language and \(K \subseteq B^*\) be a regular language, for any \(k \geq 2\). Then \(L \cap K\) is a (repetition-free) \(k\)-PHRS language.
\end{theorem}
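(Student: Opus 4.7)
The plan is to adapt the standard product construction for ET0L closure under rational intersection (Rozenberg--Salomaa V.1.7(iv)) to the hypergraph setting. By Lemma \ref{lem:type2}, I would assume $\mathcal{G} = (\mathcal{C}, A, S, \mathcal{T})$ is proper with all terminals of type $2$ and all non-terminals of type at least $2$, and fix a DFA $\mathcal{A} = (Q, B, \delta, q_0, F)$ for $K$, treating $\epsilon$ separately at the end if needed. The key idea is to \emph{decorate} each label $X \in \Sigma$ of type $n$ with a tuple $\vec{p} \in Q^n$ that records, for each attachment position, the state of $\mathcal{A}$ reached at the corresponding node of the eventual string graph. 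For each such pair I introduce a fresh symbol $X_{\vec{p}}$ of type $n$; a \emph{decorated hypergraph} is one over the extended signature in which any two hyperedges sharing a node agree on the state assigned to it.

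Next, I would build a (repetition-free) $k$-PHR grammar with control $\mathcal{G}'$ and absorb the control using Theorem \ref{thm:controlrem}. The tables are of three kinds. An initialisation table $T_0$ rewrites a fresh start symbol $S'$ (of the same type as $S$) to $S_{\vec{p}}^{\bullet}$ for each admissible $\vec{p}$ (for $S$ of type $2$, those of the form $(q_0, q_f)$ with $q_f \in F$), and is identity on every other symbol. For each $T_i \in \mathcal{T}$, a simulation table $T_i'$ contains, for every rule $(X, R) \in T_i$, every decoration $\vec{p}$ of the external nodes of $R$, and every extension $\mu$ of $\vec{p}$ to a state assignment on all nodes of $R$, the rule $(X_{\vec{p}}, R_\mu')$ where $R_\mu'$ decorates each hyperedge of $R$ according to $\mu$; a terminal hyperedge labelled $a$ with endpoints at states $p, q$ gives the decoration $[p, a, q]$ only when $\delta(p, a) = q$. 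Simulation tables are identity on all decorated terminals and other symbols. A cleanup table $T_c$ rewrites each $[p, a, q]$ to $a^{\bullet}$, every remaining decorated non-terminal to a fresh failure hyperedge (so such derivations cannot terminate), and is identity on real terminals. The rational control forces the trace $T_0 \cdot \{T_1', \ldots, T_l'\}^{*} \cdot T_c$.

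A successful derivation in $\mathcal{G}'$ then corresponds bijectively to a derivation $S^{\bullet} \Rrightarrow_{\mathcal{T}}^{*} H$ in $\mathcal{G}$ producing a string graph for some $w \in A^*$, together with an accepting run of $\mathcal{A}$ on $w$, so $\mathrm{STR}(\mathrm{L}(\mathcal{G}')) = (L \cap K) \setminus \{\epsilon\}$, with $\epsilon$ reinstated by a final union if needed. In the repetition-free case, decorating an injective external sequence keeps it injective, so repetition-freeness is preserved throughout the construction.

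The main obstacle is verifying that parallel replacement preserves the \emph{consistency invariant} on decorated hypergraphs: whenever two hyperedges share a node, both must record the same state for it. This is subtle because hypergraph nodes carry no labels of their own, state information exists only as part of hyperedge labels, and many hyperedges are rewritten simultaneously in a single parallel step. The invariant holds because each decorated rule is internally consistent by construction and the external nodes of its right-hand side inherit the states recorded in the decoration of the hyperedge being replaced; thus the decorations around any shared node continue to agree after a parallel step. A routine induction on derivation length then establishes the correspondence described above.
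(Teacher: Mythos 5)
Your construction is essentially the paper's own proof: both decorate each type-$n$ label with a tuple in $Q^n$ recording automaton states at the attachment positions (the paper's $\Delta$ and $\mathrm{CHOICES}_Q$ are exactly your $X_{\vec{p}}$ and the enumeration of node-state extensions $\mu$), both sequence initialisation, simulation, and a cleanup table via rational control removed by Theorem \ref{thm:controlrem}, and both verify correctness by the correspondence between decorated terminal string graphs and accepting runs of the automaton. The only differences are cosmetic (the paper checks $\delta(q_1,X)=q_2$ in the cleanup table rather than when decorating terminal edges, and seeds the simulation with a rule $(S,(S,p,q)^{\bullet})$ inside each simulation table instead of a separate initialisation table), so your proposal is correct as it stands.
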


\begin{proof}
There is a (repetition-free) \(k\)-PHR grammar \(\mathcal{G} = (\mathcal{C} = (\Sigma, \mathrm{type}), A, \mathcal{T} = \{T_1, \dots T_n\}, S)\) such that \(\mathrm{L}(\mathcal{G})\) is a string graph language and \(\mathrm{STR}(\mathrm{L}(\mathcal{G})) = L \setminus \{\epsilon\}\). Without loss of generality, we assume that \((\Sigma \setminus A) \cap B = \emptyset\). There must also be a deterministic full FSA \(\mathcal{M} = (Q, B, \delta, p, F)\) such that \(\mathrm{L}(\mathcal{M}) = K \setminus \{\epsilon\}\). We will now construct a (repetition-free) \(k\)-PHR grammar with control \(\mathcal{G}' = (\mathcal{C}', A \cap B, \mathcal{T}', S, \mathcal{M}')\) such that \(\mathrm{L}(\mathcal{G}')\) is a string graph language and \(\mathrm{STR}(\mathrm{L}(\mathcal{G}')) = (L \cap K) \setminus \{\epsilon\}\), thus proving that \(L \cap K\) is a (repetition-free) \(k\)-PHRS language, using Theorem \ref{thm:controlrem}.

First, we define the signature \(\mathcal{C}'\). Let \(\Delta = (\bigcup_{0 \leq i \leq k} \{\mathrm{type}^{-1}(\{i\}) \times Q^i\})\), \(\Sigma' = \Delta \cup \{S\} \cup (A \cap B)\), \(\mathrm{type}'((X, q_1, \dots q_i)) = \mathrm{type}(X)\) for all \((X, q_1, \dots q_i) \in \Delta\), \(\mathrm{type}'(S) = 2\), and \(\mathrm{type}'(X) = 2\) for all \(X \in A \cap B\).

In order to define \(\mathcal{T}'\) it will be useful to introduce the intermediate notion of a hypergraph with node labels. In particular, we are interested in labelling the nodes by states of \(\mathcal{M}\). A node labelled hypergraph over \((\mathcal{C}, Q)\) is a pair \((H, l)\) where \(H\) is a hypergraph over \(\mathcal{C}\) and \(l\) is a function \(V_H \to Q\). Notice that any such node labelled hypergraph can be encoded as a hypergraph over \((\Delta, \restr{\mathrm{type}'}{\Delta})\): for each \(e \in E_H\), the new hyperedge labelling function is defined by sending \(e\) to  \((\mathrm{lab}_H(e), q_1, \dots, q_i)\) where \(i = \mathrm{type}_H(e)\) and \(q_j = l(\mathrm{att}_H(e)(j))\) for \(1 \leq j \leq i\). Call this injective encoding function \(\mathrm{enc}\). Next, given a type \(t\) hypergraph \(H\) over \(\mathcal{C}\) and a sequence \(\sigma: \underline{t} \to Q\), define \(\mathrm{CHOICES}_Q(H, \sigma) = \{\mathrm{enc}((H, l)) \mid l: V_H \to Q, l \circ \mathrm{ext}_H = \sigma\}\).

We now define \(\mathcal{T}' = \{T_0', T_1', \dots T_n'\}\) where:
\begin{enumerate}
\item \(T_0' = \mathcal{R} \oplus \{((X, q_1, q_2), Y^{\bullet}) \mid X \in A \cap B, \delta(q_1, X) = q_2\}\);
\item \(T_i' = \mathcal{R} \oplus (\bigcup_{(L, R) \in T_i} \{((L, \sigma(1), \dots, \sigma(t)), H) \mid t = \mathrm{type}(L), \sigma: \underline{t} \to Q, H \in \mathrm{CHOICES}_Q(R, \sigma)\} \cup \{(S, (S, p, q)^{\bullet}) \mid q \in F\})\), for \(1 \leq i \leq n\);
\end{enumerate}

where \(\mathcal{R} = \{(X, X^{\bullet}) \mid X \in \Sigma'\}\).

\newpage

Finally, let \(\mathcal{M}'\) be an FSA defined by the regular expression \(\{1, \dots, n\}^+0\). Correctness follows from the fact that the application of the final table \(T_0\) will produce a terminal string graph \((x_1 x_2 \cdots x_m)^{\bullet}\) if and only if the previous hypergraph was a string graph of the form \(((x_1, q_1, q_2) (x_2, q_2, q_3) \cdots (x_m, q_m, q_{m+1}))^{\bullet}\) and \(\delta(q_i, x_i) = q_{i+1}\) for \(1 \leq i \leq m\), \(q_1 = p\), and \(q_{m+1} \in F\). That is, we have traced out an accepting path in the FSA \(\mathcal{M}\), having simulated \(\mathcal{G}\).
\end{proof}

Finally, we show closure under inverse homomorphisms, via the following general result:

\begin{lemma} \label{lem:closure}
Let \(\mathcal{F}\) be a class of string languages which is closed under rational substitution and rational intersection. Let \(L \subseteq A^*\) be an \(\mathcal{F}\) language and \(\varphi: B^* \to A^*\) a homomorphism. Then  \(\varphi^{-1}(L)\) is an \(\mathcal{F}\) language too.
\end{lemma}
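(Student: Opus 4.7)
The plan is to express \(\varphi^{-1}(L)\) as \(h(\sigma(L) \cap K)\), where \(\sigma\) is a rational substitution on \(A\), \(K \subseteq (B')^*\) is a regular language over an auxiliary alphabet \(B'\), and \(h\) is a homomorphism. Since any homomorphism can be realised as a rational substitution (mapping each letter \(a\) to the singleton \(\{h(a)\}\)), each of the three operations falls under the hypothesised closures of \(\mathcal{F}\), so this decomposition immediately yields \(\varphi^{-1}(L) \in \mathcal{F}\).

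For the setup, for each \(b \in B\) with \(\varphi(b) = a_{b,1} \cdots a_{b,k_b}\), I would introduce fresh ``slice'' letters \((b,1), \ldots, (b,k_b)\), together with an extra letter \((b,0)\) when \(k_b = 0\); let \(B'\) denote the resulting alphabet. Define two homomorphisms: \(g: (B')^* \to A^*\) by \(g((b,j)) = a_{b,j}\) for \(j \geq 1\) and \(g((b,0)) = \epsilon\), and \(h: (B')^* \to B^*\) by \(h((b,1)) = h((b,0)) = b\) and \(h((b,j)) = \epsilon\) for \(j \geq 2\). Let \(K \subseteq (B')^*\) be the regular language of concatenations of complete expansion-blocks \((b,1)(b,2)\cdots(b,k_b)\) (or \((b,0)\) if \(k_b = 0\)), one block per \(b\)-occurrence. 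Define \(\sigma(a) = g^{-1}(a) \subseteq (B')^*\); this is a rational substitution, because \(g\) maps each \(B'\)-letter to at most one \(A\)-letter, giving \(g^{-1}(a) = (B'_\epsilon)^* \cdot X_a \cdot (B'_\epsilon)^*\) with \(B'_\epsilon = g^{-1}(\epsilon) \cap B'\) and \(X_a = g^{-1}(a) \cap B'\).

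The central claim is \(\varphi^{-1}(L) = h(\sigma(L) \cap K)\). For the \(\subseteq\) direction, any \(w = b_1 \cdots b_n \in \varphi^{-1}(L)\) lifts to its canonical slice-expansion \(w' \in K\) with \(g(w') = \varphi(w) \in L\) and \(h(w') = w\). For the \(\supseteq\) direction, any \(w' \in K\) decomposes uniquely as the expansion of some \(b_1 \cdots b_n \in B^*\), and then \(g(w') = \varphi(b_1 \cdots b_n)\) while \(h(w') = b_1 \cdots b_n\), so the membership \(w' \in \sigma(L)\) forces \(b_1 \cdots b_n \in \varphi^{-1}(L)\). En route I would verify the auxiliary identity \(\sigma(L) = g^{-1}(L)\): this holds precisely because \(g\) contributes at most one \(A\)-letter per input letter, so any \(w' \in g^{-1}(L)\) can be partitioned at its non-erased positions to witness a factorisation of \(g(w') \in L\) through the sets \(\sigma(a_i)\). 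This ``letter-slimness'' of \(g\) is the sole reason for splitting each \(\varphi(b)\) into atomic slices, and is the main technical point to handle carefully---without it, the identity \(\sigma(L) = g^{-1}(L)\) fails and the argument breaks.
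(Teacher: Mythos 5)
Your proposal is correct and follows essentially the same route as the paper: both express \(\varphi^{-1}(L)\) as a homomorphic image of the intersection of a rational substitution of \(L\) with a regular ``well-formedness'' language, the only difference being that the paper inserts fresh marker letters \(\overbar{b}\) after each block \(\varphi(b)\) whereas you rename the letters of \(\varphi(b)\) into position-tagged slices. The one caveat --- which the paper's own proof shares --- is the corner case where \(\epsilon \in L\) and \(\varphi\) erases some letters of \(B\): nonempty words of \(\varphi^{-1}(L)\) mapping to \(\epsilon\) are then missed, because \(\sigma(\epsilon) = \{\epsilon\} \subsetneq g^{-1}(\epsilon) = (B'_\epsilon)^*\), so the identity \(\sigma(L) = g^{-1}(L)\) and hence the main claim fail there and this case needs separate (routine) handling.
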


\begin{proof}
Let \(\overbar{B}\) be a copy of \(B\) such that \((A \cup B) \cap \overbar{B} = \emptyset\), and let \(\overbar{\,\cdot\,}: B \to \overbar{B}\) identify each \(b \in B\) with its copy \(\overbar{b} \in \overbar{B}\). For each \(a \in A\), define the regular language \(L_a = \{w_1 a w_2 \mid w_1, w_2 \in \overbar{B}^*\} \subseteq (A \cup \overbar{B})^*\) and the rational substitution \(h\) on \(A\) by \(a \mapsto L_a\). Also define \(K = \bigcup_{n \in \mathbb{N}} \{\varphi(x_1) \overbar{x_1} \varphi(x_2) \overbar{x_2} \cdot\cdot\cdot \varphi(x_n) \overbar{x_n} \mid x_1, x_2, \dots x_n \in B\}\) and the homomorphism \(\psi: (A \cup \overbar{B})^* \to B^*\) by \(\psi(a) = \epsilon\) for each \(a \in A\) and \(\psi(\overbar{b}) = b\) for each \(b \in B\).

Notice \(h(L) \cap K = \bigcup_{n \in \mathbb{N}} \{\varphi(x_1) \overbar{x_1} \varphi(x_2) \overbar{x_2} \cdot\cdot\cdot \varphi(x_n) \overbar{x_n} \mid x_1, \dots, x_n \in B \text{ and } \varphi(x_1) \varphi(x_2) \cdots \varphi(x_n) \in L\}\), so we have \(\varphi^{-1}(L) = \psi(h(L) \cap K)\). Now, \(h(L)\) is an \(\mathcal{F}\) language since \(\mathcal{F}\) is closed under rational substitution, \(h(L) \cap K\) is an \(\mathcal{F}\) language since \(\mathcal{F}\) is closed under rational intersection, and \(\psi(h(L) \cap K)\) is an \(\mathcal{F}\) language since \(\mathcal{F}\) is closed under homomorphisms (a special case of rational substitution). Thus, \(\varphi^{-1}(L)\) is an \(\mathcal{F}\) language, as required.
\end{proof}

\begin{theorem}[PHRS Closed Under Inverse Homomorphisms]
For all \(k \geq 2\), \(\mathcal{P}\mathcal{H}\mathcal{R}\mathcal{S}_{k}\) is closed under inverse homomorphisms.
\end{theorem}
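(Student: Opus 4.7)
The plan is to apply Lemma \ref{lem:closure} directly with \(\mathcal{F} = \mathcal{P}\mathcal{H}\mathcal{R}\mathcal{S}_{k}\). The hypotheses of that lemma require that \(\mathcal{F}\) be closed under rational substitution and rational intersection, so I would check each of these in turn for the class of (order \(k\)) PHRS languages.

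For rational intersection, this is immediate from Theorem \ref{thm:rat}, which already handles the case of intersecting a (repetition-free) \(k\)-PHRS language with an arbitrary regular language. For rational substitution, I would argue as follows: a rational substitution assigns to each letter a regular language, and by Theorem \ref{thm:hrsmcf} (together with the fact that every regular language is context-free, hence MCF, hence HRS of order \(k\) for every \(k \geq 2\)) every regular language is a \(k\)-PHRS language. Thus any rational substitution is in particular a \(k\)-PHRS substitution in the sense used in Theorem \ref{thm:x}, so closure under \(k\)-PHRS substitution (Theorem \ref{thm:x}) implies closure under rational substitution.

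With both hypotheses verified, Lemma \ref{lem:closure} applies and yields that for any \(k\)-PHRS language \(L \subseteq A^*\) and any homomorphism \(\varphi: B^* \to A^*\), the language \(\varphi^{-1}(L)\) is also a \(k\)-PHRS language, which is exactly the statement of the theorem. Since every step is a direct appeal to a previously established result, there is no real obstacle here; the only mild subtlety worth flagging explicitly in the write-up is the justification that regular \(\subseteq \mathcal{P}\mathcal{H}\mathcal{R}\mathcal{S}_k\) for every \(k \geq 2\), so that the general notion of rational substitution is indeed subsumed by the notion of \(k\)-PHRS substitution used in Theorem \ref{thm:x}. I would not attempt to prove a repetition-free analogue here, since Theorem \ref{thm:x} only gives closure under non-erasing substitutions in the repetition-free case while a homomorphism \(\varphi\) need not be non-erasing, and so the lemma-based argument does not immediately transfer to \(\mathcal{P}\mathcal{H}\mathcal{R}\mathcal{S}_k^{\mathrm{rf}}\).
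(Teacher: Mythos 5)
Your proposal is correct and matches the paper's own proof, which likewise derives the result by combining Theorem \ref{thm:x} (closure under substitutions, giving rational substitution since every regular language is a \(k\)-PHRS language), Theorem \ref{thm:rat} (rational intersection), and Lemma \ref{lem:closure}. Your explicit remark about why the repetition-free case does not follow is also consistent with the paper, which leaves that case open.
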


\begin{proof}
The result follows from Theorems \ref{thm:x} and \ref{thm:rat} and Lemma \ref{lem:closure}.
\end{proof}

\subsection{Group Word Problem Closure Properties}

Since the class of \(k\)-PHRS languages is a full AFL for any \(k \geq 2\), it satisfies the following important properties:

\begin{theorem}[WP Independent Of Presentation \cite{Herbst-Thomas93a}]
Let \(\mathcal{F}\) be a class of string languages which is closed under inverse homomorphisms, and let \(\langle X \mid R \rangle\) be a presentation of a group \(G\) such that \(\mathrm{WP}_X(G)\) is an \(\mathcal{F}\) language. Then all presentations \(\langle X' \mid R' \rangle\) of \(G\) are such that \(\mathrm{WP}_{X'}(G)\) is an \(\mathcal{F}\) language.
\end{theorem}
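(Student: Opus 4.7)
The plan is to reduce the statement to the classical observation that changing the generating set of a group corresponds to an inverse homomorphism at the level of word problems, after which the hypothesis on \(\mathcal{F}\) does all the work.

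First I would note that the word problem \(\mathrm{WP}_X(G)\) depends only on the generating set \(X\) and the group \(G\), not on the choice of relations \(R\); so comparing \(\mathrm{WP}_X(G)\) and \(\mathrm{WP}_{X'}(G)\) amounts to comparing two generating sets for the same group. Since \(X\) generates \(G\), for each \(x' \in X'\) we may fix a word \(w_{x'} \in (X \cup X^{-1})^*\) representing \(x'\) in \(G\). Define a monoid homomorphism
\[
\varphi: (X' \cup X'^{-1})^* \to (X \cup X^{-1})^*
\]
by \(\varphi(x') = w_{x'}\) and \(\varphi(x'^{-1}) = \widetilde{w_{x'}}\), where \(\widetilde{w}\) denotes the formal inverse (reverse the letters and swap each letter with its inverse); this ensures \(\varphi(x'^{-1}) =_G (\varphi(x'))^{-1} =_G x'^{-1}\), so \(\varphi(v) =_G v\) for every \(v \in (X' \cup X'^{-1})^*\) by induction on length.

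Next I would deduce the key identity \(\mathrm{WP}_{X'}(G) = \varphi^{-1}(\mathrm{WP}_X(G))\). Indeed, for any \(v \in (X' \cup X'^{-1})^*\), the relation \(\varphi(v) =_G v\) immediately gives \(v =_G 1_G\) if and only if \(\varphi(v) =_G 1_G\), i.e.\ \(v \in \mathrm{WP}_{X'}(G)\) iff \(\varphi(v) \in \mathrm{WP}_X(G)\). Applying closure of \(\mathcal{F}\) under inverse homomorphisms to \(\mathrm{WP}_X(G) \in \mathcal{F}\) yields \(\mathrm{WP}_{X'}(G) \in \mathcal{F}\), completing the proof.

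I do not expect any serious obstacle here: the only thing to verify carefully is that \(\varphi\) is well defined on the full alphabet \(X' \cup X'^{-1}\) (handled by the explicit choice of \(\widetilde{w_{x'}}\) for the inverse letters) and that the group-theoretic equivalence \(\varphi(v) =_G v\) transfers cleanly to set equality of word-problem languages. The entire argument is an instance of the familiar principle that changing generators is an inverse-homomorphic operation on word problems, so the theorem is essentially a one-line corollary of the closure hypothesis once the homomorphism \(\varphi\) is written down.
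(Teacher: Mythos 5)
Your proposal is correct. The paper itself gives no proof of this statement --- it is quoted with a citation to Herbst and Thomas --- but your argument (express each new generator as a word in the old ones, extend to a monoid homomorphism \(\varphi\) with formal inverses for the inverse letters, observe \(\mathrm{WP}_{X'}(G) = \varphi^{-1}(\mathrm{WP}_X(G))\), and apply closure under inverse homomorphisms) is exactly the standard argument behind the cited result, and all the steps check out.
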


\begin{theorem}[WP Subgroup and Supergroup Closure \cite{Gilman-Kropholler-Schleimer18a}] \label{thm:wpafl}
Let \(\mathcal{F}\) be a full AFL and \(G\) be a group with word problem in \(\mathcal{F}\). Then every finitely generated subgroup and every finite index supergroup of \(G\) has word problem in \(\mathcal{F}\).
\end{theorem}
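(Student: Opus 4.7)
The plan is to handle the two claims of the theorem separately, expressing each word problem as the image of $\mathrm{WP}_X(G)$ under a composition of operations preserved by any full AFL. The subgroup case is a direct inverse-homomorphism argument; the finite-index supergroup case requires a coset-tracking construction, writing $\mathrm{WP}(H)$ as $\psi(\varphi^{-1}(\mathrm{WP}_X(G)) \cap R)$ for suitable homomorphisms $\varphi, \psi$ and a regular language $R$.

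For the subgroup case, let $H \leq G$ be finitely generated with generating set $Y = \{y_1, \dots, y_m\}$, and let $G = \langle X \rangle$ with $\mathrm{WP}_X(G) \in \mathcal{F}$. Fix a word $u_j \in (X \cup X^{-1})^*$ representing $y_j$ in $G$, and extend to a monoid homomorphism $\varphi\colon (Y \cup Y^{-1})^* \to (X \cup X^{-1})^*$ by $\varphi(y_j) = u_j$ and $\varphi(y_j^{-1}) = \tilde{u}_j$, where $\tilde{u}_j$ denotes the formal reverse-inverse string of $u_j$. Then $w =_H 1_H$ if and only if $\varphi(w) =_G 1_G$, so $\mathrm{WP}_Y(H) = \varphi^{-1}(\mathrm{WP}_X(G)) \in \mathcal{F}$ by closure under inverse homomorphisms.

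For the finite-index supergroup case, suppose $G \leq H$ with $[H:G] = n$, and fix coset representatives $t_1 = 1_H, t_2, \dots, t_n$. Take $Y = X \cup \{t_2, \dots, t_n\}$, which generates $H$. For each $y \in Y \cup Y^{-1}$ and $i \in \underline{n}$ there is a unique $\pi(i, y) \in \underline{n}$ with $t_i y \in G t_{\pi(i, y)}$, giving an element $g_{i,y} = t_i y t_{\pi(i,y)}^{-1} \in G$ expressible as a word $v_{i,y} \in (X \cup X^{-1})^*$. Introduce the fresh alphabet $Z = \{z_{i,y} : i \in \underline{n}, y \in Y \cup Y^{-1}\}$, the homomorphisms $\varphi\colon Z^* \to (X \cup X^{-1})^*,\ z_{i,y} \mapsto v_{i,y}$ and $\psi\colon Z^* \to (Y \cup Y^{-1})^*,\ z_{i,y} \mapsto y$, and the regular coset-tracking language $R \subseteq Z^*$ of sequences $z_{i_1, y_1} z_{i_2, y_2} \cdots z_{i_k, y_k}$ with $i_1 = 1$, $i_{r+1} = \pi(i_r, y_r)$ for $r < k$, and $\pi(i_k, y_k) = 1$.

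Unwinding the identity $y_1 \cdots y_k = g_{i_1, y_1} \cdots g_{i_k, y_k}\, t_{\pi(i_k, y_k)}$ (with the $i_r$ forced by membership in $R$) shows that $\mathrm{WP}_Y(H) = \psi(\varphi^{-1}(\mathrm{WP}_X(G)) \cap R)$, and applying full AFL closure under inverse homomorphism, intersection with regular, and homomorphism in turn yields $\mathrm{WP}_Y(H) \in \mathcal{F}$. The preceding independence-of-presentation theorem then extends this to any generating set of $H$. The main obstacle is purely bookkeeping: setting up $R$ so that it simultaneously enforces the correct starting coset $i_1 = 1$ and the return condition $\pi(i_k, y_k) = 1$ while correctly threading $\pi$ along the word; once this is in place, the result is a routine chain of AFL closures.
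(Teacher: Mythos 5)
Your proof is correct. Note that the paper does not prove this statement at all --- it is quoted as a known result from the cited reference \cite{Gilman-Kropholler-Schleimer18a} --- and your argument is precisely the standard one from that literature: the subgroup case via a single inverse homomorphism, and the finite-index supergroup case via Schreier-style coset tracking, expressing \(\mathrm{WP}_Y(H)\) as \(\psi(\varphi^{-1}(\mathrm{WP}_X(G)) \cap R)\) and then invoking independence of presentation. Both reductions check out (the threading identity \(y_1 \cdots y_k = g_{i_1,y_1} \cdots g_{i_k,y_k}\, t_{\pi(i_k,y_k)}\) is right, \(R\) is recognised by the obvious \(n\)-state automaton, and the indices \(i_r\) are uniquely forced, so \(\psi\) restricted to \(R\) hits exactly the words you want), and each step uses only closure properties guaranteed by a full AFL.
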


In 2019, Kropholler and Spriano showed that a graph of groups with vertex groups with MCF word problem and edge groups finite, yields a group with an MCF word problem \cite{Kropholler-Spriano19a}. A special case of this construction is a free product of groups. We now show that a free product of groups with (repetition-free) PHRS word problems is a group with a (repetition-free) PHRS word problem. Our strategy is entirely different to Kropholler and Spriano's approach, which relied on Denkinger's automata characterisation of MCF languages (Theorem \ref{thm:mcfequiv}).

The following easy lemma, where presentations of groups are written as monoid presentations, gives us a recursive description of the word problem of free products, enabling us to prove Theorem \ref{thm:freeprod}.

\begin{lemma} \label{lem:fp}
Let \(G_1\), \(G_2\) be finitely generated groups over disjoint alphabets \(A_1 = \{a_1, \dots a_n\}\), \(A_2 = \{b_1, \dots b_m\}\), respectively. If \(X = A_1 \cup A_2\) and \(L_i = \mathrm{WP}_{A_i}(G_i)\) for \(i = 1, 2\), then \(\mathrm{WP}_{X}(G_1 * G_2)\) is the smallest set \(L\) such that \(\epsilon \in L\) and \(\forall i \in \{1, 2\}, \forall w \in L_i, \forall u, v \in X^*, u v \in L \Rightarrow u w v \in L\).
\end{lemma}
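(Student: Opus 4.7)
The plan is to prove equality of two sets by showing each contains the other. Let $L^\ast$ denote the smallest set satisfying the stated closure conditions, and let $W = \mathrm{WP}_X(G_1 * G_2)$. For the inclusion $L^\ast \subseteq W$, I would observe that $W$ itself already satisfies the closure conditions: certainly $\epsilon \in W$, and for any $u, v \in X^\ast$ with $uv \in W$ and any $w \in L_i$, the natural embedding $G_i \hookrightarrow G_1 * G_2$ sends $w$ to the identity, so $uwv =_{G_1 * G_2} u\,1\,v =_{G_1 * G_2} uv =_{G_1 * G_2} 1$. Since $L^\ast$ is the smallest such set, $L^\ast \subseteq W$.

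For the reverse inclusion, I would induct on $|w|$ for $w \in W$. The base case $w = \epsilon$ is immediate. For the inductive step, decompose $w$ uniquely as $w = s_1 s_2 \cdots s_r$ where each $s_j$ is a maximal nonempty block lying entirely in $A_1^\ast$ or in $A_2^\ast$, with the factor alternating between successive syllables. Let $\overline{s_j}$ denote the element of the appropriate factor $G_{i(j)}$ represented by $s_j$. The key step is to invoke the normal form theorem for free products: if every $\overline{s_j}$ were nontrivial in $G_{i(j)}$, then $\overline{s_1}\,\overline{s_2}\cdots\overline{s_r}$ would be a reduced alternating product of nonidentity elements, hence a nontrivial element of $G_1 * G_2$, contradicting $w \in W$. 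Therefore some syllable $s_j$ satisfies $\overline{s_j} = 1$ in its factor, so $s_j \in L_{i(j)}$.

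Let $u = s_1 \cdots s_{j-1}$ and $v = s_{j+1} \cdots s_r$, so $w = u\,s_j\,v$ and $uv$ is a shorter word. Since removing a subword representing the identity of a factor does not change the element represented in $G_1 * G_2$, we have $uv \in W$, and $|uv| < |w|$, so by the induction hypothesis $uv \in L^\ast$. Applying the closure condition with $s_j \in L_{i(j)} \subseteq L_1 \cup L_2$ yields $w = u s_j v \in L^\ast$, completing the induction.

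The main obstacle is the appeal to the normal form theorem for free products: one needs to be careful that the syllable decomposition of a word in $X^\ast$ really does correspond to an alternating product in $G_1 * G_2$, which is where the disjointness assumption $A_1 \cap A_2 = \emptyset$ is essential. Everything else is bookkeeping with concatenation.
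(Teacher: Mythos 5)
Your proof is correct. The paper states Lemma~\ref{lem:fp} without proof (dismissing it as ``easy''), so there is no argument to compare against; your two inclusions --- that $\mathrm{WP}_X(G_1*G_2)$ satisfies the closure conditions, and the induction on word length using the syllable decomposition together with the normal form theorem to locate a syllable $s_j$ representing $1$ in its factor --- are exactly the standard argument one would expect the author to have in mind, and they correctly supply the missing details.
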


\begin{theorem}[WP Free Product Closure] \label{thm:freeprod}
Let \(\mathcal{F}\) be a class of string languages containing all finite languages, closed under union and concatenation, and closed under nested iterated substitution. Then if \(G_1\), \(G_2\) are groups with presentations admitting a \(\mathcal{F}\) word problem, \(G_1 * G_2\) has a presentation admitting a \(\mathcal{F}\) word problem.
\end{theorem}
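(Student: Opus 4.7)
The plan is to realise \(W = \mathrm{WP}_{A_1 \cup A_2}(G_1 * G_2)\) as a language obtained by nested iterated \(\mathcal{F}\)-substitution from a finite seed, exploiting the fixed-point description given by Lemma \ref{lem:fp}: \(W\) is the smallest subset of \((A_1 \cup A_2)^*\) containing \(\epsilon\) and closed under the insertion rule \((uv, w) \mapsto uwv\) for \(w \in L_1 \cup L_2\), where \(L_i = \mathrm{WP}_{A_i}(G_i) \in \mathcal{F}\).

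First I would introduce a fresh marker symbol \(\#\) and work over the augmented alphabet \(A' = A_1 \cup A_2 \cup \{\#\}\). Using closure under substitution, concatenation, and containment of finite languages, I form for each \(i\) a marked variant \(\hat{L_i} \in \mathcal{F}\) consisting of all strings \(\# a_1 \# a_2 \# \cdots \# a_k \#\) with \(a_1 \cdots a_k \in L_i\), so that erasing \(\#\) recovers \(L_i\). Then I define a substitution \(\sigma\) on \(A'\) by \(\sigma(a) = \{a\}\) for \(a \in A_1 \cup A_2\) and \(\sigma(\#) = \{\#, \epsilon\} \cup \hat{L_1} \cup \hat{L_2}\); the latter lies in \(\mathcal{F}\) by closure under union.

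Next I would form \(W' = \bigcup_{n \in \mathbb{N}} \sigma^n(\{\#\})\), which belongs to \(\mathcal{F}\) by the hypothesis of closure under nested iterated substitution. The central claim is \(W = W' \cap (A_1 \cup A_2)^*\), i.e.\ the \(\#\)-free members of \(W'\). The inclusion \(W' \cap (A_1 \cup A_2)^* \subseteq W\) follows by induction on the number of applications of \(\sigma\): replacing a \(\#\) by an \(\hat{L_i}\)-string, then later discharging the surrounding markers via the branch \(\sigma(\#) \ni \epsilon\), corresponds exactly to one application of the insertion rule of Lemma \ref{lem:fp} with the inserted factor coming from \(L_i\). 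Conversely, any derivation of \(w \in W\) using that rule is mirrored by iteratively replacing \(\#\) by suitable \(\hat{L_i}\)-strings and finally erasing all remaining markers.

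The main obstacle is the last step: extracting \(W = W' \cap (A_1 \cup A_2)^*\) and confirming it remains in \(\mathcal{F}\). This amounts to a closure-under-\(\#\)-erasure property, which I would handle by composing the nested iterated substitution with one further substitution sending \(\#\) to \(\{\epsilon\}\) and leaving every \(a \in A_1 \cup A_2\) fixed; under the hypothesis that \(\mathcal{F}\) is closed under nested iterated substitution this composite stays in \(\mathcal{F}\) (and, in the intended instantiation \(\mathcal{F} = \mathcal{P}\mathcal{H}\mathcal{R}\mathcal{S}_k\), the full-AFL closure properties established in Subsection \ref{subsec:closure} supply the same conclusion through rational intersection or an erasing homomorphism). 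Beyond that, the remaining work is the routine but careful bookkeeping that \(\sigma\) mirrors the recursion of Lemma \ref{lem:fp} without producing spurious strings over the terminal alphabet.
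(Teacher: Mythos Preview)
Your construction correctly mirrors the recursion of Lemma~\ref{lem:fp}, but it oversteps the hypotheses of the theorem in two places. First, forming the marked language \(\hat{L}_i\) from \(L_i\) is a single application of a non-nested substitution (essentially the homomorphism \(a\mapsto\#a\#\)); second, extracting \(W=W'\cap(A_1\cup A_2)^*\) requires either rational intersection or the erasing, non-nested map \(\#\mapsto\epsilon\). Neither operation is covered by the four assumptions actually made on \(\mathcal{F}\): containing finite languages and closure under union, concatenation, and \emph{nested} iterated substitution. Your fallback to the full-AFL properties of \(\mathcal{P}\mathcal{H}\mathcal{R}\mathcal{S}_k\) would rescue the specific corollary you care about, but it does not prove the theorem as stated, and in particular it would not apply to \(\mathcal{P}\mathcal{H}\mathcal{R}\mathcal{S}_k^{\mathrm{rf}}\), for which closure under erasing is left open in the paper.

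The paper avoids both problems by never leaving the alphabet \(A_1\cup A_2\). The key observation is that inserting a word \(w\in L_1\cup L_2\) at an arbitrary position of a nonempty string is the same as inserting \(w\) immediately to the left or right of some existing letter, so the recursion of Lemma~\ref{lem:fp} can be encoded by a nested \emph{non-erasing} substitution \(h\) of the shape \(h(x)=\{x\}\cup xM_x\cup M_x x\), with each \(M_x\) built from \(L_1,L_2\) via union and concatenation. Iterating \(h\) from a seed already in \(\mathcal{F}\) then yields the word problem directly, with no markers to remove. That anchoring of insertions to existing letters rather than to a fresh placeholder \(\#\) is exactly the idea your argument is missing; once you adopt it, the proof goes through under precisely the stated closure assumptions.
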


\begin{proof}
Let \(A_1\), \(A_2\), \(X\), \(L_1\), \(L_2\), \(L\) be as in Lemma \ref{lem:fp}, then it is immediate that iterated application of the nested non-erasing \(\mathcal{F}\)-substitution \(h\) of strings on \(A_1 \cup A_2\), defined by \(h(a_i) = \{a_i\} \cup a_i L_1 \cup L_1 a_i\) and \(h(b_j) = \{b_j\} \cup b_j L_2 \cup L_2 b_j\) for all \(i \in \underline{n}, j \in \underline{m}\), to \(L\), gives us exactly \(\mathrm{WP}_{X}(G_1 * G_2)\). The result them follows from the assumed closure properties.
\end{proof}

\section{Conclusion and Future Work}

We have shown some foundational properties of parallel hyperedge replacement grammars, with a focus on string generational power, showing that the class of parallel hyperedge replacement string languages is a substitution and iterated substitution closed full AFL, containing all MCF and ET0L languages. Theorem \ref{thm:hrrf} tells us that the string generational power of HR grammars is not restricted by requiring grammars to be repetition-free. It remains future work to determine if a similar result holds in the parallel replacement setting. If it turns out that there is no such result, there is still a middle-ground where one can obtain all of the closure properties we have shown, but without allowing merging of nodes by derivations. Call the below equivalent classes the repetition-free weak-coded \(k\)-PHRS languages (\(\mathcal{W}\mathcal{P}\mathcal{H}\mathcal{R}\mathcal{S}_k^{\mathrm{rf}}\)):

\begin{enumerate}
\item The class of string languages generated by repetition-free \(k\)-PHR grammars under the image of some weak coding.
\item The class of string languages generated by repetition-free \(k\)-PHR grammars with a special type \(2\) label \texttt{empty}, interpreted as the empty string by \(\mathrm{STR}\).
\end{enumerate}

Using the results and proofs from Subsection \ref{subsec:closure}, it is not too difficult to see that \(\mathcal{W}\mathcal{P}\mathcal{H}\mathcal{R}\mathcal{S}_k^{\mathrm{rf}}\) is a substitution and iterated substitution closed full AFL, and that \(\mathcal{P}\mathcal{H}\mathcal{R}\mathcal{S}_2^{\mathrm{rf}} = \mathcal{W}\mathcal{P}\mathcal{H}\mathcal{R}\mathcal{S}_2^{\mathrm{rf}} = \mathcal{P}\mathcal{H}\mathcal{R}\mathcal{S}_2\), due to the proof of Theorem \ref{thm:phrset0l}. We conjecture this holds for all \(k \geq 2\):

\begin{conjecture}[PHR String Generational Power]
For all \(k \geq 2\), \(\mathcal{P}\mathcal{H}\mathcal{R}\mathcal{S}_k^{\mathrm{rf}} = \mathcal{W}\mathcal{P}\mathcal{H}\mathcal{R}\mathcal{S}^{\mathrm{rf}}_k = \mathcal{P}\mathcal{H}\mathcal{R}\mathcal{S}_k\).
\end{conjecture}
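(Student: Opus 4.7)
The plan is to close the chain $\mathcal{P}\mathcal{H}\mathcal{R}\mathcal{S}_k^{\mathrm{rf}} \subseteq \mathcal{W}\mathcal{P}\mathcal{H}\mathcal{R}\mathcal{S}_k^{\mathrm{rf}} \subseteq \mathcal{P}\mathcal{H}\mathcal{R}\mathcal{S}_k$ by proving the two nontrivial reverse inclusions $\mathcal{P}\mathcal{H}\mathcal{R}\mathcal{S}_k \subseteq \mathcal{W}\mathcal{P}\mathcal{H}\mathcal{R}\mathcal{S}_k^{\mathrm{rf}}$ and $\mathcal{W}\mathcal{P}\mathcal{H}\mathcal{R}\mathcal{S}_k^{\mathrm{rf}} \subseteq \mathcal{P}\mathcal{H}\mathcal{R}\mathcal{S}_k^{\mathrm{rf}}$ for all $k \geq 2$; the case $k = 2$ is already settled by the proof of Theorem \ref{thm:phrset0l}. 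In both constructions I would start from a grammar normalised using Lemma \ref{lem:type2} and freely invoke rational control via Theorem \ref{thm:controlrem}.

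For $\mathcal{P}\mathcal{H}\mathcal{R}\mathcal{S}_k \subseteq \mathcal{W}\mathcal{P}\mathcal{H}\mathcal{R}\mathcal{S}_k^{\mathrm{rf}}$, the plan is to adapt the Engelfriet-Heyker construction behind Theorem \ref{thm:hrrf} to the parallel setting. Given a $k$-PHR grammar $\mathcal{G}$, for every rule $(L, R)$ and every equivalence relation ${\sim}$ on $V_R$ compatible with $\mathrm{ext}_R$, produce a repetition-free variant whose RHS is $R/{\sim}$, and decorate the label $L$ with a ``contract'' partition of its own attachment slots declaring which external positions it promises will ultimately be identified. Any two nodes that get merged in an original parallel derivation would be realised in the simulating derivation as two distinct nodes joined by a fresh type-2 label $\mathtt{empty}$ that the final weak coding erases. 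Correctness would be established by simultaneous induction on parallel derivation length, matching each $\mathcal{G}$-derivation with a derivation of the constructed grammar whose weak-coded string yield coincides with the original, and conversely.

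The main obstacle will be consistency within a single parallel step. Because all hyperedges are rewritten simultaneously and the RHS of each is fixed once a table is chosen, the contracts carried by neighbouring decorated labels must already be globally compatible \emph{before} the step fires: whenever $e, e'$ share an attachment node $v$, the partition classes of their external slots incident at $v$ must coincide. Enforcing this purely at the label level is the technical crux. My approach would be to split each original table into two rationally controlled phases: a preparatory synchronisation table that, edge by edge, assigns fresh contract decorations and rejects (by sending a dedicated symbol to a sink) any hypergraph in which neighbouring contracts clash, followed by a simulation table implementing the chosen $\mathcal{G}$-rule under the fixed contract. Proving that this two-phase construction preserves exactly the terminally labelled hypergraphs modulo the induced node-merging, and that contracts can always be chosen so as to admit some valid synchronisation, is the longest and most delicate calculation.

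For $\mathcal{W}\mathcal{P}\mathcal{H}\mathcal{R}\mathcal{S}_k^{\mathrm{rf}} \subseteq \mathcal{P}\mathcal{H}\mathcal{R}\mathcal{S}_k^{\mathrm{rf}}$, the plan is to eliminate the weak coding in the spirit of the propagating ET0L construction of Theorem \ref{thm:emptyrhs}. First, iteratively compute for each non-terminal $X$ the set of derivable hypergraphs whose weak-coding image is purely $\epsilon$; then, for every rule with RHS $R$, generate the finite family of repetition-free variants obtained by contracting every type-2 hyperedge whose label is guaranteed to be erased, discarding any variant whose resulting external sequence becomes non-injective. Since the hypergraph type is bounded by $k$, the enumeration terminates, and the normal-form simulation should go through straightforwardly once the first inclusion is in hand; I expect this direction to be essentially combinatorial.
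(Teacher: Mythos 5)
This statement is a \emph{conjecture}: the paper offers no proof, establishes only the case \(k=2\) (via the proof of Theorem \ref{thm:phrset0l}), and explicitly flags the general case as future work (``It remains future work to determine if a similar result holds in the parallel replacement setting''). So your proposal cannot match the paper's argument; it has to stand on its own, and as written it does not. The crux you correctly identify --- global consistency of the merging ``contracts'' within a single parallel step --- is precisely the obstruction that makes this open, and your proposed fix does not resolve it. A PHR table assigns right-hand sides to hyperedges based solely on their labels; there is no mechanism by which a ``preparatory synchronisation table'' can inspect two \emph{neighbouring} hyperedges sharing a node and ``reject'' the hypergraph when their contracts clash. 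Rational control (Theorem \ref{thm:controlrem}) only constrains the sequence of tables, not which rule is applied to which edge, and the node-state trick from the proof of Theorem \ref{thm:rat} enforces consistency only because each RHS carries a single node-labelling generated coherently by \(\mathrm{CHOICES}\); contracts about \emph{future identifications} of nodes shared between independently rewritten hyperedges are a different and genuinely global constraint. The sequential construction behind Theorem \ref{thm:hrrf} leans on the HR Context-Freeness theorem, which decomposes a derivation into independent subderivations per nonterminal; no parallel analogue is stated in the paper or supplied by you, and without one the ``simultaneous induction on parallel derivation length'' has nothing to induct on.

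The second inclusion is also not ``essentially combinatorial.'' Contracting a type-2 \(\mathtt{empty}\)-labelled hyperedge identifies its two attachment nodes; when those are distinct external nodes of a repetition-free RHS, the contracted variant is no longer repetition-free, and simply ``discarding'' such variants changes the generated language. More fundamentally, whether an edge's entire eventual yield erases to \(\epsilon\) depends on the whole future parallel derivation (all edges advance in lock-step under a common table sequence), so the set you propose to ``iteratively compute'' per nonterminal is not well defined independently of context, unlike in the propagating ET0L construction of Theorem \ref{thm:emptyrhs} where erasure is a per-symbol, per-table-sequence property of a string rewriting system. In short: your plan names the right difficulty but does not overcome it, and the statement remains, as in the paper, a conjecture.
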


Figure \ref{fig:closuretable} summarises the closure properties we know. It remains future work to show that the class of PHRS languages is a strict subclass of the context-sensitive languages. We conjecture this to be true, and we also conjecture that only even increments in order increase string generative power. Figure \ref{fig:hierarchies} summarises both our known and conjectured string language hierarchies.

\begin{conjecture}[CS Generalises PHRS]
\(\mathcal{P}\mathcal{H}\mathcal{R}\mathcal{S} \subsetneq \mathcal{C}\mathcal{S}\).
\end{conjecture}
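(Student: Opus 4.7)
The plan is to prove the conjecture in two parts: first establishing $\mathcal{P}\mathcal{H}\mathcal{R}\mathcal{S} \subseteq \mathcal{C}\mathcal{S}$, then exhibiting a context-sensitive language outside $\mathcal{P}\mathcal{H}\mathcal{R}\mathcal{S}$. For the inclusion, I would simulate a PHR grammar $\mathcal{G}$ generating a string graph language on a nondeterministic linear-bounded automaton. First I would put $\mathcal{G}$ into a ``propagating'' normal form in which every right-hand side contains at least one hyperedge; the analogue of Theorem \ref{thm:emptyrhs} for parallel hyperedge replacement should go through by adapting the classical ET0L argument, though handling node merging and high-arity edges requires care. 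Combined with Lemma \ref{lem:type2}, I may then assume that all terminals are type $2$, all non-terminals of type at least $2$, and all rules propagating.

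The key intermediate result is then: along any derivation $S^{\bullet} \Rrightarrow_{\mathcal{T}}^m w^{\bullet}$ with $|w| = n$, every intermediate hypergraph has $O(n)$ hyperedges and $O(n)$ nodes, proved by backward induction, since in a propagating grammar each intermediate hyperedge contributes at least one hyperedge to the final output and nodes cannot be deleted by derivations. An LBA can therefore encode each intermediate hypergraph in $O(n)$ cells, nondeterministically guess a parallel derivation step by step (picking a table index plus a choice of rule for each hyperedge), and accept if the final hypergraph coincides with $w^{\bullet}$. The main obstacle here is establishing the propagating normal form: unlike the ET0L setting, an empty right-hand side in PHR can cause identification of external nodes with distinct attachment nodes, so eliminating empty right-hand sides requires tracking which node-merging collapses will occur in future steps. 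A fallback would be to forgo this normal form and bound intermediate size directly by pruning derivation trees to a canonical shortest form, exploiting finiteness of reachable isomorphism classes of each bounded type, in the spirit of Habel's context-freeness theorem.

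For the strictness, I would pursue a diagonalisation against the effectively enumerable class of PHR grammars. Enumerate all PHR grammars $\mathcal{G}_1, \mathcal{G}_2, \dots$ over a fixed alphabet $A$, set $L_i = \mathrm{STR}(\mathrm{L}(\mathcal{G}_i))$, and construct $D \subseteq A^*$ so that for each $i$ a canonical probe string $w_i$ of length polynomially bounded in $i$ satisfies $w_i \in D \iff w_i \notin L_i$. The principal difficulty is ensuring that $D$ itself lies in $\mathcal{C}\mathcal{S}$: this requires that, given $i$ polynomially bounded by $|w_i|$, the membership problem $w_i \in L_i$ can be decided in nondeterministic linear space in $|w_i|$, which in turn depends on a sharp complexity analysis of PHRS parsing, potentially restricted to repetition-free propagating grammars of a fixed order. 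Should this analysis prove intractable, a back-up strategy is to single out a specific $\mathcal{C}\mathcal{S}$-complete or known-to-be-hard candidate language and argue directly, via the growth and structural bounds obtained in the inclusion proof (such as the singly-exponential upper bound on string length as a function of derivation length), that it cannot be generated by any PHR grammar.
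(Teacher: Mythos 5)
This statement is left as an open conjecture in the paper --- the author explicitly defers it to future work and supplies no proof --- so there is nothing to compare your proposal against; it must stand or fall on its own, and as it stands it is a research plan with genuine unresolved gaps rather than a proof. The most serious gap in the inclusion half is the one you flag yourself: the propagating normal form. In the parallel hypergraph setting a rule whose right-hand side has no hyperedges can, via a repeated external-node sequence, merge attachment nodes of the host, and it is not at all clear that such rules can be eliminated without changing the generated string-graph language; your fallback (pruning to canonical shortest derivations) is also delicate because parallel steps are synchronised across all hyperedges, so branches of the derivation cannot be shortened independently. Second, your node-count bound is wrongly justified: ``nodes cannot be deleted'' gives a map from the nodes of an intermediate hypergraph to those of the final string graph, but since non-repetition-free rules merge nodes this map need not be injective, so it does not bound the intermediate count by the final one. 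The $O(n)$ bound can still be rescued --- non-isolated nodes number at most $k$ times the hyperedge count, and isolated nodes are never subsequently merged so they do inject into the final node set --- but that argument needs to be made, and it again presupposes the hyperedge bound coming from the unproven propagating form.

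The strictness half is weaker still. Diagonalisation produces a recursive language $D$ outside $\mathcal{P}\mathcal{H}\mathcal{R}\mathcal{S}$ essentially for free (and indeed the paper already places $\mathcal{P}\mathcal{H}\mathcal{R}\mathcal{S}$ inside $\mathcal{R}\mathcal{E}\mathcal{C}$), but to place $D$ inside $\mathcal{C}\mathcal{S}$ you need the \emph{universal} membership problem ``given an index $i$ and a word $w$, is $w \in L_i$?'' to be decidable in nondeterministic space linear in $\abs{w}$ with the grammar size also charged against that budget. That is a strictly stronger, uniform version of the inclusion you have not yet established, and nothing in your sketch addresses how the encoding of $\mathcal{G}_i$ (whose order $k$ and table sizes grow with $i$) fits into the space bound. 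Until that complexity analysis is carried out --- or a concrete hard witness language is produced as in your back-up plan --- neither direction of the conjecture is proved.
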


\begin{conjecture}[PHRS Grouping]
For all \(k \geq 1\), \(\mathcal{P}\mathcal{H}\mathcal{R}\mathcal{S}_{2k} = \mathcal{P}\mathcal{H}\mathcal{R}\mathcal{S}_{2k+1}\).
\end{conjecture}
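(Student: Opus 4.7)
The plan is to imitate, in the parallel setting, the classical argument of Engelfriet--Heyker (Theorem \ref{thm:mcfequiv}) that establishes $\mathcal{H}\mathcal{R}\mathcal{S}_{2k} = \mathcal{H}\mathcal{R}\mathcal{S}_{2k+1}$. The structural reason the extra external node does not help is that a string graph has exactly two external nodes, so any sub-hypergraph that is the terminally-derived yield of a non-terminal hyperedge must contribute an even number of endpoints to the final string graph; hence a non-terminal hyperedge of odd type $2k+1$ must, in every terminally-labelled derivation, induce a non-trivial identification among its external positions (either via repeated external nodes, or by being glued through its attachment to neighbouring hyperedges).

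First, I would apply Lemma \ref{lem:type2} to normalise a given $(2k+1)$-PHR grammar $\mathcal{G}$ generating a string graph language so that all terminals have type $2$ and all non-terminals have type at least $2$. For each non-terminal $X$ of type $2k+1$ I would introduce one new non-terminal $X_{\pi}$ for every non-trivial equivalence relation (``collapse pattern'') $\pi$ on $\{1, \ldots, 2k+1\}$, where $X_{\pi}$ has type equal to the number of $\pi$-classes, which is at most $2k$. The new grammar $\mathcal{G}'$ would then, for each original table $T$ and each consistent choice of collapse patterns on the LHS and on each non-terminal hyperedge in each RHS, have a corresponding table whose rules are obtained from the original by quotienting the relevant hypergraphs by the chosen patterns, with all terminal hyperedges left alone.

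The key correctness lemma would assert a bijection between terminally-labelled derivations in $\mathcal{G}$ and those in $\mathcal{G}'$ preserving the generated string graph. The forward direction would be by induction on derivation length: to each non-terminal hyperedge in each intermediate hypergraph one assigns the unique collapse pattern determined by the rest of the derivation, which exists precisely because the final hypergraph is a string graph and each type $2k+1$ sub-yield has an even number of endpoint-incidences. The reverse direction is by straightforward simulation. The string-graph constraint is exactly what forces every odd-type non-terminal used in a successful derivation to admit at least one valid non-identity collapse pattern, so eliminating odd-type non-terminals loses no languages.

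The main obstacle I anticipate is the propagation of collapse patterns across a \emph{parallel} rewriting step. In the sequential HR setting one may reason about each non-terminal hyperedge independently, but here all hyperedges are rewritten in lock-step, and the pattern assigned to a given non-terminal is constrained by the patterns of the neighbouring hyperedges with which it shares attachment nodes. I would tackle this either by a finite product construction that tags each hyperedge with the pattern it expects on each of its attachment nodes and admits only rules whose tags agree along shared nodes, or by first using Theorem \ref{thm:controlrem} to introduce a rational control that ``guesses'' a globally consistent assignment of patterns and verifies it incrementally. Since there are only finitely many equivalence relations on $\{1, \dots, 2k+1\}$, both approaches would yield a finite grammar of order $2k$, as required.
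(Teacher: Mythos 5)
This statement is stated in the paper only as a conjecture; the paper offers no proof, and explicitly flags the question of how order affects parallel string generative power as future work. So your proposal cannot be matched against an argument of the paper --- it has to stand on its own, and as written it has two genuine gaps. First, the structural claim driving the whole construction is false: a non-terminal hyperedge of odd type need not induce any identification among its external positions in a successful derivation. After the normalisation of Lemma \ref{lem:type2} the grammar is proper, so attachments are injective, and a type-$3$ non-terminal can perfectly well derive a yield consisting of a string segment between external nodes $1$ and $2$ with external node $3$ isolated, the surrounding context supplying both edges incident to the node that position $3$ is glued onto; the result is a string graph and nothing is ever merged. The Engelfriet--Heyker reduction from order $2k+1$ to $2k$ is not a quotienting argument at all --- it is a normal form on how the terminal yield of each non-terminal decomposes into disjoint string segments (whence the factor of two: each segment consumes two external positions as its ends), proved via the detour through tree-walking transducers. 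Your collapse patterns, being equivalence relations on external positions, only capture the ``repeated external nodes'' degeneracy and miss the ``isolated'' and ``interior'' ones, so even the local analysis is too coarse.

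Second, and more fundamentally, your correctness lemma assigns to each non-terminal occurrence ``the unique collapse pattern determined by the rest of the derivation'' and inducts on derivation length. That is exactly the HR Context-Freeness decomposition, and it is unavailable here: in a parallel derivation every hyperedge is rewritten at every step using the \emph{same} table, so the sub-derivations below distinct non-terminal occurrences are synchronised and cannot be analysed or restructured independently. This synchronisation is precisely why the paper must leave open even the more basic question $\mathcal{P}\mathcal{H}\mathcal{R}\mathcal{S}_k^{\mathrm{rf}} \overset{?}{=} \mathcal{P}\mathcal{H}\mathcal{R}\mathcal{S}_k$. Your two proposed fixes --- tagging hyperedges with expected patterns, or guessing a globally consistent assignment via Theorem \ref{thm:controlrem} --- address only the consistency of the guessed annotations along shared nodes; they do not recover the ability to reassociate or renormalise the sub-derivations, which is what the Engelfriet--Heyker argument actually consumes. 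Until you have a parallel analogue of the decomposition (or an automaton characterisation playing the role that deterministic tree-walking transducers play in Theorem \ref{thm:mcfequiv}), the reduction from $2k+1$ to $2k$ does not go through, and the statement should remain a conjecture.
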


Because \(\mathcal{P}\mathcal{H}\mathcal{R}\mathcal{S}\) is closed under inverse homomorphisms, we know that the property of having a PHRS word problem is independent of the presentation. We have additionally shown that PHRS groups are closed under free product. We also conjecture the following, which has a wide-reaching corollary:

\begin{conjecture}[PHRS WP Double Torus] \label{conj:wpdt}
The fundamental group of the double torus admits a PHRS word problem which is neither an MCF nor ET0L language.
\end{conjecture}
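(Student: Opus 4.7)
The plan divides into three largely independent tasks: exhibiting a PHR grammar whose string language is $\mathrm{WP}_X(G)$ for the double torus group $G = \langle a,b,c,d \mid [a,b][c,d] \rangle$, and then separately showing that this language lies outside $\mathcal{M}\mathcal{C}\mathcal{F}$ and outside $\mathcal{E}\mathcal{T}\mathcal{O}\mathcal{L}$. Since $\mathcal{P}\mathcal{H}\mathcal{R}\mathcal{S}$ is closed under inverse homomorphisms, fixing this particular presentation is without loss of generality.

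For PHRS membership, the idea is to mirror the layered combinatorics of van Kampen diagrams over the octagonal tiling of the hyperbolic plane. Begin with a single type-$2$ non-terminal $S$ and start hypergraph $S^{\bullet}$; interpret $S$ as a ``not-yet-derived boundary segment''. Include tables whose rules (i) refine $S$ along the boundary of the octagonal relator, thereby applying one $2$-cell and introducing new type-$2$ non-terminals for the resulting successor boundary; (ii) split $S$ into concatenations of further $S$-edges corresponding to factorisations $w = w_1 w_2$ of a null-homotopic loop; and (iii) eventually replace $S$ by a terminal edge, or collapse it via a weak coding in the sense of the $\mathcal{W}\mathcal{P}\mathcal{H}\mathcal{R}\mathcal{S}^{\mathrm{rf}}$ discussion of the conclusion, so as to realise free reduction. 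The decisive feature is that \emph{parallel} replacement lets a single derivation step advance the entire frontier of an expanding van Kampen diagram, matching the exponential radial growth of the Cayley $2$-complex; no HR grammar can capture this by the linear-growth restriction of Theorem \ref{thm:lingrowth}.

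Showing $\mathrm{WP}_X(G) \notin \mathcal{M}\mathcal{C}\mathcal{F}$ is the main obstacle, being open even outside the PHRS framing. The strategy is to exploit semilinearity: every MCF language has a semilinear Parikh image, and $\mathcal{M}\mathcal{C}\mathcal{F}$ is closed under intersection with regular languages. I would therefore search for a regular language $R$ such that $\psi_X(\mathrm{WP}_X(G) \cap R)$ is demonstrably non-semilinear; the combinatorics of null-homotopic loops forced onto a prescribed pattern in the $8$-valent Cayley graph ought to yield non-linear letter-count relations coming from the hyperbolic tessellation. An alternative route is to transfer a non-MCF-ness result from a finitely generated subgroup or finite-index supergroup via Theorem \ref{thm:wpafl}, but no suitable witness is currently known. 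Developing the combinatorial group theory needed to make either route work is the heart of the conjecture, and is the step I expect to dominate the difficulty.

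Showing $\mathrm{WP}_X(G) \notin \mathcal{E}\mathcal{T}\mathcal{O}\mathcal{L}$ is expected to be the more tractable of the two lower bounds. The plan is to apply an ET0L pumping argument to a sequence of null-homotopic words tracing out boundary cycles of successively larger van Kampen diagrams, for instance the boundaries of $n$-layer octagonal balls in the Cayley $2$-complex. Because $G$ is hyperbolic with strictly exponential growth and is not virtually free, any ET0L pumping decomposition of these boundary words should force a periodicity incompatible with the geometry of the octagonal tessellation, yielding the contradiction. Alternatively, any progress on the conjecture of Ciobanu--Elder--Ferov that ET0L group languages arise only from virtually free groups would immediately discharge this step.
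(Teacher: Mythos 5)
This statement is labelled a \emph{conjecture} in the paper, and the paper supplies no proof of it: the author explicitly leaves it open and only derives a conditional corollary (``If Conjecture \ref{conj:wpdt} is true, then\dots''). So there is no argument in the paper to compare yours against, and the relevant question is whether your proposal actually closes the conjecture. It does not. All three components of your plan are programmatic rather than demonstrative. For the upper bound you describe, in outline, a grammar that advances the frontier of a van Kampen diagram in parallel, but you exhibit no concrete tables, no invariant relating derived hypergraphs to null-homotopic words, and no correctness proof that the generated string graph language is exactly \(\mathrm{WP}_X(G)\); the appeal to Theorem \ref{thm:lingrowth} is also misdirected, since that theorem constrains growth of hypergraph languages and says nothing about which \emph{string} languages HR grammars generate (MCF languages can have words of length growing linearly in derivation length yet still be combinatorially rich). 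For the two lower bounds you concede the essential point yourself: whether \(\mathrm{WP}_X(G)\) is MCF is open (the paper's cited non-MCF result of Gilman--Kropholler--Schleimer concerns hyperbolic \emph{three}-manifold groups, not surface groups), and whether any non-virtually-free hyperbolic group has ET0L word problem is likewise open (Ciobanu--Elder--Ferov). A ``search for a regular language \(R\) with non-semilinear \(\psi_X(\mathrm{WP}_X(G)\cap R)\)'' and an unspecified ``ET0L pumping argument'' are research directions, not proofs.

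In short: the genuine gap is that nothing is proved. Your text is a reasonable survey of plausible attack routes, and it correctly identifies where the difficulty lies, but each of the three claims (PHRS membership, non-MCF, non-ET0L) is asserted at the level of intention. Presenting this as a proof of the conjecture would be wrong; at best it is a justification for why the conjecture is stated as a conjecture.
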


\begin{corollary}
If Conjecture \ref{conj:wpdt} is true, then the word problem of any surface group is PHRS.
\end{corollary}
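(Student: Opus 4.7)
The plan is to argue by cases on the topological type of the closed surface $S$, reducing every case either to a virtually Abelian group (already MCF, hence PHRS, by Ho's theorem cited in the introduction) or to $\pi_1(S_2)$ itself (PHRS by Conjecture \ref{conj:wpdt}). The tool that glues the reductions together is Theorem \ref{thm:wpafl}: since $\mathcal{P}\mathcal{H}\mathcal{R}\mathcal{S}$ is a full AFL (Subsection \ref{subsec:closure}), a PHRS word problem passes both to every finitely generated subgroup and to every finite-index supergroup. Covering space theory translates each finite cover of surfaces into an index-equal inclusion of the corresponding surface groups, so the corollary reduces to exhibiting suitable finite covers.

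For orientable closed surfaces $S_g$: genus $0$ gives the trivial group and genus $1$ gives $\mathbb{Z}^2$, both virtually Abelian and so already handled. For $g \geq 2$, multiplicativity of Euler characteristic in a covering forces $\chi(S_g) = n \cdot \chi(S_2)$, i.e.\ $n = g-1$, and such a cover exists for every $g \geq 2$ (for instance because the abelianisation $\mathbb{Z}^4$ of $\pi_1(S_2)$ admits finite-index subgroups of every index, whose preimages give finite-index subgroups of $\pi_1(S_2)$ realising the required covers). This embeds $\pi_1(S_g)$ as a finitely generated index-$(g-1)$ subgroup of $\pi_1(S_2)$, and the subgroup half of Theorem \ref{thm:wpafl} finishes the orientable case.

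For non-orientable closed surfaces $N_k$: $N_1 = \mathbb{RP}^2$ has finite fundamental group $\mathbb{Z}/2\mathbb{Z}$ and $N_2$ is the Klein bottle, both virtually Abelian. For $k \geq 3$, the orientation double cover $S_{k-1} \to N_k$ realises $\pi_1(N_k)$ as a finite-index supergroup of $\pi_1(S_{k-1})$. Since $k-1 \geq 2$, the previous paragraph already gives a PHRS word problem for $\pi_1(S_{k-1})$, and the finite-index supergroup half of Theorem \ref{thm:wpafl} lifts this to $\pi_1(N_k)$.

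The main obstacle is bookkeeping rather than mathematics: one must check that the case split is exhaustive (the classification of closed surfaces, plus the observation that surfaces with boundary have free fundamental groups, which are already context-free and hence PHRS) and that Theorem \ref{thm:wpafl} is applied in the correct direction in each case --- \emph{subgroup} closure for the orientable higher-genus range, and \emph{finite-index supergroup} closure for the non-orientable higher-genus range. If a fully self-contained exposition is desired, the covering-space appeals can be replaced by explicit Reidemeister--Schreier computations on the standard polygonal presentations of surface groups, but the covering-space route above is cleaner and makes the only genuinely new input --- Conjecture \ref{conj:wpdt} applied to $\pi_1(S_2)$ --- completely transparent.
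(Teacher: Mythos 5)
Your argument for closed orientable surfaces is essentially the paper's own proof: the sphere and torus are dispatched as (virtually) Abelian, and for genus $g \geq 2$ a degree-$(g-1)$ cover realises $\pi_1(S_g)$ as a finitely generated subgroup of $\pi_1(S_2)$, so Conjecture \ref{conj:wpdt} plus the subgroup half of Theorem \ref{thm:wpafl} finishes. Note only that the paper defines ``surface'' to mean a \emph{closed, connected, orientable} $2$-manifold, so your non-orientable and with-boundary cases --- while correct (the orientation double cover $S_{k-1} \to N_k$ together with the finite-index supergroup half of Theorem \ref{thm:wpafl} does handle $N_k$ for $k \geq 3$) --- establish a strictly stronger statement than the corollary as stated.
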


\begin{proof}
By a \emph{surface} here, we mean a closed, connected, orientable, 2-manifold, and by a \emph{surface group}, we mean the fundamental group of a surface. Any surface always has a finite genus. The genus \(0\) surface (the sphere) gives us the trivial group, and \(1\) (the torus), \(\mathbb{Z}^2\) (see for example \cite{Massey77a}). We know both of these groups are regular, \(2\)-MCF \cite{Ho18a}, respectively, so certainly PHRS (Theorem \ref{thm:hrsmcf}).

For higher genuses, it follows from the Fundamental Theorem of Covering Spaces (Theorem 1.38 of \cite{Hatcher02a}) that the fundamental group appears as a finitely generated subgroup of the fundamental group of a genus \(2\) surface such as a double torus. Since \(\mathcal{P}\mathcal{H}\mathcal{R}\mathcal{S}\) is a full AFL, if the double torus has fundamental group with PHRS word problem, then all its finitely generated subgroups do too (Theorem \ref{thm:wpafl}).
\end{proof}

Highly related to the word problem is the consideration of sets of solutions of more general equations over groups or other structures. It is a recent result that solution sets (of fixed normal forms) of finite systems of equations in hyperbolic groups are EDT0L languages \cite{Ciobanu-Elder19a}. We are yet to consider deterministic parallel hyperedge replacement, but it may be possible to establish that other classes of groups have solution sets that are deterministic parallel hyperedge replacement string languages.

It remains future work to consider the effect of tables on generative power. It is a long-standing result that ET0L grammars with only one table have less generative power than those with two tables, and that allowing more than two tables does not increase generative power any further \cite{Rozenberg-Salomaa80a}. It is likely that a similar result holds for PHR and PHRS languages. Other more general future work would include investigating both the tree and graph generational power of PHR grammars, and investigating decidability and complexity results for basic problems relating to PHR grammars. We do not know if the emptiness or finiteness problems for PHR grammars are decidable, but we conjecture that they are.

\begin{conjecture}[Decidable PHR Emptiness]
The following problem is decidable:
  \begin{prob}
    \probinstance{A PHR grammar \(\mathcal{G} = (\mathcal{C}, A, S, \mathcal{T})\).}
    \probquestion{Is \(\mathrm{L}(\mathcal{G}) = \emptyset\)?}
  \end{prob}
\end{conjecture}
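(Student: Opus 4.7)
The plan is to adapt the productive-symbols fixpoint used to decide emptiness of context-free grammars, exploiting the fact that although parallel derivations force all hyperedges to be rewritten at each step using the same table, whether a derivation of $X^{\bullet}$ eventually produces a terminally labelled hypergraph depends only on the \emph{labels} appearing at each step, not on the structural glueing, because the sub-derivations stemming from distinct edges evolve independently up to the shared choice of table at each level.

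Concretely, for each finite table sequence $\tau = i_1 \cdots i_k \in I^*$, I would define
\[Q_\tau = \{X \in \Sigma \mid \exists H \in \mathcal{H}_{\mathcal{C}},\ X^{\bullet} \Rrightarrow_{\mathcal{T}}^{\tau} H \text{ and } \mathrm{lab}_H^{-1}(A) = E_H\}.\]
Then $\mathrm{L}(\mathcal{G}) \neq \emptyset$ if and only if $S \in Q_\tau$ for some $\tau$. The key lemma, which is the crux of the argument, is the recursion $Q_\epsilon = A$ and $Q_{j\tau} = \{X \in \Sigma \mid \exists (X,R) \in T_j,\ \mathrm{lab}_R(E_R) \subseteq Q_\tau\}$ for every $j \in I$.

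Assuming this recursion, the decision procedure is straightforward: the family $\mathcal{Q} = \{Q_\tau \mid \tau \in I^*\}$ is the closure of $\{A\}$ under the finitely many operators $Q \mapsto \{X \in \Sigma \mid \exists (X,R) \in T_j,\ \mathrm{lab}_R(E_R) \subseteq Q\}$, one per $j \in I$. Since $\mathcal{Q} \subseteq 2^\Sigma$ is finite, this closure is computable by naive fixpoint iteration, terminating in at most $2^{|\Sigma|}$ steps, after which one answers ``non-empty'' if and only if $S \in Q$ for some $Q \in \mathcal{Q}$.

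The main obstacle is proving the non-trivial direction of the recursion: given $(X,R) \in T_j$ with every label appearing in $R$ lying in $Q_\tau$, we must assemble a derivation $X^\bullet \Rrightarrow_{T_j} R \Rrightarrow_{\mathcal{T}}^{\tau} H$ with $H$ terminally labelled, by splicing together individual witness derivations $Y_e^{\bullet} \Rrightarrow_{\mathcal{T}}^{\tau} H_e$, one for each edge $e \in E_R$. The splicing is valid because parallel direct derivation only requires that all chosen rules at a single step come from the \emph{same} table, while permitting independent rule choices for different edges; so at step $\ell$ of the combined derivation we apply the table $T_{\tau(\ell)}$ common to all witnesses, using for each current edge the rule chosen by the corresponding sub-witness at its $\ell$-th step. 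The opposite direction is immediate by projecting any terminal derivation starting with the first step $X^\bullet \Rrightarrow_{T_j} R$ onto the sub-derivation induced by each edge of $R$, which is well-defined since terminal-labelledness is preserved under taking sub-hypergraphs on a subset of edges.
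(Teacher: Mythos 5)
The statement you are proving is left as a \emph{conjecture} in the paper: the author explicitly writes that the decidability of PHR emptiness is not known, and no proof is offered. So there is nothing in the paper to compare your argument against; instead I have checked it on its own merits, and it appears to be a correct and complete proof that settles the conjecture affirmatively. The decisive observation is the one you make at the outset: whether a derivable hypergraph is terminally labelled depends only on the multiset of hyperedge labels, and the evolution of labels under a parallel direct derivation is insensitive to the glueing structure (node merging caused by non-repetition-free right-hand sides changes attachments of neighbouring edges but never their labels). Given that, your recursion $Q_\epsilon = A$, $Q_{j\tau} = \{X \mid \exists (X,R)\in T_j,\ \mathrm{lab}_R(E_R)\subseteq Q_\tau\}$ is exactly the ``useful symbols relative to a table word'' construction from the standard decidability proof for ET0L emptiness, lifted to hypergraphs, and both directions of the recursion check out: the splicing direction works because a single parallel step permits independent rule choices per edge from a common table, and the projection direction works because edge ancestry is well defined and rule applicability depends only on labels. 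The closure of $\{A\}$ under the finitely many operators $f_j$ lives in $2^\Sigma$, so the fixpoint iteration terminates and the test $S\in Q$ for some $Q$ in the closure decides emptiness.

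One small point of rigour: your justification for the projection step --- ``well-defined since terminal-labelledness is preserved under taking sub-hypergraphs'' --- is not quite the right reason. What needs to be argued is that restricting the derivation $R \Rrightarrow_{\mathcal{T}}^{\tau} H$ to the descendants of a fixed $e \in E_R$ yields a \emph{valid} parallel derivation from $\mathrm{lab}_R(e)^{\bullet}$ with the same trace; this holds because each edge of every intermediate hypergraph descends from a unique edge of $R$, and the rule applied to it in the original derivation is equally applicable in the projected one (applicability depends only on the label and the table). The preservation of terminal-labelledness under restriction is then the trivial final step. With that repair the argument is sound, and it would be worth noting that the same label-only reduction does \emph{not} dispose of the paper's companion conjecture on finiteness, where the node structure genuinely matters.
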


\begin{conjecture}[Decidable PHR Finiteness]
The following problem is decidable:
  \begin{prob}
    \probinstance{A PHR grammar \(\mathcal{G} = (\mathcal{C}, A, S, \mathcal{T})\).}
    \probquestion{Does \(\mathrm{L}(\mathcal{G})\) contain only finitely many non-isomorphic hypergraphs?}
  \end{prob}
\end{conjecture}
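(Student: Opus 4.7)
The plan is to adapt the standard decidability proof for ET0L finiteness via a Parikh-image abstraction of hypergraphs. First I would preprocess $\mathcal{G}$ to remove unreachable symbols (Lemma~\ref{lem:unreachable}) and similarly remove \emph{unproductive} symbols, those $X \in \Sigma$ from which no terminally labelled hypergraph is derivable; the productivity computation proceeds by a least-fixed-point on sets of label-multisets admitting a common terminating table sequence, which would also settle the preceding emptiness conjecture. After this cleanup every reachable intermediate hypergraph admits a terminal completion.

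The key reduction is that $\mathrm{L}(\mathcal{G})$ is finite up to isomorphism if and only if its Parikh image set $\{\psi(H) \in \mathbb{N}^\Sigma : H \in \mathrm{L}(\mathcal{G})\}$ is finite, where $\psi(H)$ counts hyperedges by label. The forward direction is immediate; the reverse uses that every rule RHS has bounded size, so node count in any derivable hypergraph is linear in edge count, and over a finite signature only finitely many non-isomorphic hypergraphs share a given Parikh image. Under parallel application of a table $T$, the successor Parikh vectors of $m$ are exactly those of the form $\sum_X \sum_{i=1}^{m_X} \psi(R_{X,i})$ with $(X, R_{X,i}) \in T$, making the abstracted dynamics a multi-type branching process on $\mathbb{N}^\Sigma$. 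By Dickson's Lemma, any sufficiently long derivation must contain a pair $H_1 \Rrightarrow^+ H_2$ with $\psi(H_1) \leq \psi(H_2)$ coordinatewise and $\psi(H_1) \neq \psi(H_2)$. Reapplying the witnessing table sequence starting from $H_2$ and extending via productivity then yields strictly larger terminal Parikh images indefinitely, witnessing that the language is infinite. Conversely, if no such growing pair occurs within derivations of length bounded by a computable Dickson-type constant, the language is finite, yielding a decision procedure.

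The main obstacle will be executing the pumping step cleanly in the parallel setting. Unlike in sequential HR, one cannot pump a sub-hypergraph in isolation because every hyperedge is rewritten simultaneously; the reapplication of the $H_1 \Rrightarrow^+ H_2$ table sequence acts on all of $H_2$, not just on an embedded copy of $H_1$. I would handle this by showing, using productivity, that the ``excess'' hyperedges accounted for by $\psi(H_2) - \psi(H_1)$ can be co-evolved under the same table sequence and then independently completed to terminal labellings via further parallel steps, possibly after padding the shared table sequence with productive completion segments. A secondary technical point is calibrating the Dickson-type bound on derivation length so that the search is effective; this follows from standard well-quasi-order arguments on $\mathbb{N}^\Sigma$ combined with the observation that each parallel step multiplies edge count by at most the maximum RHS size of any rule in $\mathcal{T}$.
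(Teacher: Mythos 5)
The paper gives no proof of this statement: it is presented as an open conjecture (the author writes that decidability of emptiness and finiteness for PHR grammars is not known), so your attempt can only be judged on its own terms. On those terms there are two genuine gaps. The first is the claimed equivalence between finiteness of $\mathrm{L}(\mathcal{G})$ up to isomorphism and finiteness of the hyperedge-label Parikh image. Your justification --- that ``node count in any derivable hypergraph is linear in edge count'' --- is false: nodes accumulate over the length of the derivation, not merely in proportion to the surviving hyperedges. Take $\Sigma = \{a\}$ with $a$ terminal of type $0$ and the single table $\{(a,R)\}$, where $R$ consists of one $a$-labelled hyperedge together with one isolated node. Every derived hypergraph is terminally labelled and has exactly one hyperedge, so the Parikh image is the single vector $(1)$, yet for every $n$ the language contains a hypergraph with $n$ isolated nodes and is therefore infinite up to isomorphism. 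Repairing this by adjoining node count as an extra coordinate is not routine, because rules that are not repetition-free can merge nodes, so node count is not an additive function of the multiset of rules applied and the clean branching-process abstraction is lost.

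The second gap is the pumping step, which you rightly flag as the main obstacle but do not resolve. Even granting a pair $H_1 \Rrightarrow^{+} H_2$ with $\psi(H_1) \leq \psi(H_2)$ and $\psi(H_1) \neq \psi(H_2)$, and granting that the Parikh vectors of \emph{derivable} hypergraphs can be driven upward without bound, this does not yield infinitely many members of $\mathrm{L}(\mathcal{G})$: membership requires a terminal labelling, and the completing parallel steps rewrite \emph{every} hyperedge, including terminal-labelled ones, so a completion can collapse arbitrarily large intermediate hypergraphs onto a single terminal one (for instance, a table containing $(X,\emptyset)$ for every $X$ sends everything to the empty hypergraph in one step). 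Productivity in the sense you compute --- every symbol derives something terminal --- does not exclude this; you would need to control the size of the terminal completions themselves, which is precisely where the difficulty of the conjecture lies. A further, smaller issue is that your productivity computation is phrased as a fixed point over label multisets, i.e.\ over $\mathbb{N}^{\Sigma}$, which is infinite; to be effective it must be recast over subsets of $\Sigma$, together with an argument that completability under a common table sequence depends only on the set of occurring labels and not on their multiplicities. Until these points are addressed, neither direction of the proposed decision procedure is established.
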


\begin{figure}[!ht]
\centering
\scalebox{0.9}{
\begin{tabular}{|l|c|c|c|c|}
\hline
Operation/Class & \multicolumn{1}{>{\centering\arraybackslash}p{2cm}|}{$\mathcal{H}\mathcal{R}\mathcal{S}_k^{\mathrm{rf}}$ $= \mathcal{H}\mathcal{R}\mathcal{S}_k$} & \multicolumn{1}{>{\centering\arraybackslash}p{2cm}|}{$\mathcal{P}\mathcal{H}\mathcal{R}\mathcal{S}_k^{\mathrm{rf}}$} & \multicolumn{1}{>{\centering\arraybackslash}p{2cm}|}{$\mathcal{W}\mathcal{P}\mathcal{H}\mathcal{R}\mathcal{S}_k^{\mathrm{rf}}$} & \multicolumn{1}{>{\centering\arraybackslash}p{2cm}|}{$\mathcal{P}\mathcal{H}\mathcal{R}\mathcal{S}_k$} \\ \hline
Rational Operations & \ding{51} & \ding{51} & \ding{51} & \ding{51} \\ \hline
Rational Intersection & \ding{51} & \ding{51} & \ding{51} & \ding{51} \\ \hline
Inverse Homomorphisms & \ding{51} & ? & \ding{51} & \ding{51} \\ \hline
Non-Erasing Homomorphisms & \ding{51} & \ding{51} & \ding{51} & \ding{51} \\ \hline
Arbitrary Homomorphisms & \ding{51} & ? & \ding{51} & \ding{51} \\ \hline
Non-Erasing Substitutions & \ding{51} & \ding{51} & \ding{51} & \ding{51} \\ \hline
Arbitrary Substitutions & \ding{51} & ? & \ding{51} & \ding{51} \\ \hline
Iterated Nested Non-Erasing Substitutions & \ding{51} & \ding{51} & \ding{51} & \ding{51} \\ \hline
Iterated Nested Arbitrary Substitutions & \ding{51} & ? & \ding{51} & \ding{51} \\ \hline
Iterated Non-Erasing Substitutions & \ding{55} & \ding{51} & \ding{51} & \ding{51} \\ \hline
Iterated Arbitrary Substitutions & \ding{55} & ? & \ding{51} & \ding{51} \\ \hline
\end{tabular}
}
\caption{Summary of formal language closure properties ($k \geq 2$)}
\label{fig:closuretable}
\end{figure}

\begin{figure}[!ht]
\begin{subfigure}{.499\textwidth}
\centering
\scalebox{0.45}{
\begin{tikzpicture}
  \node[align=center] (a) at (-8.5,1) {\,};
  \node[align=center] (a) at (8.5,1) {\,};

  \node[align=center] (d) at (-5.5,3.666667) {$\mathcal{M}\mathcal{C}\mathcal{F}_2 = \mathcal{H}\mathcal{R}\mathcal{S}_4^{\mathrm{rf}} = \mathcal{H}\mathcal{R}\mathcal{S}_5^{\mathrm{rf}}$\\$= \mathcal{H}\mathcal{R}\mathcal{S}_4 = \mathcal{H}\mathcal{R}\mathcal{S}_5$};
  \node[align=center] (h) at (-5.5,5.666667) {$\mathcal{M}\mathcal{C}\mathcal{F}_k = \mathcal{H}\mathcal{R}\mathcal{S}_{2k}^{\mathrm{rf}} = \mathcal{H}\mathcal{R}\mathcal{S}_{2k+1}^{\mathrm{rf}}$\\$= \mathcal{H}\mathcal{R}\mathcal{S}_{2k} = \mathcal{H}\mathcal{R}\mathcal{S}_{2k+1}$};
  \node[align=center] (l) at (-5.5,7.666667) {$\mathcal{M}\mathcal{C}\mathcal{F}_{k+1} = \mathcal{H}\mathcal{R}\mathcal{S}_{2k+2}^{\mathrm{rf}} = \mathcal{H}\mathcal{R}\mathcal{S}_{2k+3}^{\mathrm{rf}}$\\$= \mathcal{H}\mathcal{R}\mathcal{S}_{2k+2} = \mathcal{H}\mathcal{R}\mathcal{S}_{2k+3}$};
  \node[align=center] (r) at (-5.5,9.666667) {$\mathcal{M}\mathcal{C}\mathcal{F} = \mathcal{H}\mathcal{R}\mathcal{S}^{\mathrm{rf}} = \mathcal{H}\mathcal{R}\mathcal{S}$};

  \node[align=center] (c) at (0.0,0.833333) {$\mathcal{C}\mathcal{F} = \mathcal{M}\mathcal{C}\mathcal{F}_1 = \mathcal{H}\mathcal{R}\mathcal{S}_2^{\mathrm{rf}} = \mathcal{H}\mathcal{R}\mathcal{S}_3^{\mathrm{rf}}$\\$= \mathcal{H}\mathcal{R}\mathcal{S}_2 = \mathcal{H}\mathcal{R}\mathcal{S}_3$};
  \node[align=center] (e) at (0.0,2.5) {$\mathcal{E}\mathcal{T}\mathcal{O}\mathcal{L} = \mathcal{P}\mathcal{H}\mathcal{R}\mathcal{S}_2^{\mathrm{rf}}$\\$= \mathcal{W}\mathcal{P}\mathcal{H}\mathcal{R}\mathcal{S}_2^{\mathrm{rf}} = \mathcal{P}\mathcal{H}\mathcal{R}\mathcal{S}_2$};
  \node[align=center] (f) at (0.0,4.0) {$\mathcal{P}\mathcal{H}\mathcal{R}\mathcal{S}_{3}^{\mathrm{rf}}$};
  \node[align=center] (i) at (0.0,5.0) {$\mathcal{P}\mathcal{H}\mathcal{R}\mathcal{S}_{4}^{\mathrm{rf}}$};
  \node[align=center] (j) at (0.0,6.0) {$\mathcal{P}\mathcal{H}\mathcal{R}\mathcal{S}_{5}^{\mathrm{rf}}$};
  \node[align=center] (m) at (0.0,7.0) {$\mathcal{P}\mathcal{H}\mathcal{R}\mathcal{S}_{2k}^{\mathrm{rf}}$};
  \node[align=center] (o) at (0.0,8.0) {$\mathcal{P}\mathcal{H}\mathcal{R}\mathcal{S}_{2k+1}^{\mathrm{rf}}$};
  \node[align=center] (y) at (0.0,9.0) {$\mathcal{P}\mathcal{H}\mathcal{R}\mathcal{S}_{2k+2}^{\mathrm{rf}}$};
  \node[align=center] (yy) at (0.0,10.0) {$\mathcal{P}\mathcal{H}\mathcal{R}\mathcal{S}_{2k+3}^{\mathrm{rf}}$};
  \node[align=center] (s) at (0.0,11.0) {$\mathcal{P}\mathcal{H}\mathcal{R}\mathcal{S}^{\mathrm{rf}}$};
  \node[align=center] (v) at (0.0,13.0) {$\mathcal{C}\mathcal{S}$};
  \node[align=center] (w) at (0.0,14.0) {$\mathcal{R}\mathcal{E}\mathcal{C}$};

  \node[align=center] (g) at (3.5,4.5) {$\mathcal{W}\mathcal{P}\mathcal{H}\mathcal{R}\mathcal{S}_{3}^{\mathrm{rf}}$};
  \node[align=center] (k) at (3.5,5.5) {$\mathcal{W}\mathcal{P}\mathcal{H}\mathcal{R}\mathcal{S}_{4}^{\mathrm{rf}}$};
  \node[align=center] (n) at (3.5,6.5) {$\mathcal{W}\mathcal{P}\mathcal{H}\mathcal{R}\mathcal{S}_{5}^{\mathrm{rf}}$};
  \node[align=center] (p) at (3.5,7.5) {$\mathcal{W}\mathcal{P}\mathcal{H}\mathcal{R}\mathcal{S}_{2k}^{\mathrm{rf}}$};
  \node[align=center] (q) at (3.5,8.5) {$\mathcal{W}\mathcal{P}\mathcal{H}\mathcal{R}\mathcal{S}_{2k+1}^{\mathrm{rf}}$};
  \node[align=center] (x) at (3.5,9.5) {$\mathcal{W}\mathcal{P}\mathcal{H}\mathcal{R}\mathcal{S}_{2k+2}^{\mathrm{rf}}$};
  \node[align=center] (xx) at (3.5,10.5) {$\mathcal{W}\mathcal{P}\mathcal{H}\mathcal{R}\mathcal{S}_{2k+3}^{\mathrm{rf}}$};
  \node[align=center] (u) at (3.5,11.5) {$\mathcal{W}\mathcal{P}\mathcal{H}\mathcal{R}\mathcal{S}^{\mathrm{rf}}$};

  \node[align=center] (yg) at (7.0,5.0) {$\mathcal{P}\mathcal{H}\mathcal{R}\mathcal{S}_{3}$};
  \node[align=center] (yk) at (7.0,6.0) {$\mathcal{P}\mathcal{H}\mathcal{R}\mathcal{S}_{4}$};
  \node[align=center] (yn) at (7.0,7.0) {$\mathcal{P}\mathcal{H}\mathcal{R}\mathcal{S}_{5}$};
  \node[align=center] (yp) at (7.0,8.0) {$\mathcal{P}\mathcal{H}\mathcal{R}\mathcal{S}_{2k}$};
  \node[align=center] (yq) at (7.0,9.0) {$\mathcal{P}\mathcal{H}\mathcal{R}\mathcal{S}_{2k+1}$};
  \node[align=center] (yx) at (7.0,10.0) {$\mathcal{P}\mathcal{H}\mathcal{R}\mathcal{S}_{2k+2}$};
  \node[align=center] (yxx) at (7.0,11.0) {$\mathcal{P}\mathcal{H}\mathcal{R}\mathcal{S}_{2k+3}$};
  \node[align=center] (yu) at (7.0,12.0) {$\mathcal{P}\mathcal{H}\mathcal{R}\mathcal{S}$};

  \draw (c) -- (d);
  \draw (c) -- (e);
  \draw (d) -- (h);
  \draw (h) -- (l);
  \draw (l) -- (r);
  
  \draw (d) -- (i);
  \draw (h) -- (m);
  \draw (l) -- (y);
  \draw (r) -- (s);

  \draw (r) -- (v);
  \draw (v) -- (w);

  \draw (e) to [bend left=60] (i);

  \draw[dashed] (e) -- (f);
  \draw[dashed] (f) -- (i);
  \draw[dashed] (i) -- (j);
  \draw[dashed] (j) -- (m);
  \draw[dashed] (m) -- (o);
  \draw[dashed] (o) -- (y);
  \draw[dashed] (y) -- (yy);
  \draw[dashed] (yy) -- (s);

  \draw[dashed] (f) -- (g);
  \draw[dashed] (i) -- (k);
  \draw[dashed] (j) -- (n);
  \draw[dashed] (m) -- (p);
  \draw[dashed] (o) -- (q);
  \draw[dashed] (s) -- (u);
  \draw[dashed] (y) -- (x);
  \draw[dashed] (yy) -- (xx);

  \draw[dashed] (g) -- (yg);
  \draw[dashed] (k) -- (yk);
  \draw[dashed] (n) -- (yn);
  \draw[dashed] (p) -- (yp);
  \draw[dashed] (q) -- (yq);
  \draw[dashed] (u) -- (yu);
  \draw[dashed] (x) -- (yx);
  \draw[dashed] (xx) -- (yxx);

  \draw[dashed] (g) -- (k);
  \draw[dashed] (k) -- (n);
  \draw[dashed] (n) -- (p);
  \draw[dashed] (p) -- (q);
  \draw[dashed] (q) -- (x);
  \draw[dashed] (x) -- (xx);
  \draw[dashed] (xx) -- (u);

  \draw[dashed] (yg) -- (yk);
  \draw[dashed] (yk) -- (yn);
  \draw[dashed] (yn) -- (yp);
  \draw[dashed] (yp) -- (yq);
  \draw[dashed] (yq) -- (yx);
  \draw[dashed] (yx) -- (yxx);
  \draw[dashed] (yxx) -- (yu);
  \draw (yu) -- (w);
\end{tikzpicture}
}
\caption{Proved string language hierarchy}
\end{subfigure}
\begin{subfigure}{.499\textwidth}
\centering
\scalebox{0.45}{
\begin{tikzpicture}
  \node[align=center] (a) at (-8.5,1) {\,};
  \node[align=center] (a) at (8.5,1) {\,};

  \node[align=center] (d) at (-5.5,3.666667) {$\mathcal{M}\mathcal{C}\mathcal{F}_2 = \mathcal{H}\mathcal{R}\mathcal{S}_4^{\mathrm{rf}} = \mathcal{H}\mathcal{R}\mathcal{S}_5^{\mathrm{rf}}$\\$= \mathcal{H}\mathcal{R}\mathcal{S}_4 = \mathcal{H}\mathcal{R}\mathcal{S}_5$};
  \node[align=center] (h) at (-5.5,5.666667) {$\mathcal{M}\mathcal{C}\mathcal{F}_k = \mathcal{H}\mathcal{R}\mathcal{S}_{2k}^{\mathrm{rf}} = \mathcal{H}\mathcal{R}\mathcal{S}_{2k+1}^{\mathrm{rf}}$\\$= \mathcal{H}\mathcal{R}\mathcal{S}_{2k} = \mathcal{H}\mathcal{R}\mathcal{S}_{2k+1}$};
  \node[align=center] (l) at (-5.5,7.666667) {$\mathcal{M}\mathcal{C}\mathcal{F}_{k+1} = \mathcal{H}\mathcal{R}\mathcal{S}_{2k+2}^{\mathrm{rf}} = \mathcal{H}\mathcal{R}\mathcal{S}_{2k+3}^{\mathrm{rf}}$\\$= \mathcal{H}\mathcal{R}\mathcal{S}_{2k+2} = \mathcal{H}\mathcal{R}\mathcal{S}_{2k+3}$};
  \node[align=center] (r) at (-5.5,9.666667) {$\mathcal{M}\mathcal{C}\mathcal{F} = \mathcal{H}\mathcal{R}\mathcal{S}^{\mathrm{rf}} = \mathcal{H}\mathcal{R}\mathcal{S}$};

  \node[align=center] (c) at (0.0,1.0) {$\mathcal{C}\mathcal{F} = \mathcal{M}\mathcal{C}\mathcal{F}_1 = \mathcal{H}\mathcal{R}\mathcal{S}_2^{\mathrm{rf}} = \mathcal{H}\mathcal{R}\mathcal{S}_3^{\mathrm{rf}}$\\$= \mathcal{H}\mathcal{R}\mathcal{S}_2 = \mathcal{H}\mathcal{R}\mathcal{S}_3$};
  \node[align=center] (v) at (0.0,13.0) {$\mathcal{C}\mathcal{S}$};
  \node[align=center] (w) at (0.0,14.0) {$\mathcal{R}\mathcal{E}\mathcal{C}$};

  \node[align=center] (e) at (3.5,3.5) {$\mathcal{E}\mathcal{T}\mathcal{O}\mathcal{L} = \mathcal{P}\mathcal{H}\mathcal{R}\mathcal{S}_2^{\mathrm{rf}} = \mathcal{P}\mathcal{H}\mathcal{R}\mathcal{S}_3^{\mathrm{rf}} = \mathcal{W}\mathcal{P}\mathcal{H}\mathcal{R}\mathcal{S}_2^{\mathrm{rf}}$\\$= \mathcal{W}\mathcal{P}\mathcal{H}\mathcal{R}\mathcal{S}_3^{\mathrm{rf}} = \mathcal{P}\mathcal{H}\mathcal{R}\mathcal{S}_2 = \mathcal{P}\mathcal{H}\mathcal{R}\mathcal{S}_3$};
  \node[align=center] (i) at (3.5,5.5) {$\mathcal{P}\mathcal{H}\mathcal{R}\mathcal{S}_4^{\mathrm{rf}} = \mathcal{P}\mathcal{H}\mathcal{R}\mathcal{S}_5^{\mathrm{rf}} = \mathcal{W}\mathcal{P}\mathcal{H}\mathcal{R}\mathcal{S}_4^{\mathrm{rf}}$\\$= \mathcal{W}\mathcal{P}\mathcal{H}\mathcal{R}\mathcal{S}_5^{\mathrm{rf}} = \mathcal{P}\mathcal{H}\mathcal{R}\mathcal{S}_4 = \mathcal{P}\mathcal{H}\mathcal{R}\mathcal{S}_5$};
  \node[align=center] (m) at (3.5,7.5) {$\mathcal{P}\mathcal{H}\mathcal{R}\mathcal{S}_{2k}^{\mathrm{rf}} = \mathcal{P}\mathcal{H}\mathcal{R}\mathcal{S}_{2k+1}^{\mathrm{rf}} = \mathcal{W}\mathcal{P}\mathcal{H}\mathcal{R}\mathcal{S}_{2k}^{\mathrm{rf}}$\\$= \mathcal{W}\mathcal{P}\mathcal{H}\mathcal{R}\mathcal{S}_{2k+1}^{\mathrm{rf}} = \mathcal{P}\mathcal{H}\mathcal{R}\mathcal{S}_{2k} = \mathcal{P}\mathcal{H}\mathcal{R}\mathcal{S}_{2k+1}$};
  \node[align=center] (y) at (3.5,9.5) {$\mathcal{P}\mathcal{H}\mathcal{R}\mathcal{S}_{2k+2}^{\mathrm{rf}} = \mathcal{P}\mathcal{H}\mathcal{R}\mathcal{S}_{2k+3}^{\mathrm{rf}} = \mathcal{W}\mathcal{P}\mathcal{H}\mathcal{R}\mathcal{S}_{2k+2}^{\mathrm{rf}}$\\$= \mathcal{W}\mathcal{P}\mathcal{H}\mathcal{R}\mathcal{S}_{2k+3}^{\mathrm{rf}} = \mathcal{P}\mathcal{H}\mathcal{R}\mathcal{S}_{2k+2} = \mathcal{P}\mathcal{H}\mathcal{R}\mathcal{S}_{2k+3}$};
  \node[align=center] (s) at (3.5,11.5) {$\mathcal{P}\mathcal{H}\mathcal{R}\mathcal{S}^{\mathrm{rf}} = \mathcal{W}\mathcal{P}\mathcal{H}\mathcal{R}\mathcal{S}^{\mathrm{rf}} = \mathcal{P}\mathcal{H}\mathcal{R}\mathcal{S}$};

  \draw (c) -- (d);
  \draw (c) -- (e);
  \draw (d) -- (h);
  \draw (h) -- (l);
  \draw (l) -- (r);
  
  \draw (d) -- (i);
  \draw (h) -- (m);
  \draw (l) -- (y);
  \draw (r) -- (s);

  \draw (v) -- (w);

  \draw (e) -- (i);
  \draw (i) -- (m);
  \draw (m) -- (y);
  \draw (y) -- (s);
  \draw (s) -- (v);
\end{tikzpicture}
}
\caption{Conjectured string language hierarchy}
\end{subfigure}
\caption{Detailed formal language hierarchies ($k \geq 3$)}
\label{fig:hierarchies}
\vspace{0.5em}
\end{figure}
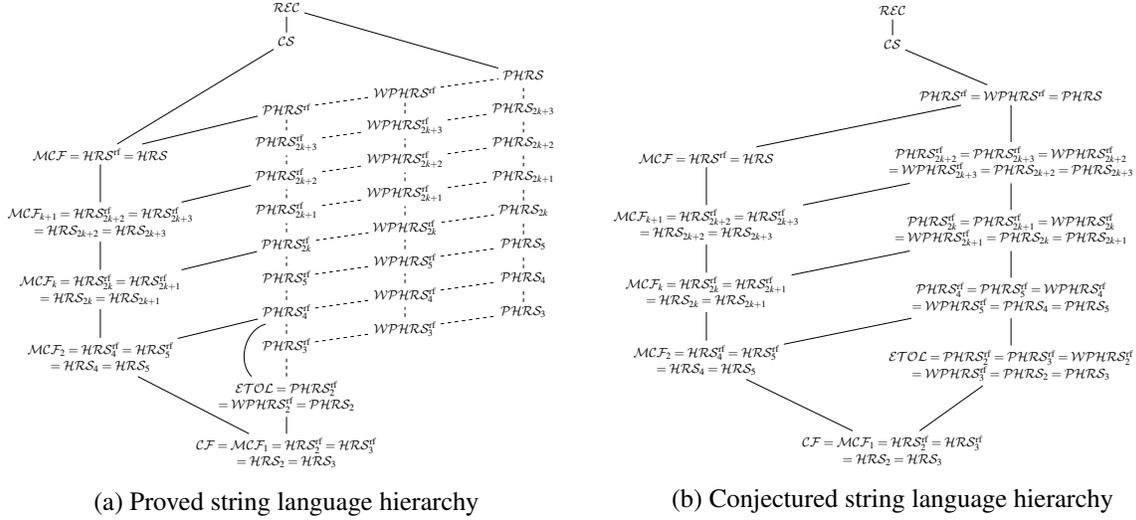

\noindent\textbf{Acknowledgements.} I should like to thank Detlef Plump for introducing me to graph transformation and teaching me to write papers for this audience, my supervisors Sarah Rees and Andrew Duncan for their guidance, Annegret Habel and Meng-Che Ho for their helpful email discussions regarding hyperedge replacement and surface groups, respectively, and Murray Elder for introducing me to MCF languages. I am also grateful to the anonymous reviewers for their comments, leading to a much-improved paper.

\bibliographystyle{eptcs}
\bibliography{ms}

\end{document}